\newcommand{\qw}[1][-1]{\ar @{-} [0,#1]}
\newcommand{\qwx}[1][-1]{\ar @{-} [#1,0]}
\newcommand{\cw}[1][-1]{\ar @{=} [0,#1]}
\newcommand{\gate}[1]{*+<.6em>{#1} \POS ="i","i"+UR;"i"+UL **\dir{-};"i"+DL **\dir{-};"i"+DR **\dir{-};"i"+UR **\dir{-},"i" \qw}
\newcommand{\meter}{*=<1.8em,1.4em>{\xy ="j","j"-<.778em,.322em>;{"j"+<.778em,-.322em> \ellipse ur,_{}},"j"-<0em,.4em>;p+<.5em,.9em> **\dir{-},"j"+<2.2em,2.2em>*{},"j"-<2.2em,2.2em>*{} \endxy} \POS ="i","i"+UR;"i"+UL **\dir{-};"i"+DL **\dir{-};"i"+DR **\dir{-};"i"+UR **\dir{-},"i" \qw}
\newcommand{\control}{*!<0em,.025em>-=-<.2em>{\bullet}}
\newcommand{\ctrl}[1]{\control \qwx[#1] \qw}
\newcommand{\lstick}[1]{*!R!<.5em,0em>=<0em>{#1}}
\newcommand{\Qcircuit}{\xymatrix @*=<0em>}
\renewcommand{\thefootnote}{\fnsymbol{footnote}}
\renewcommand{\arraystretch}{1.5}
\newtheorem{theorem}{Theorem}
\newtheorem{cla}{Claim}
\newtheorem{definition}{Definition}
\newtheorem{lemma}{Lemma}
\newtheorem{proposition}{Proposition}
\newcommand{\comment}[1]{}
\newcommand{\idty}[1]{\mathbb{1}}
\newcommand{\ovsqrt}[1]{\frac{1}{\sqrt{2}}}
\newcommand{\Ord}[1]{\mathcal{O}\left(#1\right)}
\renewcommand{\poly}[1]{\ensuremath\operatorname{poly}\left( #1 \right)}
\title{Quantum linear systems algorithms: a primer}
\begin{document}

\author[1]{Danial Dervovic}
\author[1]{Mark Herbster}
\author[1,2]{Peter Mountney}
\author[1]{Simone Severini}
\author[1]{Na\"iri Usher}
\author[1]{Leonard Wossnig}

\affil[1]{Department of Computer Science, University College London, London, UK}

\affil[2]{Siemens Healthineers, Medical Imaging Technologies, Princeton, NJ, USA}

\date{}
\maketitle
\renewcommand{\thefootnote}{\arabic{footnote}}

\begin{abstract}
The Harrow-Hassidim-Lloyd (HHL) quantum algorithm for sampling from the solution of a linear system provides an exponential speed-up over its classical counterpart. The problem of solving a system of linear equations has a wide scope of applications, and thus HHL constitutes an important algorithmic primitive. In these notes, we present the HHL algorithm and its improved versions in detail, including explanations of the constituent subroutines. More specifically, we discuss various quantum subroutines such as quantum phase estimation and amplitude amplification, as well as the important question of loading data into a quantum computer, via quantum RAM. The improvements to the original algorithm exploit variable-time amplitude amplification as well as a method for implementing linear combinations of unitary operations (LCUs) based on a decomposition of the operators using Fourier and Chebyshev series. Finally, we discuss a linear solver based on the quantum singular value estimation (QSVE) subroutine.
\\
\end{abstract}

\hypersetup{linkcolor=blue}

\tableofcontents


\section{Introduction} \label{sec:intro}

\subsection{Motivation} \label{subsec:mot}

Quantum computing was introduced in the $1980$s as a novel paradigm of computation, whereby information is encoded within a quantum system, as opposed to a system governed by the laws of classical physics.
Wider interest in quantum computation has been motivated by Shor's quantum algorithm for integer factorisation~\cite{shor1999polynomial}, which provides an exponential speed-up over the best known classical algorithm for the same task.
If implemented at scale, this would have severe security consequences for the ubiquitous RSA cryptographic protocol.
Since then, a number of quantum algorithms demonstrating advantage over classical methods have been developed for a substantial variety of tasks; for a detailed survey, the reader is directed to ~\cite{cleve1998quantum, montanaro2015quantum}.

Consistent advances on both theoretical and experimental research fronts have meant that the reality of a quantum computer has been edging ever closer.
Quantum systems are extremely sensitive to noise, with a primary challenge being the development of error correction in order to achieve fault tolerance~\cite{Campbell2017}.
Nonetheless, the current advances observed over recent years in the quest to build a scalable universal quantum computer from both academic and industrial groups raise the question of applications of such a device.

Although powerful quantum algorithms have been devised, their application is restricted to a few use cases. Indeed, the design of a quantum algorithm directly relies on exploiting the laws and features of quantum mechanics in order to achieve a speed-up. More precisely, in quantum computing, a quantum state is first prepared, to which quantum operations are applied before final measurements are performed, thus yielding classical data. As discussed in~\cite{aaronson2015read}, this model raises a number of challenges. In particular, careful consideration of how classical data can be input and obtained as output is crucial to maintaining the theoretical advantage afforded by quantum algorithms.

The question of solving a system of linear equations can be found at the heart of many problems with a wide scope of applications. An important result in recent years has been the Quantum Linear System algorithm (QLSA)~\cite{harrow2009quantum}, also called Harrow-Hassidim-Lloyd (HHL) algorithm, which considers the quantum version of this problem. In particular, the HHL algorithm run on the quantum problem (that is, with quantum states as input and output) offers an exponential speed-up over the best known classical algorithm run on the classical problem. In the following, we present a complete review of the HHL algorithm and subsequent improvements for sampling from the solutions to linear systems. We have aimed to be as complete as possible, including relevant background where necessary. We assume knowledge of elementary linear algebra and some experience with analysis of classical algorithms. 

\subsection{Quantum linear systems algorithms}

Solving a linear system is the task of taking a given matrix $A$ and vector $\vb{b}$ and returning a vector $\vb{x}$ satisfying $A\vb{x} = \vb{b}$.
As a preview of what is coming up ahead, Table~\ref{tab:LSAQLSAcompare} shows the runtime of the best classical algorithm for solving linear systems, conjugate gradient (CG), compared with the quantum algorithms we shall introduce throughout these notes.

We note that CG solves a linear system completely, i.e. it returns the solution vector $\vb{x}$.
The quantum algorithms allow one to \emph{sample} from the solution efficiently, providing one has an efficient preparation method for the input state, i.e. a mapping from the vector $\vb{b}$ to a quantum state $\ket*{b}$.
  
\begin{table}[h!]
	\centering
	\begingroup
	\setlength{\tabcolsep}{8pt} 
	\renewcommand{\arraystretch}{1.25} 
	\begin{tabular}{|c|c|c|}
		\hline
		\textbf{Problem} & \textbf{Algorithm} & \textbf{Runtime Complexity} \\
		\hline\rule{0pt}{2ex}
		LSP & CG~\cite{Shewchuck1994} & $\Ord{N s \kappa \log(1/\epsilon)}$ \\
		QLSP & HHL~\cite{harrow2009quantum} & $\Ord{\log(N) s^2 \kappa^2 / \epsilon}$ \\
		QLSP & VTAA-HHL~\cite{ambainis2010variable} & $\Ord{\log(N) s^2 \kappa / \epsilon}$ \\
		QLSP & Childs \emph{et. al.}~\cite{childs2015quantum} & $\Ord{s \kappa \operatorname{polylog}(s \kappa / \epsilon)}$ \\
		QLSP & QLSA~\cite{wossnig2017quantum} & $\Ord{\kappa^2 \operatorname{polylog}(n) \norm{A}_F / \epsilon}$ \\
		\hline
	\end{tabular}
	\\[0.05cm]
	\caption{Runtime comparison between various quantum linear systems algorithms and the best general-purpose classical algorithm, conjugate gradient (CG).
	The parameter $N$ is the dimension of the system, $s$ is the sparsity of the matrix $A$, $\kappa$ is the condition number of $A$, $\epsilon$ is the desired precision and $\norm{\,\cdot\,}_F$ is the Frobenius norm. 
	}
	\label{tab:LSAQLSAcompare}
	\endgroup
\end{table}

We shall formally define the linear systems problem (LSP) and its quantum variant (QLSP) in section~\ref{sec:LinSysdefinitions} and will discuss the differences between the two.
We will discuss efficient state preparation in section~\ref{subsec:qram}.

\subsection{Quantum computing} \label{subsec:qc}
In this section, we introduce \emph{gate-model quantum computation}, the computational model which will be used throughout. For a complete introduction to quantum computing we refer the reader to Nielsen and Chuang~\cite{nielsen2002quantum}.

In classical computing, the input is a classical bit string which, through the application of a circuit, is transformed to an output bit string. This is achieved via the application of a finite number of classical gates picked from a universal gate set such as $\{$NAND$\}$. This framework is known as the classical circuit model of computation. In the case of quantum computing, there exist different yet equivalent frameworks in which the computation can be described: the quantum circuit model~\cite{nielsen2002quantum}, measurement-based quantum computing (MBQC)~\cite{raussendorf2003measurement} and adiabatic quantum computing~\cite{farhi2000quantum}. In the following, we shall concentrate on the quantum circuit model, which is the closest quantum analogue of the classical circuit model as well as the model in which quantum algorithms are generally presented.

In classical computing the fundamental unit of information is the bit, which is either $0$ or $1$, whereas in quantum computing, it is the qubit, $|\psi \rangle= \alpha_0 |0\rangle + \alpha_1 |1\rangle$, such that $\alpha_0, \alpha_1 \in \mathbb{C}$ and $|\alpha_0|^2 + |\alpha_1^2| = 1$. This is represented by a two-dimensional column vector belonging to a complex Hilbert space $\mathcal{H} \cong \mathbb{C}^2$.
We shall denote by $\bra*{\psi}$ the conjugate-transpose of $\ket*{\psi}$.
The states $|0\rangle $ and $|1\rangle $ are basis vectors corresponding to a bit value of `0' and `1' respectively, and we can write  $\ket*{0} \cong \smqty(1 \\ 0)$ and $\ket*{1} \cong \smqty(0 \\ 1)$. Thus, we have that a qubit is a normalised complex superposition over these basis vectors.
Multiple qubits are combined using the tensor product, that is, for two qubits $\ket*{\phi},\ket*{\psi}$, their joint state is given by $\ket*{\phi} \otimes \ket*{\psi}$.
Thus, an $n$-qubit quantum state can be expressed as
\begin{equation}
|\psi \rangle = \sum_{i_1, \ldots, i_n} \alpha_{i_1 \ldots i_n} |i_1 \ldots i_n \rangle,
\end{equation}
where $i_k \in \{0,1\}$, $\sum_{i_1 \ldots i_n} |\alpha_{i_1 \ldots i_n}|^2=1$ and $\ket*{ i_1 \ldots i_n } \equiv \ket*{i_1}\otimes \cdots \otimes \ket*{i_n}$.
Note that $2^n$ complex coefficients are required to describe a quantum state, a number growing exponentially with the system's size.
We call the basis $\left\{ \ket*{ i_1 \ldots i_n }\mid i_k \in \{0,1\} \right\}$ the \emph{computational basis}, as each basis vector is described by a string of $n$ bits.

There are two types of operations we can perform on a quantum state: \emph{unitary operators} and \emph{measurements}.
A unitary operator $U$ has the property that $UU^\dagger=U^\dagger U = I$, i.e. its inverse is given by the hermitian conjugate. Furthermore, this implies that the operator is norm-preserving, that is, unitary operators map quantum states to quantum states.
A unitary operator on $n$ qubits can be expressed as matrix of dimension $2^n \times 2^n$. Moreover, we have that unitary operators are closed under composition.
A measurement is described by a collection of (not necessarily unitary) operators $\{ M_k \}$, where the index $k$ indicates a given measurement outcome.
The operators $M_k$ act on the state's Hilbert space and satisfy the \emph{completeness equation}, $\sum_k M_k^\dagger M_k = I$.
For a quantum state $\ket*{\psi}$, the probability of measuring outcome $m$ is given by
\begin{equation}
	p(m) = \bra*{\psi} M_m^\dagger M_m \ket*{\psi}
\end{equation}
and the resulting quantum state is then
\begin{equation}
	\frac{M_m \ket*{\psi}}{\sqrt{\bra*{\psi} M_m^\dagger M_m \ket*{\psi}}}.
\end{equation}
The completeness equation encodes the fact that measurement probabilities over all outcomes sum to unity.
A \emph{computational basis measurement}, $\{ M_x \}$ for $x \in \{0, 1\}^n$ consists of operators $M_x = \ketbra*{x}$, the projectors onto the computational basis states.

In the circuit model of quantum computation, we are given an input $x\in \{0, 1\}^n$, which is a classical bit string.
The first step is to prepare an $m$-qubit qubit quantum input state $|\psi \rangle $, where $m= \,$poly$(n)$. A unitary operator $U$ is then applied to $\ket*{\psi}$, and finally the output state is (customarily) measured in the computational basis -- without loss of generality. The measurement outcome corresponds to a classical bit string $y \in \{0, 1\}^m$, which is obtained with probability $\abs{\bra*{y} U \ket*{\psi}}^2$ and which we refer to as the output of the computation.

In practice, a quantum computer will be built using a finite set of quantum gates which act on a finite number of qubits. Typically, we consider gates acting on either one or two qubits, leaving the others invariant.
A set of quantum gates is said to be \emph{universal} if any unitary operator can be approximated `well-enough' using only gates from this set.
More precisely, a set of gates $S$ is universal if any unitary operator $U$ can be decomposed into the sequence $U_L U_{L-1}\ldots U_1$, such that $||U-U_L\ldots U_1||_2 \leq \varepsilon$, for any $ \varepsilon > 0$, where the $U_k \in S$. There are many such universal gate sets, such as for instance the Toffoli gate (which acts on three bit/qubits) and the Hadamard gate, or single-qubit rotations with a CNOT. Thus, any arbitrary unitary operator $U$ can be implemented given a universal set of gates.

This thus tells us that any arbitrary unitary operator $U$ can be approximated by the sequence $U_L U_{L-1}\ldots U_1$ to accuracy $\varepsilon$. But, how many gates $L$ are required to achieve a good accuracy? The Solovay-Kitaev theorem (see~\cite[Appendix 3]{nielsen2002quantum}) states that $L=\mathcal{O}\big( \log^2 \frac{1}{\varepsilon} \big)$, and thus exponential accuracy can be achieved using only a polynomial number of gates.

Finally, we discuss an important tool used in quantum computation, the oracle. Here, we are given a boolean function $f: \{0,1\}^n \to \{0,1\}^m$. The function is said to be queried via an \emph{oracle} $\mathcal{O}_f$, if given the input $|x\rangle |q\rangle$ (where $x \in \{0,1\}^n$ and $q \in \{0,1\}^m$), we can prepare the output $|x\rangle |q \oplus f(x)\rangle$, where $\oplus$ denotes addition modulo $2$. That is, the mapping
\begin{equation}
|x\rangle |q\rangle \to |x\rangle |q \oplus f(x)\rangle
\end{equation}
can be implemented by a unitary circuit $U_f$, which takes the form
\begin{equation}
	U_f = \sum_{x\in\{ 0,1 \}^n} \sum_{q\in \{0,1\}^m } \ketbra*{x} \otimes \ketbra*{q \oplus f(x)}{q}.
\end{equation}
The effect of the oracle needs to be determined on all basis states, and the definition will always be given in terms of a state $|q\rangle$.

\subsection{Quantum algorithms and machine learning} \label{subsec:qml}

Quantum algorithms, in some cases, have the capacity to achieve significant speed-ups compared to classical algorithms. Most notably, classically, the prime factorization of an $n$-bit integer using the general number field sieve is $\Omega (N^{1/3}\log^{2/3}N)$, where $N=\mathcal{O} (2^n)$, and thus takes time exponential in the number of bits. In contrast, Shor's factorization algorithm \cite{shor1999polynomial} achieves an astonishing exponential speed-up with a polynomial runtime of $\mathcal{O}((\log n)^3)$. Another impressive result is the quadratic speed-up obtained by Grover's algorithm for unstructured database search~\cite{grover1996fast,grover1997quantum}, which we discuss in detail in section \ref{subsec:aa}. These are just some examples of the many quantum algorithms which have been devised over the past decades \cite{cleve1998quantum,montanaro2015quantum}.

Machine learning  \cite{abu2012learning} has had and continues to have a significant impact for artificial intelligence and more generally for the advancement of technology. Naturally, this raises the question of whether quantum computing could enhance and improve current results and techniques. The HHL algorithm \cite{harrow2009quantum} considers the problem of solving a system of linear equations, which on a classical computer takes time polynomial in the system size $n$. At its heart, the problem reduces to matrix inversion, and the HHL algorithm offers an exponential speed-up for this task, with a certain number of important caveats. This in turn raised the question of whether quantum algorithms could accelerate machine learning tasks, which is referred to as \emph{quantum machine learning} (QML) --  see the following reviews~\cite{aimeur2006machine, ciliberto2017quantum}.

An interesting example which illustrates how quantum computing might help with machine learning tasks is quantum recommendation systems \cite{kerenidis2016quantum}. Here, the \emph{netflix problem} \cite{koren2009matrix, bell2007lessons} is considered, whereby we are given $m$ users and $n$ films, and the goal is to recommend a film to a user which they have not watched, that they would rate highly given their previous rating history.
The users' preferences can be represented by a matrix $P$ of dimension $m \times n$, where the $(i,j)^{\text{th}}$ entry corresponds to the $i^{\text{th}}$ user's rating of the $j^{\text{th}}$ film. Of course, the elements of $P$ are not all known, and the question is to provide good recommendations to the user. Classical algorithms for this problem run in time polynomial with matrix dimension, that is $\mathcal{O}\big(\text{poly}(MN)\big)$. In contrast, there exists a quantum algorithm with runtime complexity scaling as $\mathcal{O}\big(\text{poly}\log (MN)\big)$, thus providing an exponential speed-up \cite{kerenidis2016quantum}.

Another important example is the classical perceptron model, where we are given $N$ labeled data points which are linearly separable and the goal is to find the separating hyperplane. Classically, we have that the number of training updates scales as $\mathcal{O}\big( \frac{1}{\gamma^2}\big)$, where $\gamma$ is the margin, i.e. the shortest distance between a training point and the separating hyperplane. In contrast, the quantum perceptron model \cite{wiebe2016quantum} exploits Grover's search algorithm (see section \ref{subsec:aa}), in order to achieve a quadratic speed-up $\mathcal{O}\big(\frac{1}{\gamma}\big)$.

Another important classical model for supervised learning is support vector machines (SVM), which are used for data classification and regression. For a special case of SVMs, known as least-squares SVMs, the classical runtime is polynomial in the number of training examples $N$ and their dimension $d$, $\mathcal{O}(\log\frac{1}{\varepsilon}\text{poly}(d,N))$, where
 $\varepsilon$ denotes the accuracy. In contrast, quantum SVM \cite{rebentrost2014quantum} offer an exponential speed-up in the dimensions and input number with a runtime of $\mathcal{O}(\frac{1}{\epsilon}\log (dN))$.

Finally, we note that quantum algorithms have also been developed for unsupervised learning, that is, in the case of unlabeled data \cite{aimeur2013quantum, wiebe2014quantum}, which also present a speed-up over their classical counterparts.

All of the algorithms mentioned here use the HHL -- or some other quantum linear systems algorithm -- as a subroutine.

\subsection{Structure} \label{subsec:struc}
We have aimed for a mostly modular presentation, so that a reader  familiar with a particular subroutine of the HHL algorithm, say, phase estimation, can skip the relevant section if they wish.
The structure of the text goes as follows.

First, in section \ref{sec: quantum alg fund comp}, we review some of the key components used in quantum algorithms, namely notation conventions \ref{subsec:convention}, the quantum Fourier transform \ref{subsec:QFT}, Hamiltonian simulation \ref{subsec:ham_sim}, quantum phase estimation \ref{subsec:phase est}, phase kickback \ref{sec:phase_kickback}, amplitude amplification \ref{subsec:aa}, the uncompute trick \ref{subsec:uncompute} and finally quantum RAM \ref{subsec:qram}. Next, in section \ref{sec:HHL}, we present a detailed discussion of the HHL algorithm. We first formally define the problem in \ref{sec:LinSysdefinitions}, before then discussing the algorithm in detail in \ref{subsec:HHL}, along with an error analysis in \ref{subsec:hhl_err_analysis}. In section \ref{subsec:optimality} we consider the computational complexity of the problem, and in \ref{subsec:optimality} its optimality. Then, in \ref{subsec:non-hermitian}, the algorithm is extended to the case of non-Hermitian matrices. 

Next, in section \ref{sec: improvements HHL}, we introduce modern updates to the algorithm, namely: improvements in the dependency on the condition number in \ref{subsec:variable time}; improvements on the precision number in \ref{subsec:exp. improved} and in section \ref{sec:QSVE}, further improvements based on quantum singular value estimation giving an algorithm for dense matrices. More specifically, we discuss Jordan's lemma and its consequences in \ref{subsec:intersection_subspaces}, before reviewing the singular value decomposition in \ref{subsec:svd_and_subspaces} and finally presenting the quantum singular value estimation \ref{subsec:qsve} and its application to linear systems in \ref{subsec:qlsa_from_qsve}.
This section deviates from the pedagogical style of the previous sections, giving an overview of the important ideas behind the results as opposed to all of the gory details. 

\section{Quantum algorithms: fundamental components} \label{sec: quantum alg fund comp}
Here we review some of the fundamental ideas routinely used in the design of quantum algorithms, which will subsequently be applied in the HHL algorithm.

\subsection{Notation and conventions} \label{subsec:convention}
Any integer $k$ between $0$ and $N-1$, where $N=2^n$ may be expressed as an $n$-bit string $k=k_1 \ldots k_n$, i.e. $k=\sum_{l=1}^{n} k_l 2^{n-l}$. Furthermore, given an integer $j$, it is easy to verify that $\frac{j}{2^m}=0.j_{n-m+1}\ldots j_n$.

The Hadamard matrix is defined as $H := \frac{1}{\sqrt{2}}\smqty[1 & 1 \\ 1 & -1]$. Given a vector $\textbf{x}= (x_1, \ldots, x_n)$, the Euclidean norm of $\textbf{x}$ is $\norm{\textbf{x}}_2=\sum \sqrt{|z_1|^2+\ldots + |z_n|^2}$. For an $m \times n$ matrix $A$ with elements $a_{ij}$, the operator norm is given by $\norm{A}_2= \sum_{j=1}^n \sqrt{\sum_{i=1}^m |a_{ij}|}$, i.e., it is the sum of the Euclidean norms of the column vectors of $A$. Note that in both cases this is equivalent to $\ell_2$-norm. Next, we present the QFT.

\subsection{Quantum Fourier transform}
\label{subsec:QFT}
The QFT is at the heart of many quantum algorithms. It is the quantum analogue of the discrete Fourier transform, see \ref{subsubsec:DFT}, and is presented in section \ref{subsubsec:QFT}. In section \ref{subsubsec: QFT implementation}, we will see how the QFT may be implemented efficiently using a quantum computer.

\subsubsection{The discrete Fourier transform} \label{subsubsec:DFT}
The Fourier transform is an important tool in classical physics and computer science, as it allows for a signal to be decomposed into its fundamental components, i.e. frequencies. The Fourier transform tells us what frequencies are present and to what degree.

In the discrete setting, we have that the DFT  is a square invertible matrix $D$ of dimension $N$, where $D_{jk}=\frac{1}{\sqrt{N}} \omega^{jk}$ and $\omega=e^{i2\pi/N}$. It is easy to show that the columns of this matrix are orthogonal and have unit length, and thus the set of column vectors form an orthonormal basis which we refer to as the Fourier basis. If the DFT is applied to a vector using matrix multiplication, then the time complexity scales as $\mathcal{O}(N^2)$. Crucially, by using a divide and conquer approach this can be improved to $\mathcal{O}(N\log N)$, which is referred to as the fast Fourier transform (FFT).

\subsubsection{The quantum Fourier transform} \label{subsubsec:QFT}
In a similar way, the QFT is defined by mapping each computational basis state $|j\rangle$ to a new quantum state $|f_j\rangle=\frac{1}{\sqrt{N}} \sum_{k=0}^{N-1} \exp\Big( i\frac{2\pi j k}{N} \Big)|k\rangle$. The set of orthonormal states $\{|f_j\rangle \}$ form an orthonormal basis set called the Fourier basis. The quantum Fourier transform with respect to an orthonormal basis $\ket*{x} \in \{\ket*{0},\ldots, \ket*{N-1}\}$ is defined as the linear operator with the following action on the basis vectors:
\begin{equation}
\text{QFT}: \ket*{x} \rightarrow \frac{1}{\sqrt{N}} \sum_{k=0}^{N-1} \omega^{x\cdot k} \ket*{k}.
\end{equation}
The inverse Fourier transform is then defined as
\begin{equation}
\text{QFT}^{\dagger}: \ket*{k} \rightarrow \frac{1}{\sqrt{N}} \sum_{x=0}^{N-1} \omega^{-k\cdot x} \ket*{x}.
\end{equation}

But, what does this mean in terms of individual qubits?
First, we represent the integer $k$ in binary notation, $k=\sum_{l=1}^{n} k_l 2^{n-l}$, and thus the Fourier basis states can be expressed as: $|f_j\rangle=\frac{1}{\sqrt{N}}\sum_{k=0}^{N-1} \exp\Big( i2\pi j \sum_{l=1}^{n} k_l 2^{-l} \Big)|k\rangle$. Expanding this expression, we have
\begin{equation}
|f_j\rangle =\frac{1}{\sqrt{N}} \sum_{k_1, \ldots, k_n} \exp\big(i2\pi j k_1 2^{-1}\big) \exp\big( i2\pi j k_2 2^{-2}\big)\ldots \exp\big( i2\pi j k_n 2^{-n}\big)|k_1 \ldots k_n\rangle .
\end{equation}
Expanding the summation gives
\begin{equation}
|f_j\rangle =\frac{1}{\sqrt{N}} \Big(|0\rangle + \exp\big(i2\pi j  2^{-1}  \big)|1\rangle\Big) \Big(|0\rangle+ \exp\big( i2\pi j  2^{-2}\big)|1\rangle\Big) \ldots \Big(|0\rangle+ \exp\big( i2\pi j  2^{-n}\big) |1\rangle\Big).
\end{equation}
Finally, the operation $\frac{j}{2^m}=0.j_{n-m+1}\ldots j_n$  corresponds to the decimal expansion of $j$ up to $m$ bits, and we can thus write
\begin{equation}\label{equ:QFT}
|f_j\rangle =\frac{1}{\sqrt{N}} \Big(|0\rangle+ \exp\big(i2\pi 0.j_n  \big)|1\rangle\Big) \Big(|0\rangle+ \exp\big( i2\pi 0.j_{n-1}j_n \big) |1\rangle\Big) \ldots \Big(|0\rangle+ \exp\big( i2\pi j  0.j_1 \cdots j_n \big)|1\rangle\Big).
\end{equation}

We initially applied the QFT to the computational basis state $|j\rangle =|j_1\ldots j_n \rangle $. From Eq.~\eqref{equ:QFT}, we see that information pertaining to the input $j$ is disseminated throughout the relative phase on each individual qubit. Thus, given the final state $|f_j\rangle$, applying the inverse QFT would yield the input string $j$. Equivalently, one could obtain $j$ by performing a measurement in the Fourier basis.

\subsubsection{Implementation of the QFT} \label{subsubsec: QFT implementation}
The goal is to obtain the quantum state $|f_j\rangle$ after applying a quantum circuit to an all-zero input state. From Eq.~\eqref{equ:QFT} we see that the state $|f_j\rangle$ corresponds to a state where each qubit is initialised in the state $|+\rangle =\frac{|0\rangle +|1\rangle}{\sqrt{2}}$ and subsequently acquires a relative phase of $\exp\big(i2\pi j 2^{-k}\big)$, where $1 \leq k \leq n$, and where we recall that $1 \leq j \leq N-1$.

We now consider the quantum circuit which can implement this state. First, it is easy to see that a state of the form $\frac{1}{\sqrt{2}}\Big(|0\rangle+ \exp\big(i2\pi 0.j_n  \big)|1\rangle\Big) $ corresponds to either the state $|+\rangle $ or $|-\rangle $ depending on the value of $j_n \in \{0,1\}$. This can be expressed as $H|j_n\rangle$, and can thus be obtained by the application of a Hadamard gate $H$ to a qubit in the state $|j_n\rangle$.
Next, the state of the second qubit is given by $\frac{1}{\sqrt{2}}\Big(|0\rangle+ \exp\big( i2\pi 0.j_{n-1}j_n \big) |1\rangle\Big)$, which can be re-expressed as $\frac{1}{\sqrt{2}}\Big(|0\rangle+ (-1)^{j_{n-1}} \exp\big( i2\pi \frac{j_n}{2^2} \big) |1\rangle\Big)$.
Thus, this corresponds to first preparing the state $H|j_{n-1}\rangle$, and then applying a controlled rotation to the qubit, where the control is the $n$th qubit in the state $|j_n\rangle$. Thus, the state on the first two qubits can be obtained by preparing the state $|j_{n-1} j_n \rangle$, applying the Hadamard gate to the $n$th qubit, and then a controlled rotation $R_2$ with qubit $|j_{n-1}\rangle $ as control, where we have that $R_k$ is given by:
 \[
 R_k=
 \left[ {\begin{array}{cc}
 	1 & 0 \\
 	0 & e^{i2\pi \frac{1}{2^k}} \\
 	\end{array} } \right],
 \]
and controlled by qubit $|j_{n-k+1}\rangle$. Finally, SWAP operations are performed  throughout for the qubits to be in the correct order. This approach can be extended to the $n$ qubits, where the number of gates scales as $\mathcal{O}(n^2)$, and thus the QFT can be efficiently implemented in a quantum circuit.

\subsection{Hamiltonian simulation}
\label{subsec:ham_sim}

Most quantum algorithms for machine learning, and in particular the HHL algorithm,
leverage quantum Hamiltonian simulation as a subroutine.
Here, we are given a Hamiltonian operator $\hat{H}$, which is a Hermitian matrix,
and the goal is to determine a quantum circuit which implements the unitary operator $U=e^{-i\hat{H}t}$,
up to given error.
The evolution of a quantum state $\ket*{\Psi}$ under a unitary operator is given, for simplicity, by the time-independent Schr\"{o}dinger equation:
\begin{equation}
i \frac{d}{dt} \ket*{\Psi} = \hat H \ket*{\Psi}, 
\end{equation}
the solution to which can be written as $\ket*{\Psi (t)} = \exp (-i\hat Ht)\ket*{\Psi}$.

Depending on the input state and resources at hand, there exists many different techniques to achieve this~\cite{berry2009black,berry2015hamiltonian,low2017hamiltonian,low2017optimal,low2016hamiltonian}. We give a brief introduction to this large field of still ongoing research, and interested readers can find further details in the lecture notes of Childs~\cite{childs2017lecture}[Chapter V] and the seminal work~\cite{childs2003exponential}.

The challenge is due to the fact that the application of matrix exponentials are computationally expensive. For instance, naive methods require time $\Ord{N^3}$ for a $N \times N$ matrix, which is restrictive even in the case of small size matrices. 
In the quantum case, the dimension of the Hilbert space grows exponentially with the number of qubits, and thus any operator will be of exponential dimension. Applying such expensive matrix exponentials has been studied classically, and in particular the classical simulation of such time-evolutions is known to be hard for generic Hamiltonians $\hat H$. As a consequence, new more efficient methods need to be introduced. In particular, a quantum computer can be used to simulate the Hamiltonian operator, a task known as \emph{Hamiltonian simulation}, which we wish to perform efficiently. More specifically, we can now define an efficient quantum simulation as follows: \\

\begin{definition}\emph{(Hamiltonian Simulation)}
	We say that a Hamiltonian $\hat{H}$ that acts on $n$ qubits can be efficiently simulated if for
	any $t > 0, \epsilon > 0$, there exists a quantum circuit $U_{\hat{H}}$ consisting of
	$\text{poly}(n, t, 1/\epsilon)$ gates such that $\norm{U_{\hat{H}} - e^{-i\hat{H}t}} < \epsilon$.
	Since any quantum computation can be implemented by a sequence of Hamiltonian simulations,
	simulating Hamiltonians in general is $\mathsf{BQP}$-hard, where $\mathsf{BQP}$ refers to the complexity
	class of decision problems efficiently solvable on a universal quantum computer \cite{kitaev2002classical}.
\end{definition}
Note that the dependency on $t$ is important and it can be shown that at least time $\Omega(t)$ is required to
simulate $\hat{H}$ for time $t$, which is stated formally by the  \textrm{no fast-forwarding} theorem~\cite{berry2007efficient}. 
There are, however, no nontrivial lower bounds on the error dependency $\epsilon$.
The hope to simulate an arbitrary Hamiltonian efficiently is diminished, since it \textsf{NP}-hard to find an approximate decomposition into elementary single- and two-qubit gates for a generic unitary and hence also for the evolution operator~\cite{shende2006synthesis,iten2016quantum,knill1995approximation}. Even more so, the optimal circuit sythesis was even shown to be \textsf{QMA}-complete~\cite{janzing2003identity}. However, we can still simulate efficiently certain classes of Hamiltonians, i.e.\ Hamiltonians with a particular structure.
One such example is the case when $\hat{H}$ only acts nontrivially on a constant number of qubits, as any unitary
evolution on a constant number of qubits can be approximated with error at most $\epsilon$ using
$\text{poly}(\log(1/\epsilon)$ one- and two-qubit gates, on the basis of Solovay-Kitaev's theorem.
The origin of the hardness of Hamiltonian simulation stems from the fact that we need to find a
decomposition of the unitary operator in terms of elementary gates, which in turn can be very hard for generic Hamiltonians.
If $\hat{H}$ can be efficiently simulated, then so can $c \hat{H}$ for any $c = \text{poly}(n)$~\cite{childs2017lecture}.
In addition, since any computation is reversible, $e^{-i\hat{H}t}$ is also efficiently simulatable and this must hence also hold for $c<0$.\\
Finally, we note that the definition of efficiently simulatable Hamiltonians further extends to unitary matrices, since every operator $U_{\hat H}$ corresponds to a unitary operator, and furthermore every unitary operator can be written in the form $\exp (i \hat H)$ for a Hermitian matrix $\hat H$. Hence, we can similarly speak of an efficiently simulatable unitary operator, which we will use in the following.

\subsubsection{Trotter-Suzuki methods}
\label{sec:ham_sim_decomp}

For any efficiently simulatable unitary operator $U$, we can always simulate the Hamiltonian $\hat H$ in a transformed basis $U\hat H U^\dag$, since
\begin{equation}
	e^{-iU\hat{H}U^{\dagger}t} = U e^{-i\hat{H}t} U^{\dagger},
\end{equation}
which follows from the fact that if $U=U^{\dagger}$ is unitary, then we have that $(U\hat{H}U^{\dagger})^m = U \hat{H}^m U^{\dagger}$, which can easily be proven by induction.
Another simple but useful trick is given by the fact that, given efficient access to any diagonal element of a Hamiltonian $H_{ii} = \bra*{i} \hat{H} \ket*{i}$, we can simulate the diagonal Hamiltonian using the following sequence of operations. Let $\Rightarrow$ indicate a computational step, such that we can denote in the following a sequence of maps to a state:
\begin{eqnarray}
\label{eq:sim_diag_ham}
	\ket*{i,0} \rightarrow \ket*{i, {H}_{ii}} \\
	\rightarrow e^{-i {H}_{ii}t} \ket*{i, {H}_{ii}} \\
	\rightarrow e^{-i {H}_{ii}t} \ket*{i, 0} = e^{-i\hat{H}t} \ket*{i} \otimes \ket*{0}.
\end{eqnarray}
In words, we first load the entry $H_{ii}$ into the second register, then apply a conditional-phase gate $\exp(-i {H}_{ii}t)$ and then reverse the loading procedure to set the last qubit to zero again.
Since we can apply this to a superposition and using linearity, we can simulate any efficiently diagonalisable Hamiltonian.
More generally, any $k$-local Hamiltonian, i.e.\ a sum of polynomially many terms in the number of qubits that each act on at most $k=\Ord{1}$ qubits,
can be simulated efficiently. Indeed, since each of the terms in the sun acts only on a constant number of qubits, it can be efficiently diagonalised and thus simulated.
In general, for any two Hamiltonian operators $\hat{H}_1$ and $\hat{H}_2$ that can be efficiently simulated,
 the sum of both $\hat{H}_1 + \hat{H}_2$ can also be efficiently simulated, as we will argue below, first for the commuting case and then for the non-commuting case.\\
This is trivial if the two Hamiltonians commute. Indeed, we now omit the coefficient $it$ and consider for simplicity the operator $\exp (\hat{H}_1 + \hat{H}_2)$. By applying a Taylor expansion, followed by the Binomial theorem and the Cauchy product formula (for the product of two infinite series), we have 
\begin{eqnarray}
\exp (\hat H_1 + \hat H_2) &=& \sum_{n=0}^{\infty}\frac{(\hat H_1 + \hat H_2)^n}{n!}\\
&=& \sum_{n=0}^{\infty}\sum_{k=0}^{n}\binom{n}{k}\frac{\hat H_1^k \hat H_2^{n-k}}{n!}\\
&=& \sum_{n=0}^{\infty}\sum_{k=0}^{n}\frac{\hat H_1^k \hat H_2^{n-k}}{k!(n-k)!}\\
&=& \left(\sum_{k=0}^{\infty}\frac{\hat H_1^k}{k!}\right)\cdot\left(\sum_{n=0}^{\infty}\frac{\hat H_2^n}{n!}\right)\\
&=& \exp (\hat H_1) \cdot \exp (\hat H_2).
\end{eqnarray}
Note that this is only possible since for the Cauchy formula we can arrange the two terms accordingly and do not obtain
commutator terms in it. However (recall the famous Baker-Campbell-Hausdorff formula,
see e.g.\ \cite{rossmann2002lie}), this is not so for the general case, i.e.\ when the operators don't commute.
Here we need to use the Lie-Product formula~\cite{rossmann2002lie}:
\begin{equation}\label{eq:lieprodformula}
	e^{-i(\hat{H}_1+\hat{H}_2)t} = \lim_{m \rightarrow \infty} \left( e^{-i\hat{H}_1 t/m}  e^{-i\hat{H}_2 t/m} \right)^m.
\end{equation}
If we want to restrict the simulation to a certain error $\epsilon$, it is sufficient to truncate the above product formula after a certain number of iterations $m$, which we will call the number of steps:
\begin{equation}
\label{error_ham_sim}
	\norm{e^{-i(\hat{H}_1+\hat{H}_2)t} - \left( e^{-i\hat{H}_1 t/m}  e^{-i\hat{H}_2 t/m} \right)^m}_2 \leq \epsilon,
\end{equation}
which, as we will show, can be achieved by taking $m = \Ord{(\max{(||\hat{H}_1||,||\hat{H}_2||)})^2 t^2/\epsilon}$,
where we require that $\max{(||\hat{H}_1||,||\hat{H}_2||)}\sim \Ord{\text{poly}(n)}$ for the evolution to be
efficiently simulable. To see this, observe that, from the Taylor expansion,
\begin{eqnarray}
\left( e^{-i\hat{H}_1 t/m}  e^{-i\hat{H}_2 t/m} \right)^m   & = \left( I - i(\hat{H}_1 + \hat{H}_2) \frac{t}{m} + \mathcal{O}\left(\frac{t^2\max{(\hat{H}_1,\hat{H}_2)}^2}{m^2}\right) \right)^m \\ 
& =  \left( e^{-i(\hat{H}_1+\hat{H}_2)t/m}  + \mathcal{O}\left(\frac{t^2\max{(\hat{H}_1,\hat{H}_2)}^2}{m^2} \right) \right)^m.  
\end{eqnarray}
We need to expand a product of the form $(A+B)^m$, where the operators $A$ and $B$ are non-commuting. Thus, we have that 
\begin{eqnarray}
\label{expansionprod}
(A+B)^m= \underbrace{A^m}_{0^{th}\text{ order in }B} + \underbrace{A^{m-1}B + A^{m-2} B A + A^{m-3}BA^2 + + \ldots + BA^{m-1}}_{\text{m first order in }B}  + \nonumber \\  + \underbrace{A^{m-2}B^2 + A^{m-3}BAB + A^{m-3}B^2 A}_{\text{Second order in }B} + \ldots  + \underbrace{B^m}_{m^{th}\text{ order in }B}, 
\end{eqnarray}
where $A$ and $B$ do not commute in general.
Specifically we have $A =e^{-i(\hat{H}_1+\hat{H}_2)t/m}$ and $B =\Ord{\frac{t^2\max{\left(\hat{H}_1,\hat{H}_2 \right)}^2}{m^2}},$ and the $m$ first order terms in $B$ have the form
\begin{equation}
 e^{-i(\hat{H}_1+\hat{H}_2)t (m-k-1)/m} \Ord{\frac{t^2\max{\left(\hat{H}_1,\hat{H}_2 \right)}^2}{m^2}} e^{-i(\hat{H}_1+\hat{H}_2)t k/m},
\end{equation}
for $k \in [0,m-1]$. Next, we consider terms of order greater than one, where we have powers of $B^l$ for $l>2$. Let us note that for $m>1$, we have $C/m^2 + C/m^3 + C/m^4 + \ldots  \leq \Ord{C/m^2}$ for $m>1$. Thus, these first order and greater terms can be absorbed in the $\tilde{\mathcal{O}}$ notation.  Furthermore, in the following, we do not explicitly write the exponentials of the form $\exp (-i(H_1+H_2)t (m-k-1)/m)$ as these will not play a role for bounding the error in the norm due to their unitarity. So, we continue to bound 
\begin{equation}
\label{eq_error_intermediate}
e^{-i(\hat{H}_1+\hat{H}_2)t} + \tilde{\mathcal{O}}\left(\frac{t^2\max{(\hat{H}_1,\hat{H}_2)}^2}{m} \right). 
\end{equation}

We can now finally consider the error of the simulation scheme, see Eq.~(\ref{error_ham_sim}), which using Eq.~(\ref{eq_error_intermediate}), yields 
\begin{eqnarray}
\norm{\left( e^{-i\hat{H}_1 t/m}  e^{-i\hat{H}_2 t/m} \right)^m  -
e^{-i(\hat{H}_1+\hat{H}_2)t}} = \mathcal{O}\left(\frac{t^2\max{\left(\norm{\hat{H}_1}_2,\norm{\hat{H}_2}_2 \right)}^2}{m} \right). 
\end{eqnarray}
In order to have this error less than $\epsilon$, the number of steps $m$ must be $m = \mathcal{O} \left( \frac{t^2\max{(\hat{H}_1,\hat{H}_2)}^2}{\epsilon} \right)$.

This is a naive and non-optimal scheme. It can be shown that one can use higher-order approximation schemes, such that
$\hat{H}_1+\hat{H}_2$ can be simulated for time $t$ in $t^{1+\delta}$ for any positive  but arbitrarily small $\delta$~\cite{berry2007efficient,berry2009black}.\\
These so-called Trotter-Suzuki schemes can be generalized to an arbitrary sum of Hamiltonians which then leads to an approximation formula given by
\begin{equation}
\label{eq:trotter-suzuki}
	e^{-i(\hat{H}_1+\ldots + \hat{H}_k)t} = \lim_{m \rightarrow \infty} \left( e^{-i\hat{H}_1 t/m} \cdots  e^{-i\hat{H}_k t/m} \right)^m.
\end{equation}
The following definitions are useful:

\begin{definition} \emph{(Sparsity)} \label{def:s-sparse}
An $N \times N$ matrix is said to be \emph{$s$-sparse} if it has at most $s$ entries per row.
\end{definition}

\begin{definition} \emph{(Sparse matrix)}\label{def:sparse}
An $N \times N$ matrix is said to be \emph{sparse} if it has at most $\poly {\log N}$ entries per row.
\end{definition}
Note that the sparsity depends on the basis in which the matrix is given. However, given an arbitrary matrix, we do not a priori know the basis which diagonalises it
(and hence gives us a diagonal matrix with sparsity $1$), and so we need to deal with a potentially dense matrix, i.e.\ a matrix which has $N$ entries per row.
\begin{definition} \emph{(Row computability)}\label{def:effrow}
The entries of a matrix $A$ are efficiently row computable if, given the indices $i,j$, we can obtain the entries $A_{ij}$ efficiently, i.e. in $\Ord{s}$ time, where $s$ is the sparsity as defined above.
\end{definition}

\subsubsection{Graph-colouring method}

Crucially, the simulation techniques described above can allow us to efficiently simulate sparse Hamiltonians.
Indeed, if for any index $i$, we can efficiently determine all of the indices $j$ for
which the term $\bra*{i} \hat H \ket*{j}$ is nonzero, and furthermore efficiently obtain the values of the
corresponding matrix elements, then we can simulate the Hamiltonian $\hat H$ efficiently, as we will describe below.

This method of Hamiltonian simulation is based on ideas from graph theory, and we will now first briefly introduce a couple of key notions relevant in our discussion. For further information, we refer the reader to the existing literature \cite{west2001introduction}. An \emph{undirected graph} $G = (V,E)$ is specified by a set $V$ of $|V|=N$ vertices and a set $E$ of $|E|=M$ edges, i.e., unordered pairs of vertices. When two vertices form an edge they are said to be \emph{connected}. A graph can be represented by its \emph{adjacency matrix} $A$, where $A_{ij} =1$ if the vertices $i$ and $j$ are connected, and $A_{ij}=0$, otherwise. The \emph{degree} of a vertex is given by the number of vertices it is connected to. The \emph{maximum degree} of a graph refers the maximum degree taken over the set of vertices. The problem of edge colouring considers if, given $k$ colours, each edge can be assigned a specific colour with the requirement that no two edges sharing a common vertex should be assigned the same colour. Vizing's theorem tells us that, for a graph with maximum degree $d$, an edge colouring exists with $k$ at most $d+1$. Finally, a \emph{bipartite graph} is a graph, where the set of vertices can be separated into two disjoint subsets $V_1$ and $V_2$ such that $V=V_1 \cup V_2$ and no two vertices belonging to the same subset are connected, i.e., for every $i, j \in V_k$, we have $(i,j) \notin V_k$, for $k=1,2$.   

Previously, we saw that a Hamiltonian operator can be represented by a square matrix. Thus, a graph can be associated with any Hamiltonian by considering the adjacency matrix with a $1$ at every non-zero entry of the Hamiltonian, and a $0$ elsewhere, in the spirit of combinatorial matrix theory. For a matrix of dimension $N$, this will thus correspond to a graph with $N$ vertices. Previously, we saw that sparse Hamiltonians have at most polylog$(N)$ entries per row, and thus $\Ord{n}$, for $n=\log N$ entries in total. This will translate into a graph having a number of edges $|E|=\Ord{n}$.

Childs~\cite{childs2003exponential} proposed an efficient implementation for the simulation of sparse Hamiltonians by using the Trotterization scheme presented above (c.f. section~\ref{sec:ham_sim_decomp}) and a local colouring algorithm of the graph associated with the $s$-sparse Hamiltonian. The core idea is to colour the edges of the Hamiltonian $\hat H$. Then, the Hamiltonians corresponding to each subgraphs defined by a specific colour can be simulated, and finally the original Hamiltonian recovered via the Trotter-Suzuki method~\cite{childs2003exponential}.\\

More precisely, the first step is to find an edge-colouring of the graph associated with the Hamiltonian $\hat H$. This will be achieved using $k$ colours, which in the case of a sparse Hamiltonian, will be at most polylog$|n|$. Next, the graph can be decomposed by considering the subgraphs corresponding to a single colour. We thus obtain a decomposition of the original Hamiltonian in a sum of sparse Hamiltonians, containing at most polylog$|n|$ terms. It is easy to convince oneself that each of these terms consists of a direct sum of two-dimensional blocks. Indeed, each adjacency matrix corresponding to a subgraph will be symmetric, with at most one entry per row, meaning that the evolution on any one of these subgraphs takes place
in isolated two-dimensional subspaces. Thus, each Hamiltonian term can be easily diagonalised and simulated using the diagonal Hamiltonian simulation procedure as given in Eq.~(\ref{eq:sim_diag_ham}).\\

A crucial step in this procedure is the classical algorithm for determining the edge colouring efficiently. Vizing's theorem guarantees the existence of an edge colouring using $d+1$ colours. But, the question remains as to how this can be efficiently achieved. Indeed, even though we are given the adjacency matrix representation of the entire graph, we will now restrict  ourselves to accessing only local information i.e. each vertex has only access to information regarding it nearest-neighbours. Finding an optimal colouring is an \textsf{NP}-complete problem. However, there are polynomial time algorithms that construct optimal colourings of bipartite graphs, and colourings of non-bipartite simple graphs that use at most $d+1$ colours. It is important to note that the general problem of finding an optimal edge colouring is \textsf{NP}-hard and the fastest known algorithms for it take exponential time.\\

We thus now present a local edge-colouring scheme achieving a $d^2$-colouring (where we recall that $d$ is the maximum degree of the graph, i.e.\ the sparsity) for the case of a bipartite graph. This, using a reduction~\cite{childs2017lecture}, is sufficient for the simulation of an arbitrary 
Hamiltonian. Crucially, this scheme is efficient if the graph is sparse, i.e. if $d=$polylog$(n)$. We note that better schemes exist~\cite{berry2007efficient,berry2009black,berry2015hamiltonian} and can allow for polynomial improvements of the simulation scheme in comparison to the one given here.

\begin{lemma}\emph{(Efficient bipartite graph colouring~\cite{linial1987distributive,linial1992locality})}
\label{lemma:colouring}
Suppose we are given an undirected, bipartite graph $G$ with $n$ vertices and maximum degree $d$ (i.e.\ each vertex is
connected to a maximum of $d$ other vertices - the so called neighbours - which is similar to sparsity $s$),
and that we can efficiently compute the neighbours of any given vertex. Then there is an efficiently computable
edge colouring of $G$ with at most $d^2$ colours.
\end{lemma}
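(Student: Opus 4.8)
The plan is to give an explicit \emph{local} colouring rule that assigns to each edge a colour from the set $C = \{1,\dots,d\}\times\{1,\dots,d\}$, so that $|C| = d^2$, and then to verify both that the rule is a proper edge colouring and that it can be evaluated efficiently using only neighbourhood queries.

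First I would fix, once and for all, the total order on the vertex set induced by their $\lceil \log n\rceil$-bit labels, and use the promised bipartition $V = V_1 \cup V_2$. For a vertex $w$, let $N(w)$ denote its neighbour set and define the \emph{ranking function} $\nu_w : N(w) \to \{1,\dots,\deg(w)\}$ that sends each neighbour to its position when $N(w)$ is listed in increasing order; since $\deg(w) \le d$, the image lies in $\{1,\dots,d\}$. For an edge $e = \{u,v\}$ with $u \in V_1$ and $v \in V_2$, set
\[
  \chi(e) \;=\; \bigl(\nu_u(v),\, \nu_v(u)\bigr) \in C.
\]
The bipartition is exactly what lets us orient every edge unambiguously, so $\chi$ is well defined.

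Next I would check properness. If two distinct edges $e = \{u,v\}$ and $e' = \{u,v'\}$ share the endpoint $u \in V_1$, then $v \ne v'$, and since $\nu_u$ is injective on $N(u)$ we get $\nu_u(v) \ne \nu_u(v')$, hence $\chi(e) \ne \chi(e')$ already in the first coordinate. Symmetrically, if two edges share an endpoint in $V_2$ they differ in the second coordinate. Thus no two edges meeting at a common vertex receive the same colour, and $\chi$ is a proper edge colouring with at most $d^2$ colours.

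Finally, for efficiency: to evaluate $\chi$ on a given edge $\{u,v\}$ we only need $N(u)$ and $N(v)$, which by hypothesis are computable efficiently (say in time $\poly{d,\log n}$), after which sorting two lists of length at most $d$ and reading off two ranks costs a further $\poly{d}$ steps. Hence $\chi(e)$ is computable in time $\poly{d,\log n}$ from purely local data, which is efficient precisely when $d = \poly{\log n}$, matching the sparse regime. The one thing to be careful about — and the closest thing to an obstacle — is the computational model: one must ensure the neighbour-listing primitive and the colour-evaluation procedure genuinely require only the two incident neighbourhoods, with no global pass over $G$, and that the canonical vertex order needed to make $\nu_w$ consistent across vertices is available directly from the labels. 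Granting those conventions, the remainder of the argument is routine bookkeeping.
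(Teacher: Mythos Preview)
Your proof is correct and follows essentially the same approach as the paper: your ranking function $\nu_w$ is exactly the paper's $\operatorname{index}(w,\cdot)$, and the colour $\chi(\{u,v\}) = (\nu_u(v),\nu_v(u))$ with $u$ in the left part is precisely the paper's colouring rule, with the same injectivity argument for properness. Your discussion of efficiency and the computational-model caveats is a bit more detailed than the paper's, but the underlying construction is identical.
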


\begin{proof}
The vertices of $G$ are ordered and numbered from $1$ through $n$. For any vertex $a$, let $index(a, b)$ denote the index of vertex $b$ in the list of neighbours of $a$, ordered in increasing number.
For example, let $a$ have the neighbours $c,d$ with the list of neighbours neighbours $(a) := \{c,d\}$ of $a$. Then, we have that index$(a,c)=1$, and index$(a,d)=2$.
Then define the colour of the edge $(ab)$, where $a$ is from the left part of the bipartition and $b$ is from the right for all $a$ and $b$ which have an edge.
The colouring of this edge $(ab)$ is then assigned to be the ordered pair colour$(ab):= ($index$(a, b),$ index$(b, a))$.
Recall that an edge colouring assigns a colour to each edge so that no two adjacent edges share the same colour.
These assigned colours in form of the index-pairs give a valid colouring since if $(a, b)$ and
$(a, d)$ have the same colour, then index$(a, b) = \,$index$(a, d)$, so $b = d$ and similarly, if $(a, b)$
and $(c, b)$ have the same colour, then index$(b, a) =$ index$(b, c)$, so $a = c$.
\end{proof}
Using this lemma we can then perform Hamiltonian simulation in the following manner.
First we ensure that the associated graph is bipartite by simulating the evolution according to the Hamiltonian $\hat B =\sigma_x \otimes \hat H$, a block-anti-diagonal matrix 
\begin{equation}
    \hat B = \left( \begin{array}{c c}
    0 & \hat H\\
    \hat H^{\dagger} & 0
    \end{array} \right).
\end{equation}
The graph associated to this will be bipartite, with the same sparsity as $\hat H$ \cite{childs2017lecture}. Observe that simulating this reduces to simulating $\hat H$ since
\begin{equation}
	e^{-i(\sigma_x \otimes \hat H)t} \ket*{+} \ket*{\psi} = \ket*{+} \otimes e^{-i \hat H t} \ket*{\psi}.
\end{equation}
Without loss of generality let us assume now that $\hat H$ has only off diagonal entries. Indeed, any Hamiltonian $\hat H$ can be decomposed as the sum of diagonal and off-diagonal terms $\hat H_{\mathrm{diagonal}} + \hat H_{\mathrm{off-diagonal}}$, which can then be simulated the sum using the rule given in Eq.~\eqref{eq:lieprodformula}.
We can then, for a specific vertex $x$ and a colour $c$, compute the evolution by applying the following three steps:
\begin{enumerate}
	\item First we compute the complete list of neighbours of $x$ (i.e. the neighbour list and the indices)
		and each of the colours associated to the edges connecting $x$ with its neighbours, using the above local algorithm for graph colouring from Lemma~\ref{lemma:colouring}. 
	\item Let $v_c(x)$ denote the vertex adjacent to $x$ via an edge with colour $c$. We then, for a given $x$, compute $v_c(x)$ and retrieve the Hamiltonian matrix entry $H_{x,v_c(x)}$. We can then implement the following quantum state $\ket*{x, v_c(x), H_{x,v_c(x)}}$ i.e.\ three qubit registers in which we load the elements $x$ in the first one, $v_c(x)$ in the second and then load the matrix element into the last one. More specifically, we here prepare the quantum state $\ket*{x, v_c(x)} \otimes \ket*{a}$ and then using the state preparation oracle (e.g.\ qRAM, see section \ref{subsec:qram}), we obtain $\ket*{x, v_c(x)} \otimes \ket*{a \oplus H_{x,v_c(x)}}$ which
		can be done efficiently, i.e.\ in time $\Ord{\log(n,m)}$, which is the time required to access the data.
	\item We then simulate the ($\hat H$-independent, i.e.\ only depending on the local entry of $\hat H$ but
		not the general matrix) Hamiltonians $\hat h$, i.e.\ the at most polylog$|n|$ Hamiltonians we obtain from the graph colouring. Note that this is a sparse Hamiltonian which acts only on the $d$ neighbours 
         as described in the colouring step. The simulation is efficient, since $\hat h$ can be diagonalised in constant time,
		as it consists of a direct sum of two-dimensional blocks. 
Next, we apply a scheme, described below, whereby each complex matrix entry is decomposed into a real part $x$ and imaginary part $y$ and simulated separately. We can then simulate the diagonalised Hamiltonians such that we implement the following mapping
	\begin{equation}
		\ket*{a,b,h_{a}} \rightarrow h_a \ket*{a,b,h_{a}} \rightarrow ,
	\end{equation}
where $h_a$ is a diagonal element of the diagonalised Hamiltonian $\hat h$. This can also be done in superposition.
\end{enumerate}

The Hamiltonian to be simulated has complex entries, and can thus be decomposed in real and imaginary parts. Let $v_c(a)$ denote the vertex connected to $a$ via an edge of colour $c$. The original Hamiltonian had complex entries, and we can express the entry associated to vertex $v_c(a)$ as a sum of a real part $x_c(a)$ and  imaginary part $y_c(a)$. If we assume that these can be loaded independently i.e.\ $H_{a,v_c(a)}= x_c(a) + i\ y_c(a)$ can be loaded separately, then we can introduce the oracles $V_c,W_c$ which allow for the following mappings to be implemented:
\begin{eqnarray}
V_c \ket*{a,b,z} := \ket*{a, b\oplus v_c(a), z \oplus x_c(a)} \\ 
W_c \ket*{a,b,z} := \ket*{a, b\oplus v_c(a), z \oplus y_c(a)}, 
\end{eqnarray}
and similarly the inverse operations, for which it holds that $V_c^{\dagger} = V_c$, $W_c^{\dagger}=W_c$, since we have bitwise adding modulo $2$ in the loading procedure.

In order to simulate the complex entries we need to implement a procedure which allows us to apply both parts individually and still end up in the same basis-element such that these sum up to the actual complex entry, i.e.\ that we can apply $x_c(a)$ and $i\ y_c(a)$ separately to the same basis state. We use multiple steps to do so.\\
Given the above oracles, we can similarly simulate the following Hermitian operations (note that this is not a unitary operation):
\begin{eqnarray}
S \ket*{a,b,x}:= x \ket*{b,a,x}, \\
T \ket*{a,b,y}:= iy \ket*{b,a,-y},
\end{eqnarray}
where we apply the Hermitian operators to multiply with $x$ (and $iy$) and the swap operation to the registers, which can be implemented efficiently since the swap can be done efficiently. The operator $T$ is described in detail in~\cite{childs2003exponential}
We then can implement the operator \begin{equation}
\tilde{H} = \sum_c [V_c^{\dagger}S V_c + W_c^{\dagger} T W_c],
\end{equation}
where the sum is about all colours $c$. This acts then on $\ket*{a,0,0}$ as $H$, since
\begin{eqnarray}
\sum_c [V_c^{\dagger}S V_c + W_c^{\dagger} T W_c] \ket*{a,0,0} 
= \sum_c [V_c^{\dagger}S \ket*{a,v_c(a),x_c(a)}  + W_c^{\dagger} T \ket*{a,v_c(a),y_c(a)}] \\
= \sum_c [x_c(a) V_c^{\dagger} \ket*{a,v_c(a),x_c(a)}  + i\ y_c(a) W_c^{\dagger} \ket*{v_c(a),a,-y_c(a)}]  \\
= \sum_c [x_c(a) + i y_c(a)] \ket*{a,0,0}
\end{eqnarray}
which can be confirmed using the fact that $v_c(v_c(a)) =a$ and since we have modulo $2$ addition, i.e. $a\oplus a=0$.

\subsection{Erroneous Hamiltonian simulation}
One might wonder what would happen with errors in the Hamiltonian simulator.
For example, imagine that we simulate the target Hamiltonian with simulator such as a quantum computer, and this simulator introduces some random error terms. This will be an issue for as long as we
do not have fully error corrected quantum computers, or if we use methods like quantum density matrix exponentiation which can have
errors in the preparation of the state we want to exponentiate (see~\cite{lloyd2013quantum,kimmel2017hamiltonian}). For example, this is relevant for a method called sample-based Hamiltonian simulation~\cite{lloyd2013quantum,kimmel2017hamiltonian} where we perform the quantum simulation of a density matrix $\rho$, i.e.\ trace-$1$ Hermitian matrix, which can have some errors.\\
For a more in depth introduction and analysis see for example~\citep{cubitt2017universal}. Errors in the Hamiltonian simulator have been investigated
in depth and here, we only want to give the reader some tools to grasp how one could approach such a problem in this
small section. For more elaborate work on this we refer the reader to~\cite{cubitt2017universal}.\\
In the following we will use so called matrix Chernoff-type bounds. Let us recall some results from statistics:
\begin{theorem}[Bernstein~\cite{bernstein1927theory}]
	Let $\{X_i\}_{i=1}^n$ be random variables and let $X = \sum_i X_i$, such that $\mathbb E(X)$ is the expectation value and $\mathbb E(X^2) = \sigma^2$, $|X| \leq M$ is bounded and $X_i$ are independent, identically distributed copies. Then, for all $t > 0$, 
 $$ \mathbb P[|X - \mathbb E (X)| \geq t] \leq \exp{\left( \frac{-t^2}{2n \sigma^2 + \frac{4}{3} t M} \right)}.$$
\end{theorem}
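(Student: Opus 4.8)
The plan is to run the classical Chernoff (exponential Markov) argument, the only nontrivial ingredient being a moment-generating-function bound for bounded, centred random variables. I will read the hypotheses in the way that makes the statement meaningful, namely that each summand satisfies $|X_i| \le M$ and $\mathbb{E}(X_i^2) = \sigma^2$, with $X := \sum_{i=1}^n X_i$; put $Y_i := X_i - \mathbb{E}(X_i)$ and $Y := X - \mathbb{E}(X) = \sum_i Y_i$. Then $\mathbb{E}(Y_i) = 0$, $\mathbb{E}(Y_i^2) = \mathbb{E}(X_i^2) - \mathbb{E}(X_i)^2 \le \sigma^2$, and $|Y_i| \le 2M$ by the triangle inequality (since $|\mathbb{E}(X_i)| \le M$ as well); this factor $2$ is precisely what yields the non-optimal $\tfrac{4}{3}tM$ appearing in the statement rather than $\tfrac{2}{3}tM$.

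First I would establish the one-sided bound. For $\lambda \in (0, 3/(2M))$, Markov's inequality applied to $e^{\lambda Y}$ together with independence gives
\begin{equation}
\mathbb{P}[Y \ge t] \le e^{-\lambda t}\,\mathbb{E}[e^{\lambda Y}] = e^{-\lambda t}\prod_{i=1}^{n}\mathbb{E}[e^{\lambda Y_i}].
\end{equation}
The core estimate is, for each $i$,
\begin{equation}
\mathbb{E}[e^{\lambda Y_i}] \le \exp\!\Big(\frac{\lambda^2 \sigma^2/2}{1 - 2\lambda M/3}\Big),
\end{equation}
which I would prove by expanding $e^{\lambda Y_i} = 1 + \lambda Y_i + \sum_{k\ge 2}\lambda^k Y_i^k/k!$, taking expectations so the linear term vanishes, estimating $|\mathbb{E}(Y_i^k)| \le \mathbb{E}(Y_i^2)\,(2M)^{k-2} \le \sigma^2(2M)^{k-2}$ for $k \ge 2$ (using $|Y_i^k| \le |Y_i|^2(2M)^{k-2}$), and then summing: since $k! \ge 2\cdot 3^{k-2}$, the tail is dominated by a geometric series of ratio $2\lambda M/3 < 1$, giving $\mathbb{E}[e^{\lambda Y_i}] \le 1 + \tfrac{\lambda^2\sigma^2/2}{1-2\lambda M/3}$, and $1+u \le e^u$ finishes. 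Taking the product over $i$ and using $\sum_i \mathbb{E}(Y_i^2) \le n\sigma^2$ yields
\begin{equation}
\mathbb{P}[Y \ge t] \le \exp\!\Big(-\lambda t + \frac{\lambda^2 n\sigma^2/2}{1 - 2\lambda M/3}\Big).
\end{equation}

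It remains to choose $\lambda$. Taking $\lambda = t/(n\sigma^2 + 2tM/3)$, which always lies in $(0, 3/(2M))$, one computes $1 - 2\lambda M/3 = n\sigma^2/(n\sigma^2 + 2tM/3)$, and the exponent collapses to exactly $-t^2/\big(2n\sigma^2 + \tfrac{4}{3}tM\big)$, proving the bound for $\mathbb{P}[Y \ge t]$. Applying the same argument to $-Y$ and taking a union bound gives the two-sided statement up to a factor $2$, which is standard to carry along or to absorb into the constant in the exponent. I expect the main obstacle to be the single-variable moment-generating-function estimate: controlling the higher moments $\mathbb{E}(Y_i^k)$ by $\sigma^2(2M)^{k-2}$ and summing the resulting series so as to produce the clean $1/(1 - 2\lambda M/3)$ form, which is what makes the final one-line optimisation in $\lambda$ work out; everything else is routine bookkeeping, and (if desired) the constant $\tfrac{4}{3}$ can be traded for $\tfrac{2}{3}$ whenever a sharper a.s.\ bound on the centred $Y_i$ is available.
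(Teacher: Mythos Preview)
The paper does not actually prove this theorem; it is merely stated as a classical result with a citation to Bernstein (1927), and the next sentence only remarks that it ``is making use of the independence of the sampling process in order to obtain a concentration of the result in high probability.'' There is therefore no paper proof to compare against.

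Your argument is the standard Chernoff--Bernstein proof and is correct. The reinterpretation of the hypotheses (reading $|X_i|\le M$ and $\mathbb{E}(X_i^2)=\sigma^2$ rather than the literal $|X|\le M$, $\mathbb{E}(X^2)=\sigma^2$, which would make the stated bound trivially wrong or meaningless) is the right call. The moment-generating-function bound via $|\mathbb{E}(Y_i^k)|\le \sigma^2(2M)^{k-2}$ and $k!\ge 2\cdot 3^{k-2}$, the resulting geometric series, and the choice $\lambda=t/(n\sigma^2+2tM/3)$ all check out and yield exactly the claimed exponent $-t^2/(2n\sigma^2+\tfrac{4}{3}tM)$ for the one-sided tail. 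The only (harmless) caveat is the factor $2$ from the union bound for the two-sided statement, which you already flag; the paper's stated version omits it, but this is a common informal convention in review articles.
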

This fundamental theorem in statistics is making use of the independence of the sampling process in order to obtain a concentration of the result in high probability. In order to provide bounds for Hamiltonian simulation, we will need to use matrix versions of these Chernoff-style results. We will thereby make certain assumptions about the matrix, such as for instance that it is bounded in norm, and that the matrix variance statistic - a quantity that is a generalization of the variance - has a certain value. We now state first the result and then prove it.

\begin{lemma}\emph{(Faulty Hamiltonian simulator)}
Hamiltonian simulation of a $N\times N$ Hamiltonian operator $\hat H = \sum_i \hat H_i$ with a faulty simulator
	that induces random error terms $\{\hat H_i^{err}\}$ (random matrices) with expectation value
$\mathbb E [\hat H^{err}_i] = 0 $ with bounded norm $\norm{\hat H^{err}_i}_2 \leq R$
for all $i$ and bounded matrix variance statistic $v(\hat H^{err} := \sum_i \hat H^{err}_i) = \max{ \left( \norm{\mathbb E (\hat H^{err} \hat H^{err \dagger})}_2,\norm{\mathbb E (\hat H^{err \dagger} \hat H^{err})}_2 \right) }$
in each term of the simulation can be simulated with an error less than $\mathcal{O}((1+t)\cdot \epsilon)$ using $m=\Ord{\max_i \norm{\hat H_i}_2^2 t^2/ \epsilon}$ steps with probability of at least
\begin{equation}
1 - 2 N e^{-\epsilon^2/(2 [ v(\hat H^{err}) + R \epsilon/3])}.
\end{equation}
\end{lemma}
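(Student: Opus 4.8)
The plan is to split the total simulation error into a \emph{deterministic} Trotter-Suzuki part and a \emph{random} part stemming from the faulty terms $\{\hat H_i^{err}\}$, bound each separately, and combine them by the triangle inequality; the random part will be controlled by a matrix Bernstein inequality. Throughout, write $\hat H^{err} = \sum_i \hat H_i^{err}$ and regard the faulty device as exactly simulating, via the first-order product formula with $m$ steps, the single perturbed Hamiltonian $\hat H + \hat H^{err}$.

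First I would record the elementary perturbation bound for the generator: for Hermitian $A,B$ and any $t$,
\[
	e^{-i(A+B)t} - e^{-iAt} = -i \int_0^t e^{-i(A+B)(t-s)}\, B\, e^{-iAs}\, ds ,
\]
so unitarity of the exponentials in the integrand gives $\norm{e^{-i(A+B)t} - e^{-iAt}}_2 \leq t\norm{B}_2$. Applied with $A = \hat H$, $B = \hat H^{err}$, this shows that, modulo the product-formula error handled below, the excess error of the device relative to an ideal simulation of $\hat H$ is at most $t\norm{\hat H^{err}}_2$.

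Second, I would invoke the matrix analogue of Bernstein's inequality -- the matrix Chernoff-type bound alluded to above -- for the independent, mean-zero matrices $\hat H_i^{err}$ with $\norm{\hat H_i^{err}}_2 \leq R$ whose sum $\hat H^{err}$ has matrix variance statistic $v(\hat H^{err})$:
\[
	\mathbb{P}\!\left[\,\norm{\hat H^{err}}_2 \geq \epsilon \,\right] \;\leq\; 2N \exp\!\left( \frac{-\epsilon^2}{2\big( v(\hat H^{err}) + R\epsilon/3 \big)} \right).
\]
Hence, outside the stated failure event, $\norm{\hat H^{err}}_2 < \epsilon$, and by the previous step the faulty-term contribution to the error is below $t\epsilon$ with the claimed probability. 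Third, the deterministic Trotter-Suzuki error of simulating $\hat H + \hat H^{err} = \sum_i (\hat H_i + \hat H_i^{err})$ by the first-order formula with $m$ steps is $\Ord{t^2 \max_i \norm{\hat H_i + \hat H_i^{err}}_2^2 / m}$, obtained by the $(A+B)^m$ expansion around Eq.~\eqref{error_ham_sim} of Section~\ref{sec:ham_sim_decomp} and absorbing the higher-order-in-$B$ terms; since $\norm{\hat H_i^{err}}_2 \leq R$ this is $\Ord{t^2 \max_i \norm{\hat H_i}_2^2 / m}$, which the choice $m = \Ord{\max_i \norm{\hat H_i}_2^2\, t^2 / \epsilon}$ makes at most $\epsilon$. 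Summing the two contributions yields total error $\Ord{(1+t)\epsilon}$ with probability at least $1 - 2N\exp(-\epsilon^2/(2[v(\hat H^{err}) + R\epsilon/3]))$.

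The step I expect to be the main obstacle is justifying the reduction in the first paragraph rigorously: one must argue that the per-slice errors, accumulated over a product of length $m$, are governed by the \emph{single} random sum $\hat H^{err}$ rather than by $m$ fresh independent copies (which would inflate the variance by a factor $m$), so that one honest application of matrix Bernstein suffices. A secondary technical point is quoting the matrix Bernstein inequality in precisely the form that produces the dimensional prefactor $2N$ and the denominator $v(\hat H^{err}) + R\epsilon/3$, and checking that $R$ enters only mildly in the product-formula error.
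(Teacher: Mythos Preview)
Your proposal is correct and follows essentially the same approach as the paper: split the error into a term $t\norm{\hat H^{err}}_2$ coming from the perturbation and a deterministic Trotter remainder $\Ord{t^2 \max_i \norm{\hat H_i}_2^2/m}$, then control the first probabilistically by matrix Bernstein and the second by the choice of $m$. The only cosmetic difference is that you obtain the $t\norm{\hat H^{err}}_2$ bound cleanly via the Duhamel integral identity and an intermediate comparison with $e^{-i(\hat H+\hat H^{err})t}$, whereas the paper extracts the same first-order term directly from the product-formula expansion; both routes yield the same two bounds and the same final estimate, and the paper likewise sidesteps the ``$m$ fresh copies'' issue you flag by treating the error as a single fixed $\hat H^{err}$.
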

\begin{proof}
To prove this we will need a theorem that was developed independently in the
two papers~\cite{tropp2012user,oliveira2009concentration}, which is a matrix extension of Bernstein's inequality.
Recall that the standard Bernstein inequality is a concentration bound for random variables with bounded norm, i.e.
it tells us that a sum of random variables  remains, with high probability, close to the expectation value.
This can be extended to matrices which are drawn form a certain distribution and have a given upper bound to the norm. We call a random matrix independent and centered if each entry is independently drawn from the other entries and the expectation of the matrix is the zero matrix. 
\begin{theorem}[Matrix Bernstein]\label{thm:matrixbernstein}
Let $X_1, . . . ,X_n$ be independent, centered random matrices with dimensionality $N \times N$, and assume that each one is uniformly bounded
\begin{equation}
\mathbb{E}(X_i) = 0, \; \text{and} \; \norm{X_i} \leq L \; \forall k=1,\ldots,n.
\end{equation}
We define the sum of these random matrices $X = \sum_{i=1}^n X_i$, with matrix variance statistic of the sum being
\begin{equation}
v(X) =: \max{ \left( \norm{\mathbb E (X X^{\dagger})},\norm{\mathbb E (X^{\dagger} X)} \right) },
\end{equation}
where $ \mathbb E (X X^{\dagger}) = \sum_{i=1}^n \mathbb E (X_i X_i^{\dagger})$.
Then
\begin{equation}
\mathbb{P} \left\lbrace \norm{X} \geq t \right\rbrace \leq 2N \cdot \exp( \frac{-t^2/2}{v(X) +L t/3}) \; \forall t \geq 0.
\end{equation}
Furthermore,
\begin{equation}
\mathbb{E}\left( \norm{X}\right) \leq \sqrt{2 v(X) \log{(2N)}} + \frac{1}{3} L  \log{(2N)}
\end{equation}
\end{theorem}
Let us then recall the error in the Hamiltonian simulation scheme from above.
\begin{eqnarray}
\norm{e^{-i(H_1+\ldots + H_k)t} - \left( e^{-iH_1 t/m} \cdots  e^{-iH_k t/m} \right)^m} = \\
\norm{e^{-i(H_1+\ldots + H_k)t \frac{m-1}{m}}H^{err} t + \mathcal{O}(\max_i{\norm{H_i}, \norm{H^{err}_i}}^2 t^2/m)} \leq \\
\norm{H^{err}} t + \mathcal{O}(\max_i{\norm{H_i}, \norm{H^{err}_i}}^2 t^2/m)
\end{eqnarray}
We then use the assumption that the matrix variance statistic is bounded and that all the $H^{err}_i$ are bounded in norm by $R$ in order
to be able to apply the above theorem.
Using theorem~\ref{thm:matrixbernstein} in order to probabilistically bound the first term, and
observing that for $m=\Ord{\max_i \norm{H_i}^2 t^2/ \epsilon}$ we can bound the second term by $\epsilon$,
assuming that $\max_i {H^{err}_i} < \min_i{H_i}$ and using theorem~\ref{thm:matrixbernstein} we achieve the proposed Lemma.
\end{proof}
For a more in depth analysis of Hamiltonian simulation and errors we refer the reader to~\cite{cubitt2017universal}.

\subsubsection{Modern methods for Hamiltonian simulation}

Modern approaches like fractional-query model~\cite{cleve2009efficient} are more complicated and we will not describe these here. However, these allow for tighter bounds and faster Hamiltonian simulation, with improved dependency on all parameters.\\
A conceptually different approach to Hamiltonian simulation based on Szegedy's quantum walk \cite{szegedy2004quantum} uses the notion of a discrete-time quantum walk that is closely related to any given time-independent Hamiltonian and applies phase estimation in order to simulate the evolution. This approach has the best known performance as a function of the sparsity $s$ and evolution time $t$ but has a worse $\epsilon$-dependency, i.e.\ in the error.
This improved method for Hamiltonian simulation scales as $\Ord{s \norm{H_{max}} t /\sqrt{\epsilon}}$ for a fixed Hamiltonian.\\
Other recent results based on different methods approach optimality, i.e. linear dependency in the parameters. The fact that such exponentials of the Hamiltonian can be easily performed on a quantum computer is essential to the HHL algorithm because it allows us to perform eigenvalue estimation, which we will discuss below.

\subsection{Quantum phase estimation} \label{subsec:phase est}

The goal of quantum phase estimation \cite{kitaev1995quantum} is to obtain a good approximation of an operator's eigenvalue given the associated eigenstate. Here, we consider a unitary operator $U$ acting on an $m$-qubit state, with a set of given eigenvectors $|\psi_i\rangle$ and associated unknown eigenvalues $\lambda_i$. For simplicity, we will consider a particular given eigenvector $|\psi\rangle$ with associated unknown eigenvalue $\lambda$.
This eigenvalue is a complex number and we can thus write $\lambda=e^{i2\pi \phi }$, where $0 \leq \phi \leq 1$ is referred to as the phase. Thus, we wish to determine a good $n$-bit approximation $\tilde{\phi}= 0.\tilde{\phi}_1 \ldots \tilde{\phi}_n$ of $\phi$, which will thus allow for a good $n$-bit approximation $\tilde{\lambda}$ of $\lambda$. This will be achieved by requiring $n$ ancillary qubits.  The intuition is to encode this approximation within relative phases of the qubits, as we previously saw with the QFT, see section \ref{subsec:QFT}.

In first instance, we note that $U^j |\psi \rangle= \lambda^j |\psi \rangle=e^{i2\pi \phi j}|\psi \rangle$.
To start with, we are given $n$ qubits prepared in the $|0\rangle^{\otimes n}$ state and an $m$-qubit quantum state intialised in the $|\psi\rangle $ state.
Next, $n$ Hadamard gates are applied to the $n$ qubits in the first register, which results in the state $\frac{1}{\sqrt{N}}\sum_x |x\rangle |\psi \rangle$, where $N=2^n$.

Then, in order to obtain the $n$-bit approximation of the eigenvalue, we will apply $n$ unitary operators to the state $\psi \rangle$. More specifically, the controlled-$U^{2^k}$ operator is applied, where $k$ is the control qubit, with $k=0$ for the first up to $k=n-1$ for the last qubit, as illustrated in Figure \ref{fig:qpe}. The sequence of controlled-$U^{2^k}$ operations used in phase estimation can be implemented efficiently using the technique of modular exponentiation, which is discussed in depth in~\cite{nielsen2002quantum}[Box 5.2, Ch. 5.3.1].
This results in the state:
\begin{equation}
\frac{1}{2^{n/2}} \sum_{y=0}^{2^n-1} \exp \big(i2\pi \phi y \big) |y\rangle |\psi\rangle.
\end{equation}
In the following, we wish to consider $\phi$ to $n$ bits of accuracy, where the $n$-bit approximation of $\phi$ is given by $\sum_s \phi_s 2^{-s}$. In all generality, the phase can be expressed as:
\begin{equation}\label{eq:phaseestphi}
\phi= \left( \frac{j}{2^n}+ \delta \right)
\end{equation}
where we have that $j=j_{n-1}\ldots j_0$ or alternatively $\frac{j}{2^n}=0.j_1\ldots j_n$. Substituting into~\ref{eq:phaseestphi} and dividing numerator and denominator by $2^n$ we have
\begin{equation}
\frac{1}{2^{n/2}} \sum_{y=0}^{2^n-1} \exp \big(i2\pi \sum_s \phi_s 2^{n-s} y /2^n \big) |y\rangle |\psi\rangle
\end{equation}
which it is easy to recognise is simply the QFT applied to the state $|\phi_1 \cdots \phi_n\rangle$.
Thus, performing a measurement in the Fourier basis gives the bit string $\phi_1 \cdots \phi_n$.
In the case where $\delta=0$, the measurement will yield the original state $|\phi_1 \cdots \phi_n\rangle$. If $\delta \neq 0$, then the measurement will have to be repeated, in which case an upper bound on the necessary number of repetitions can be obtained.

\begin{figure}[h!]
	\begin{center}
		$\Qcircuit @C=1em @R=1.0em @!R {
			\lstick{\ket*{0}} & \gate{H} & \ctrl{4} & \qw & \qw & \qw & \qw & \qw & \qw & \qw \\
			\lstick{\ket*{0}} & \qw & \qw & \gate{H} & \ctrl{3} & \qw & \qw & \qw & \qw & \qw \\
			\lstick{\ket*{0}} & \qw & \qw & \qw & \qw &\gate{H} & \ctrl{2} & \qw & \qw & \qw \\
			\lstick{\ket*{0}} & \qw & \qw & \qw & \qw & \qw & \qw &\gate{H} & \ctrl{1} & \qw  \\
			\lstick{\ket*{\psi }} & \qw & \gate{U^1} & \qw & \gate{U^2} & \qw & \gate{U^4} & \qw & \gate{U^8} & \qw \\
		}$
	\end{center}

	\caption{Quantum phase estimation circuit for the subroutine to prepare the correct state for $t=4$ ancillary qubits.}
	\label{fig:qpe}
\end{figure}
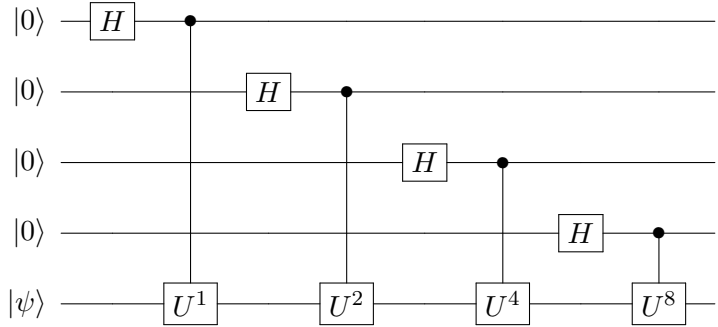

Thus, quantum phase estimation relies on preparing $n$ ancillary qubits in equal superposition, applying $n$ controlled unitary operators to the given eigenstate, and finally performing a measurement in the Fourier basis, as illustrated in Figure \ref{fig:qpe_two}.
\begin{figure*}[h!]
	\begin{center}
		$\Qcircuit @C=1em @R=1.0em @!R {
			\lstick{\ket*{0}^{\otimes n}} & {/} \qw & \ctrl{1} & \qw & \gate{QFT^{\dagger}} & \meter & \cw \\
			\lstick{\ket*{\psi}} & {/} \qw & \gate{U^j} & \qw & \qw & \qw & \lstick{\ket*{\psi}} \\
		}$
	\end{center}
	\caption{Quantum phase estimation circuit.}
	\label{fig:qpe_two}
\end{figure*}
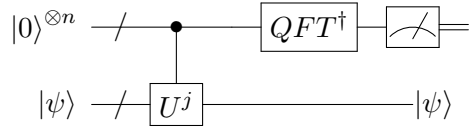

We can summarize the phase estimation procedure in the following Theorem:
\begin{theorem}[Phase estimation \cite{kitaev1995quantum}]
  \label{pest}
  Let unitary $U \ket*{v_j} = \exp(i \theta_j) \ket*{v_j}$ with $\theta_j \in [ - \pi ,\pi ]$ for $j \in [n]$. There is a quantum algorithm that transforms $\sum_{j \in [n]} \alpha_j \ket*{v_j} \mapsto \sum_{j \in [n]} \alpha_j \ket*{v_j} \ket*{\tilde{\theta}_j}$ such that $|\tilde{\theta}_{j} - \theta_{j} |\leq \epsilon$ for all $j\in [n]$ with probability $1-1/\poly{n}$ in time $\Ord{T_U \log{(n)} / \epsilon}$, where $T_U$ is the time to implement $U$.
\end{theorem}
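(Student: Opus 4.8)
The plan is to build directly on the phase estimation circuit of Figures~\ref{fig:qpe} and~\ref{fig:qpe_two} and then to upgrade it in two ways: from a single eigenvector to a superposition, and from constant success probability to $1-1/\poly{n}$, while keeping the overhead only logarithmic in $n$. First I would fix one eigenvector $\ket*{v_j}$ and run the circuit of Figure~\ref{fig:qpe_two} with $t=\Ord{\log(1/\epsilon)}$ ancilla qubits: prepare $\ket*{0}^{\otimes t}$, apply Hadamards, apply the controlled-$U^{2^k}$ ladder, and apply $\mathrm{QFT}^{\dagger}$. Since $U^{2^k}$ costs $2^k$ applications of $U$, the whole ladder costs $\Ord{2^{t}}=\Ord{1/\epsilon}$ applications and hence time $\Ord{T_U/\epsilon}$; and since $\ket*{v_j}$ is an eigenvector, the target register is returned undisturbed up to a phase that is absorbed into the ancilla. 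The geometric-series estimate for the inverse-QFT amplitudes (Section~\ref{subsec:QFT}), together with the bound on how badly the closest grid point can be missed when $2\pi\cdot 2^{-t}\lesssim\epsilon$, shows that measuring the ancilla returns a value within $\epsilon$ of $\theta_j$; taking a constant number of extra ancilla qubits beyond $\lceil\log(1/\epsilon)\rceil$ pushes this probability above $1/2$ by a constant margin $p_0$. By linearity of the circuit, running it on $\sum_j\alpha_j\ket*{v_j}\ket*{0}$ produces $\sum_j\alpha_j\ket*{v_j}\ket*{g_j}$, where each per-branch ancilla state $\ket*{g_j}$ is concentrated on values $\epsilon$-close to $\theta_j$ with weight at least $p_0$.

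Next I would boost the accuracy guarantee. Simply adding $\log(1/\delta)$ ancilla qubits to force failure probability $\delta=1/\poly{n}$ is too expensive, since the ladder would then cost $\Ord{1/(\epsilon\delta)}=\Ord{\poly{n}/\epsilon}$ applications of $U$; instead I would use the median trick, carried out coherently. Run $L=\Theta(\log n)$ independent copies of the subroutine on $L$ separate ancilla registers, obtaining branch by branch $\ket*{v_j}\ket*{\tilde\theta^{(1)}_j}\cdots\ket*{\tilde\theta^{(L)}_j}$ (up to small garbage), compute the median of the $L$ estimates reversibly into a fresh output register, and then \emph{uncompute} the $L$ work registers by running the subroutine backwards — this is exactly the uncompute trick of Section~\ref{subsec:uncompute}. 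For a fixed branch $j$, the median fails to lie within $\epsilon$ of $\theta_j$ only if at least $L/2$ of the independent estimates do, which by a Chernoff/Hoeffding bound (applicable since $p_0$ is a constant above $1/2$) happens with probability $e^{-\Omega(L)}$; choosing the constant in $L=\Theta(\log n)$ appropriately makes this $1/\poly{n}$. A union bound over the $n$ eigenvalues — equivalently, a bound on the total garbage amplitude across branches — then yields $|\tilde\theta_j-\theta_j|\le\epsilon$ for all $j$ with probability $1-1/\poly{n}$. The running time is $L$ times that of one subroutine, i.e.\ $\Ord{T_U\log(n)/\epsilon}$, with the median computation and the extra registers contributing only lower-order terms.

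The routine bookkeeping — shifting $\theta_j\in[-\pi,\pi]$ to a phase in $[0,1)$, rescaling so the grid spacing $2\pi\cdot 2^{-t}$ is at most $\epsilon$, and tracking how the $\Ord{1}$-probability estimate combines with the $2^{-t}$ discretisation error — I would relegate to a few lines, since it is the standard Nielsen--Chuang analysis. The step I expect to be the main obstacle is making the median argument genuinely coherent: one must check that running the subroutine backwards really does disentangle and reset the $L$ work registers on every branch at once, so that the final state is $\Ord{1/\poly{n}}$-close in trace distance to $\sum_j\alpha_j\ket*{v_j}\ket*{\tilde\theta_j}$ and not merely close on each branch in isolation. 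This is where one invokes $\norm{\sum_j\alpha_j\ket*{v_j}(\ket*{g_j}-\ket*{\mathrm{good}_j})}\le\max_j\norm{\ket*{g_j}-\ket*{\mathrm{good}_j}}$, using $\sum_j|\alpha_j|^2=1$, and where one must verify that the uncomputation does not reintroduce relative phases depending on $j$.
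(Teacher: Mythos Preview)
Your proposal is correct and is in fact more complete than what the paper offers. The paper does not prove Theorem~\ref{pest}: it describes the basic phase-estimation circuit of Figures~\ref{fig:qpe} and~\ref{fig:qpe_two}, then simply \emph{states} the theorem as a summary, citing Kitaev. In particular, the paper's discussion never explains where the $\log(n)$ factor in the runtime comes from, nor how the success probability is pushed to $1-1/\poly{n}$ simultaneously for all $n$ branches. Your use of the coherent median trick with $L=\Theta(\log n)$ independent copies is exactly the standard way to obtain both features, and your invocation of the uncompute trick (Section~\ref{subsec:uncompute}) to clean the $L$ work registers is the right tool.

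One small sharpening of the point you flag as the main obstacle: the uncompute does \emph{not} perfectly reset the work registers, because copying out the median entangles them with the output register. The clean way to close this is not to argue that uncomputation succeeds exactly, but to compare with an idealised unitary in which, on branch $j$, the $L$ estimates are deterministically the most likely values (hence the median is a fixed $\tilde\theta_j$ and uncomputation is exact). Your Chernoff bound shows the actual pre-uncompute state is $e^{-\Omega(L)}$-close in norm to this ideal on every branch; your inequality $\norm{\sum_j\alpha_j\ket*{v_j}\ket*{\delta_j}}\le\max_j\norm{\ket*{\delta_j}}$ (valid since the $\ket*{v_j}$ are orthonormal) then propagates this to the full superposition; and since uncomputation is unitary, the distance is preserved afterward. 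This is essentially the argument the paper sketches for the uncompute trick in Section~\ref{subsec:uncompute}, so you are consistent with the paper's toolkit even though the paper itself does not carry it out for this theorem.
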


Crucially, we do not need to be given access to the actual eigenvector $|v_j\rangle$, as this mapping can be applied to a superposition of the eigenvectors. Indeed, any quantum state $\ket*{u}$ can be decomposed in an arbitrary orthonormal basis, such as for instance in the operator eigenbasis $\ket*{v_j}$:
\begin{equation}
\ket*{u} = \sum_j \bra*{u}\ket*{v_j} \ket*{v_j} = \sum_j \alpha_j \ket*{v_j}, 
\end{equation}

where $\alpha_j = \bra*{u}\ket*{v_j}$. Hence the quantum phase estimation procedure can be applied to an arbitrary state $\ket*{u}$, which, as it is just a matter of representation, be directly applied to the operator eigenbasis, apply this mapping in the specific (eigen-)basis without knowing the actual basis:
\begin{equation}
\ket*{0}\ket*{u} =\ket*{0} \sum_j  \bra*{u}\ket*{v_j} \ket*{u_j} = \sum_j \alpha_j \ket*{\tilde{\theta_{j}}} \ket*{v_j}.
\end{equation}

Next, we consider the process of phase kickback, where we query an oracle and encode some information pertaining to it as a relative phase.
\subsection{Phase kickback}\label{sec:phase_kickback}

In the following, we consider a Boolean function $f: \{0,1\}^n \to \{0,1\}^m$. The function is queried via an oracle $\mathcal{O}_f$, and so
\begin{equation}
|x\rangle |q\rangle \to |x\rangle |q \oplus f(x)\rangle,
\end{equation}
where $|x\rangle$ is an input state and where $|q\rangle $ is an ancillary register. This operation can be implemented by a unitary circuit $U_f$.

If the ancillary qubit is in the state $|-\rangle=\frac{|0\rangle - |1\rangle}{\sqrt{2}}$, then by applying the oracle we obtain the state:
\begin{equation} \label{equ:ffbar}
|x\rangle \frac{|f(x)\rangle - |\overline{f(x)}\rangle}{\sqrt{2}},
\end{equation}
where $\overline{\cdot}$ denotes the negated bits.
By considering the cases where the output $f(x)$ is either a $1$ or a $0$, it can easily be shown that this is equivalent to the state:
\begin{equation}\label{equ:kick}
(-1)^{f(x)} |x\rangle|-\rangle.
\end{equation}
Thus, inputs which evaluate to $1$ acquire a relative phase. This process is referred to as phase kickback. Finally, we note that we have assumed that the function can be classically computed in polynomial time, i.e. this is not an expensive step and no complexity is hidden in this call.

\subsection{Amplitude amplification}\label{subsec:aa}

Amplitude amplification~\cite{brassard2002quantum} is an extension of Grover's search algorithm~\cite{Grover2002}. Here, we shall present Grover's search algorithm, then show how this leads to amplitude amplification.

We are given a set containing $N$ elements $\{1, \ldots, N\}$ and the goal is to find a particular element of the set which is marked. This may be modeled by the function $f:\{1, \ldots, N\} \to \{0,1\}$ such that there exists uniquely one item $a \in \{1, \ldots, N\}$ satisfying $f(a)=1$. Otherwise we have that $f(x)=0$ for all $x \in \{1, \ldots, N\} \setminus \{ a \}$. Furthermore, we can evaluate $f(x)$ using an oracle that can be queried in superposition with unit computational cost, see section~\ref{sec:phase_kickback}.
Grover's algorithm is for our purposes more appropriately seen as an algorithm for finding the (unique) root of the Boolean function $\neg f(x)$, i.e. replace our set $\{1, \ldots, N\}$ with the bitstring representations of $\{0,\ldots, N-1\}$. Then  $a$ is the bitstring such that $\neg f(a) = 0$, or the root of $\neg f(x)$.
We shall show later how Grover's algorithm generalises to finding all roots of a Boolean function $f(x)$.

Classically, finding $a$ requires $\mathcal{O}(N)$ oracle queries, where in the worst case we evaluate $f$ on all elements in the set.
In contrast, Grover's algorithm achieves a quadratic speed-up requiring $\mathcal{O}(\sqrt{N})$ oracle queries. Furthermore, it has been shown that there is a lower bound $\Omega(\sqrt{N})$ on the number of queries, that is, this is the optimal scaling~\cite{Dohotaru2001}.

Grover's algorithm uses the phase kickback technique previously discussed in section~\ref{sec:phase_kickback}, which associates a relative phase with the marked item. In order to do this, $n$ qubits are prepared in the $|0\rangle$ state, and by the application of the operator $H^{\otimes n}$, the uniform superposition state $|\psi_0\rangle =\frac{1}{\sqrt{N}} \sum_x |x\rangle $ is obtained.
An ancillary qubit is prepared in the $|-\rangle $ state. By applying the phase kickback protocol, from Eq.~\eqref{equ:ffbar}  and \eqref{equ:kick}, we obtain $\frac{1}{\sqrt{N}}\sum_x (-1)^{f(x)} |x\rangle |-\rangle $.
We now discard the second register.
The basis state corresponding to the marked item has now acquired a relative phase, and we are interested in the remaining state $|\psi \rangle = \frac{1}{\sqrt{N}}\sum_x (-1)^{f(x)}|x\rangle$.
We denote by $Z_f$ the unitary operator that queries the oracle and applies the phase $(-1)^{f(x)}$ to each computational basis state $\ket*{x}$.

Ultimately, a measurement in the computational basis is to be performed. Ideally, this would yield the computational basis state corresponding to the marked item with high probability. How can this be achieved?
The idea is to apply a unitary operator to the input which will dampen the coefficients associated with unmarked items, and strengthen the coefficients corresponding to the marked item. For this to be practically achievable, we need to check that this operator has an efficient implementation.

The Grover operator, to be defined shortly, does precisely this in an efficient manner. The state $|\psi \rangle $ can be expressed as a linear combination of $2^n$ basis terms $|x\rangle $, each with a corresponding complex coefficient $\alpha_x=(-1)^{f(x)}$. One of these corresponds to the marked item $a$. The idea is to strengthen the coefficient $\alpha_a$ whilst simultaneously weakening coefficients $\alpha_k$, where $1\leq k \leq N$ and $k\neq a$.

The unitary operator which achieves this is called the \emph{Grover operator} and is given by $G:= D Z_f$, where $D=2|\psi_0\rangle \langle \psi_0| - \mathbb{I}$ is called the diffusion operator. It is not a priori obvious what this operator does or why one would wish to apply it. There are two arguments which can shed insight into this, one algebraic and one geometric, which we next introduce.

\emph{Algebraic argument.} First, we consider an arbitrary state $|\psi \rangle = \sum_x \alpha_x |x \rangle$. It is easy to see that applying the diffusion operator $D$ results in the state $\sum_x (2 \langle \alpha \rangle -  \alpha_x) |x\rangle $, where $\langle \alpha \rangle =\sum_k \frac{\alpha_k}{N}$ is defined as the mean  value of the coefficients.
Thus the new coefficient corresponding to $\ket*{x}$ is given by $\langle \alpha \rangle  + (\langle \alpha \rangle -\alpha_x)$.
Before the first application of $D$, the mean is given by $\langle \alpha \rangle = \frac{(N-2)/N}{\sqrt{N}}$. Here, we have that the positive coefficients will be dampened and approach to zero, whereas the negative coefficients will be magnified and become positive. This is inversion about the mean. Next, the item is marked with a negative sign by $Z_f$ in order for the inversion about the mean to be applied in the next step. As this process repeats, the unmarked items' coefficients will tend to zero, whereas the marked coefficient goes towards one \cite{Whaley09}.

\emph{Geometric argument.} Strengthening means bringing the initial state closer to the state $|a\rangle $, whilst preserving the norm. Visually, this can be seen as performing a rotation of angle $2\theta$ towards $|a\rangle$ in the plane defined by $\operatorname{span}\{\ket*{\psi_0},\ket*{a}\}$, where $\theta = \arcsin(\bra*{\phi_0}\ket*{a})$.
Let $|a^{\perp}\rangle $ be the orthogonal complement of the marked item state in this subspace. Then $\theta$ is the angle between the equal superposition state $\ket*{\psi_0}$ and the orthogonal complement $\ket*{a^\perp}$.

A rotation by an angle $2\theta$ can be implemented via two successive reflections: one through $|a^\perp \rangle $ and one through $|\psi_0\rangle$. A reflection about an axis mean that any component of a particular vector orthogonal to the axis acquire a negative phase, and the component along the axis remains invariant. Let $|r \rangle$ be the axis of reflection, and let $|\psi \rangle $ be the state to reflect. After reflection, we should have $|\psi'\rangle = \alpha_r |r\rangle -\sum_{i \neq r}\alpha_i |i\rangle $.
This can be expressed as $-|\psi\rangle +2\alpha_r |r\rangle $, which corresponds to application of the operator $\mathbb{I} - 2|r\rangle \langle r|$, up to a global phase. Thus, this gives us the reflection operation. In the case when we reflect about the axis $|\psi_0\rangle $ we recover the diffusion operator.
In the case where we reflect about $\ket*{a^\perp}$ we recover $Z_f$.
Thus, in order to implement the rotation by an angle $2\theta$, we first apply a reflection about $|a^\perp \rangle$, followed by a reflection about $|\psi_0 \rangle$.
We see an illustration of this in Figure~\ref{fig:grover}.
\begin{figure}
	\centering
	\includegraphics[width=0.45\textwidth]{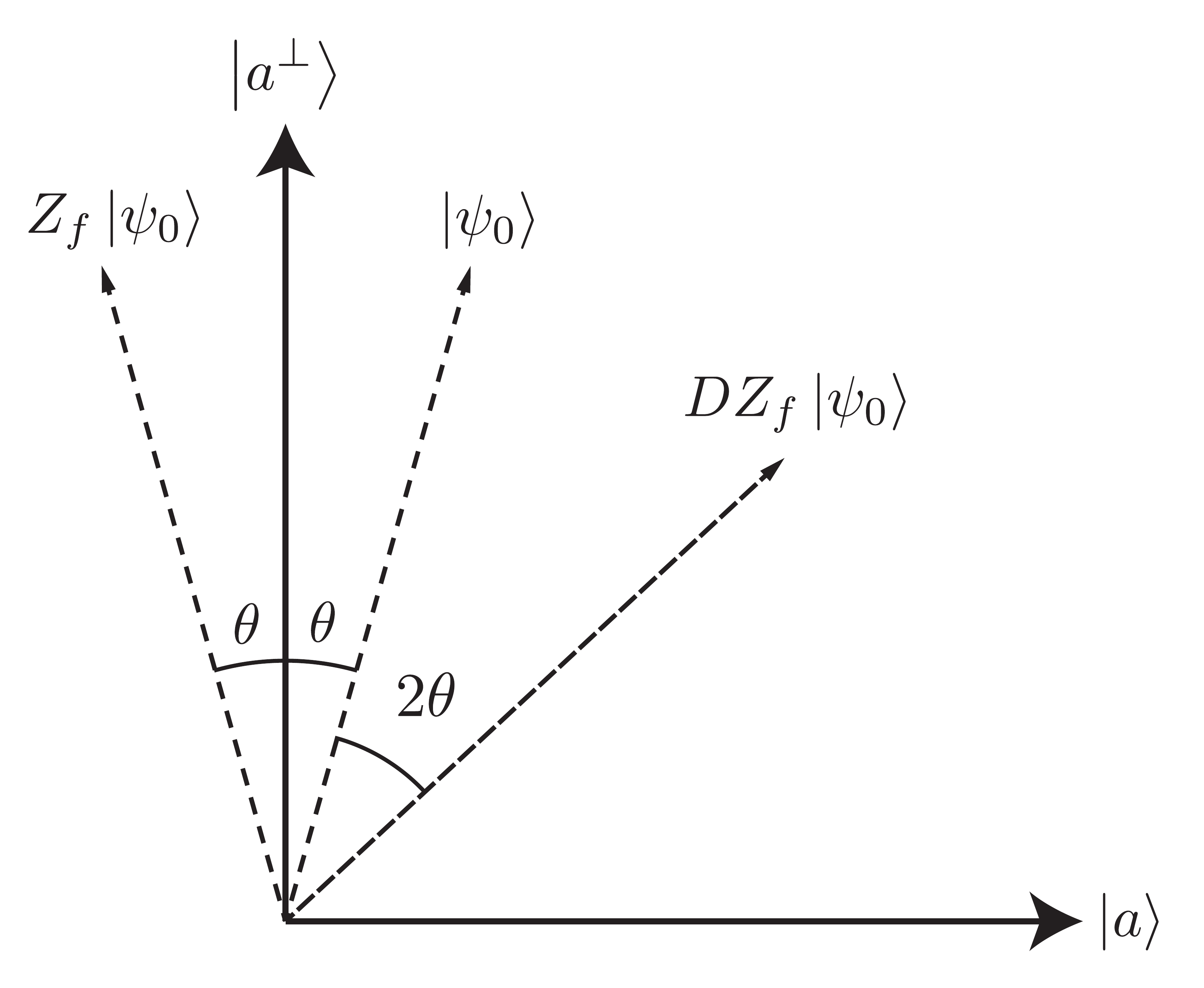}
	\caption{An illustration of the action of one iteration of the Grover operator $G = D Z_f$ on the initial state $\ket*{\psi_0}$.}
	\label{fig:grover}
\end{figure}

We wish to have a number of rotations $k^*$ such that a measurement in the computational basis yields $\ket*{a}$ with high probability. This probability, after $k$ rotations, is given by $\abs{\bra*{a} G^k \ket*{\psi_0} }^2 = \sin^2 ((2k+1)\theta)$.
To have this close to one, we need $(2k^*+1)\theta \approx \pi/2$, which we have when $k^* = \frac{\pi/2}{2\theta}$.
We have that $\langle \psi_0 |a\rangle = \sin \theta$ but also $\langle \psi_0 |a\rangle =\frac{1}{\sqrt{N}}$. Using the small angle approximation, we have that $\sin \theta \approx \theta $ and so $\theta \approx \frac{1}{\sqrt{N}}$.
So, $k^*$ and thus Grover's algorithm scales as $\mathcal{O}(\sqrt{N})$.

This idea can be generalized to the case where we there are now $M$ marked items out of a total of $N$ items, where $1 \leq M \leq N$. First, we construct the states $|\alpha \rangle = \frac{1}{\sqrt{M}} \sum_{x_m} |x_m\rangle $ and the state over unmarked items $|\beta \rangle =\frac{1}{\sqrt{N-M}} |\sum_{x_u} |x_u\rangle $, where $\{x_m\}$ are the marked elements and $\{x_u\}$ are the unmarked elements.
Once again, the goal will be to obtain a marked state with high probability when a computational basis measurement is performed. That is, the coefficients corresponding to the marked item will be boosted whereas those corresponding to unmarked items  dampened.
Here, we can think of this as splitting the total space $\mathcal{H}$ into two subspaces, the \emph{good} subspace and the \emph{bad} subspace. The good subspace corresponds to the marked basis states, i.e. those containing relevant information, which we would wish to obtain upon performing a measurement, and thus boost their corresponding amplitudes. We have that $\mathcal{H}=\mathcal{H}_g \oplus \mathcal{H}_b$, and thus we can write any arbitrary quantum state as
\begin{equation}
|\psi\rangle = \alpha_g |\phi_g\rangle + \alpha_b |\phi_b\rangle,
\end{equation}
where $|\phi_g \rangle$ and $|\phi_b\rangle$ are two orthogonal states, and where we have that $|\alpha_g|^2 + |\alpha_b|^2=1$. In the context of Grover search, we have $|\phi_g\rangle = \frac{1}{\sqrt{M}}\sum_{x_m} |x_m\rangle$, and $|\phi_b\rangle = \frac{1}{\sqrt{N-M}}\sum_{x_u}|x_u\rangle $, and thus we can write the complete state as
\begin{equation}
|\psi \rangle = \sqrt{\frac{M}{N}} |\phi_g\rangle  + \sqrt{\frac{N-M}{N}}|\phi_b\rangle,
\end{equation}
i.e. $\alpha_g= \sqrt{\frac{M}{N}}$ and $\alpha_b=\sqrt{\frac{N-M}{N}}$. The goal is to now, via the application of a unitary operator, amplify the coefficient $\alpha_g$ whilst weakening $\alpha_b$.  Geometrically, this means that the quantum state will be rotated towards the good subspace.

Amplitude amplification is the process of applying Grover's algorithm to tasks where we have an oracle for a function $f$ and need to sample $x$ from the `good' subset of strings, i.e. $G = \{x \mid f(x) = 1\}$, more efficiently than with a classical algorithm.
By querying the oracle $\Ord{\sqrt{\frac{N}{M}}}$ times the amplitude of the `good' subset of strings is then close to unity and the `bad' amplitude is close to zero.
Stated more generally, this gives us
\begin{lemma}\label{lemma:amplitude_amplification}
\emph{(Amplitude Amplification)} Suppose we have an algorithm $\mathcal{A}$ that succeeds with probability $\epsilon$.
Using amplitude amplification, we can take $\Ord{1/\sqrt{\epsilon}}$ repetitions of $\mathcal{A}$ to yield an algorithm $\mathcal{A}'$ with success probability arbitrarily close to one.
\end{lemma}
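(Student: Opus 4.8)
The plan is to lift, almost verbatim, the geometric argument already developed for Grover's algorithm in Section~\ref{subsec:aa} to the abstract setting where the uniform superposition $\ket*{\psi_0}$ is replaced by the state produced by $\mathcal{A}$. First I would purify $\mathcal{A}$ into a unitary (absorbing any internal randomness and intermediate measurements into ancillas) so that $\mathcal{A}\ket*{0} = \ket*{\psi} = \sin\theta\,\ket*{\phi_g} + \cos\theta\,\ket*{\phi_b}$, where $\ket*{\phi_g}$ and $\ket*{\phi_b}$ are the normalised projections of $\ket*{\psi}$ onto the good subspace $\mathcal{H}_g = \operatorname{span}\{\ket*{x}\mid f(x)=1\}$ and its complement, and $\sin^2\theta = \epsilon$ is exactly the success probability of $\mathcal{A}$. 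The cost bookkeeping is then just: each use of $\mathcal{A}$ or $\mathcal{A}^{-1}$ counts as one invocation.

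Next I would introduce the generalized Grover iterate $Q := -\,\mathcal{A}\,S_0\,\mathcal{A}^{-1}\,S_f$, where $S_f = \mathbb{I} - 2\Pi_g$ is the reflection flipping the sign of the good subspace — implementable with one oracle call and phase kickback exactly as $Z_f$ in Section~\ref{sec:phase_kickback} — and $S_0 = \mathbb{I} - 2\ketbra*{0}$, so that $-\mathcal{A}S_0\mathcal{A}^{-1} = 2\ketbra*{\psi} - \mathbb{I}$ is the reflection about $\ket*{\psi}$, the direct analogue of the diffusion operator $D$. The key step is the two-reflections-equals-a-rotation computation already used for Grover: $Q$ preserves the real two-dimensional subspace $\operatorname{span}\{\ket*{\phi_g},\ket*{\phi_b}\}$ and acts on it as a rotation by angle $2\theta$, so that after $k$ applications $Q^k\ket*{\psi} = \sin((2k+1)\theta)\,\ket*{\phi_g} + \cos((2k+1)\theta)\,\ket*{\phi_b}$ and a computational-basis measurement returns a good element with probability $\sin^2((2k+1)\theta)$.

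Then I would optimise over $k$: taking $k^* = \lfloor \pi/(4\theta)\rfloor$ makes $(2k^*+1)\theta$ lie within $\theta$ of $\pi/2$, so the success probability is at least $\cos^2\theta = 1-\epsilon$; and since $\sin\theta = \sqrt{\epsilon}$ forces $\theta \ge \sqrt{\epsilon}$, we get $k^* = \Ord{1/\sqrt{\epsilon}}$, hence $\Ord{1/\sqrt{\epsilon}}$ total invocations of $\mathcal{A}$ and $\mathcal{A}^{-1}$. To drive the success probability \emph{arbitrarily} close to one rather than merely to $1-\epsilon$, I would invoke one of the standard remedies: treat $\mathcal{A}'$ as a new subroutine with constant success probability and repeat it $\Ord{\log(1/\delta)}$ times to push the failure probability below any desired $\delta$, or, for a single clean run, adjoin one extra qubit rotated by a small tunable angle so that the effective $\theta$ divides $\pi/2$ exactly and the amplification lands precisely on $\mathcal{H}_g$.

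The main obstacle is precisely this last point: since $k$ must be an integer and $\theta$ is generally not a submultiple of $\pi/2$ (and may even be unknown), one cannot in one shot reach probability exactly $1$, and one must avoid \emph{overshooting} past $\pi/2$, because $\sin^2((2k+1)\theta)$ is oscillatory, not monotone, in $k$. I would therefore phrase the core estimate as ``at least $1-\epsilon$ after $\Ord{1/\sqrt{\epsilon}}$ iterations'' and then apply the boosting or padding fix above; I would also remark that if only a lower bound $\epsilon_0 \le \epsilon$ is known one simply uses $\epsilon_0$ in place of $\epsilon$ when choosing $k$, and if nothing at all is known one runs the exponential-search schedule over geometrically growing guesses for the iteration count, which costs only a further constant factor. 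Everything else — the reflections and the rotation identity — is routine given the Grover material already in the text, so no heavy computation is required.
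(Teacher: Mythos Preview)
Your proposal is correct and follows essentially the same approach the paper sketches: the paper does not give a formal proof of this lemma but simply points to the preceding Grover discussion and says one obtains the lemma by taking the good subspace as the successful outputs of $\mathcal{A}$ and the bad subspace as the unsuccessful ones, appending a flag bit. Your write-up fills in exactly those details (purifying $\mathcal{A}$ to a unitary, building $Q=-\mathcal{A}S_0\mathcal{A}^{-1}S_f$ as the analogue of $DZ_f$, invoking the two-reflections rotation identity, and handling the integer/overshoot issue), so there is nothing to correct and no substantive divergence from the paper's intended argument.
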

Note that $\mathcal{A}$ can be a classical or quantum algorithm and that we obtain the lemma by taking the `good' subspace as successful bitstring outputs of $\mathcal{A}$ and the `bad' subspace as the unsuccessful outputs.
One can do this by simply appending a bit at the end of the evaluation of the result of $\mathcal{A}$ with `1' for success and `0' for failure.



\subsection{The uncompute trick} \label{subsec:uncompute}

The uncompute trick is a commonly used technique in quantum algorithms for carrying out a computation, then retrieving the initial state. For this subsection, we take much of the presentation from the discussion at~\cite{uncomputeStackExchange}.
Previously, in section \ref{sec:phase_kickback}, we saw that given a boolean function $f:\{0,1\}^n \to \{0,1\}^m$, there exists an oracle $\mathcal{O}_f$ acting as $\ket*{x}\ket*{0} \mapsto \ket*{x}\ket*{f(x)}$, which allows for the mapping $|x\rangle \to |f(x)\rangle $ to be enacted. Yet, we did not discuss how this unitary operator could be efficiently implemented. In particular, from the no-deleting theorem~\cite{KumarPati2000} there is no single-qubit unitary operator that sets an arbitrary qubit state to $\ket*{0}$.
Indeed, for most algorithms we have the map $\ket*{x}\ket*{0}\ket*{0} \mapsto \ket*{x}\ket*{g(x)}\ket*{f(x)}$, where $\ket*{g(x)}$ is a \emph{garbage state} in a \emph{working register} we have used along the way.
Here, we wish to return the garbage state to $\ket*{0}$ in the working register, so that it doesn't disrupt future computations after we discard it. 
More precisely, we assume in general for any quantum algorithm that the working register is initialised in state $\ket*{0}$.
If a previous computation has left the working register state as $\ket*{g(x)}$ the next algorithm to use the working register will output incorrect results.
Even worse, in the case that the mapping $\ket*{x}\ket*{0}\ket*{0} \mapsto \ket*{x}\ket*{g(x)}\ket*{f(x)}$ isn't perfect, the working register and output register are \emph{entangled}. That is, we have some state $\ket*{x}\ket*{0}\ket*{0} \mapsto \sum_{y \in \{0,1\}^m}\alpha_y\ket*{x} \ket*{g(y)}\ket*{f(y)}$ and operations on the garbage register will affect the output register.
Moreover, we wish to keep the input vector $\ket*{x}$. Let us see how this is achieved.

Suppose we have a unitary operation $U_f$ that takes an input $\ket*{x}\ket*{0}\ket*{0}$ and produces as output the state $\ket*{\phi} = \sum_y \alpha_y \ket*{x} \ket*{y} \ket*{f_y}$, where $\ket*{x}$ is in the input register, $\ket*{y}$ is in the working space and $\ket*{f_y}$ is in the output register, and $f_y\in\{ 0,1 \}^m$, the output of the computation. Ideally, we would have that $f_y = f(x)$, and in this case the output is given by the state $\sum_y \alpha_y \ket*{x} \ket*{y} \ket*{f(x)}$, i.e. $\ket*{g(x)} = \sum_y \alpha_y \ket*{y}$.
If an additional computational register is added, and apply the CNOT gate controlled on the third register, giving $\sum_y \alpha_y \ket*{x} \ket*{y} \ket*{f(x)} \ket*{f(x)}$.
Then, if the inverse operator $U_f^{-1}$ is applied to the first three registers, then the state evolves to $\ket*{x}\ket*{0}\ket*{0}\ket*{f(x)}$. Finally, applying a SWAP operator\footnote{The SWAP operator, as the name suggests, swaps the state between two registers on the same number of qubits, i.e., $\operatorname{SWAP}: \ket*{x}\ket*{y} \mapsto \ket*{y}\ket*{x}$.} on the last two registers produces the state $\ket*{x}\ket*{0}\ket*{f(x)}\ket*{0}$.
We can now safely discard the working register as well as the  ancillary register, leaving us $\ket*{x}\ket*{f(x)}$ as desired.

This action of applying $U_f$, appending an additional register, applying the CNOT, then applying $U_f^{-1}$ followed by a SWAP is known as the \emph{uncompute trick}.
It allows one to record the final state of a quantum computation and then reuse the working register for some other task.

By assuming $f_y = f(x)$ in the discussion above, we assumed that the action of $U_f$ was error-free, which is a highly unrealistic scenario. We now assume an error probability of $\epsilon$, that is,
\begin{equation}
	\sum_{y \mid f_y = f(x)} \abs{\alpha_y}^2 = 1- \epsilon.
\end{equation}
The output state after the uncompute operation, that is -- applying the oracle, appending the ancillary register and applying the CNOT -- is given by $\ket*{\phi'}=\sum_y \alpha_y \ket*{x} \ket*{y} \ket*{f_y} \ket*{f_y}$, which is different to the ideal final state $\ket*{\phi'} = \sum_y  \alpha_y \ket*{x} \ket*{y} \ket*{f(x)} \ket*{f(x)}$. The inner product gives
\begin{equation}
	\begin{aligned}
		\bra*{\phi'}\ket*{\phi} &= \sum_{y'}\sum_y  \alpha^*_{y'}\alpha_y \braket*{x} \bra*{y'}\ket*{y} \bra*{f_{y'}}\ket*{f(x)} \bra*{f_{y'}}\ket*{f(x)} \\
		&=
		\sum_y \abs{\alpha_y}^2 \delta^{f_y}_{f(x)} =
		\sum_{y \mid f_y = f(x)} \abs{\alpha_y}^2 =
		1- \epsilon.
	\end{aligned}
\end{equation}
Finally, the inverse operator $U_f^{-1}$ is applied, which preserves the inner product, and thus we have that $\left\langle U_f^{-1}\ket*{\phi'} ,  U_f^{-1}\ket*{\phi} \right\rangle = 1-\epsilon$. Thus, we have that if the unitary $U_f$ acts to within error $\epsilon$, the mapping $\ket*{x}\ket*{0} \mapsto \ket*{x}\ket*{f(x)}$ can be enacted to within error $\epsilon$.

\subsection{Quantum RAM} \label{subsec:qram}

\begin{figure}
	\centering
	\includegraphics[width=\textwidth]{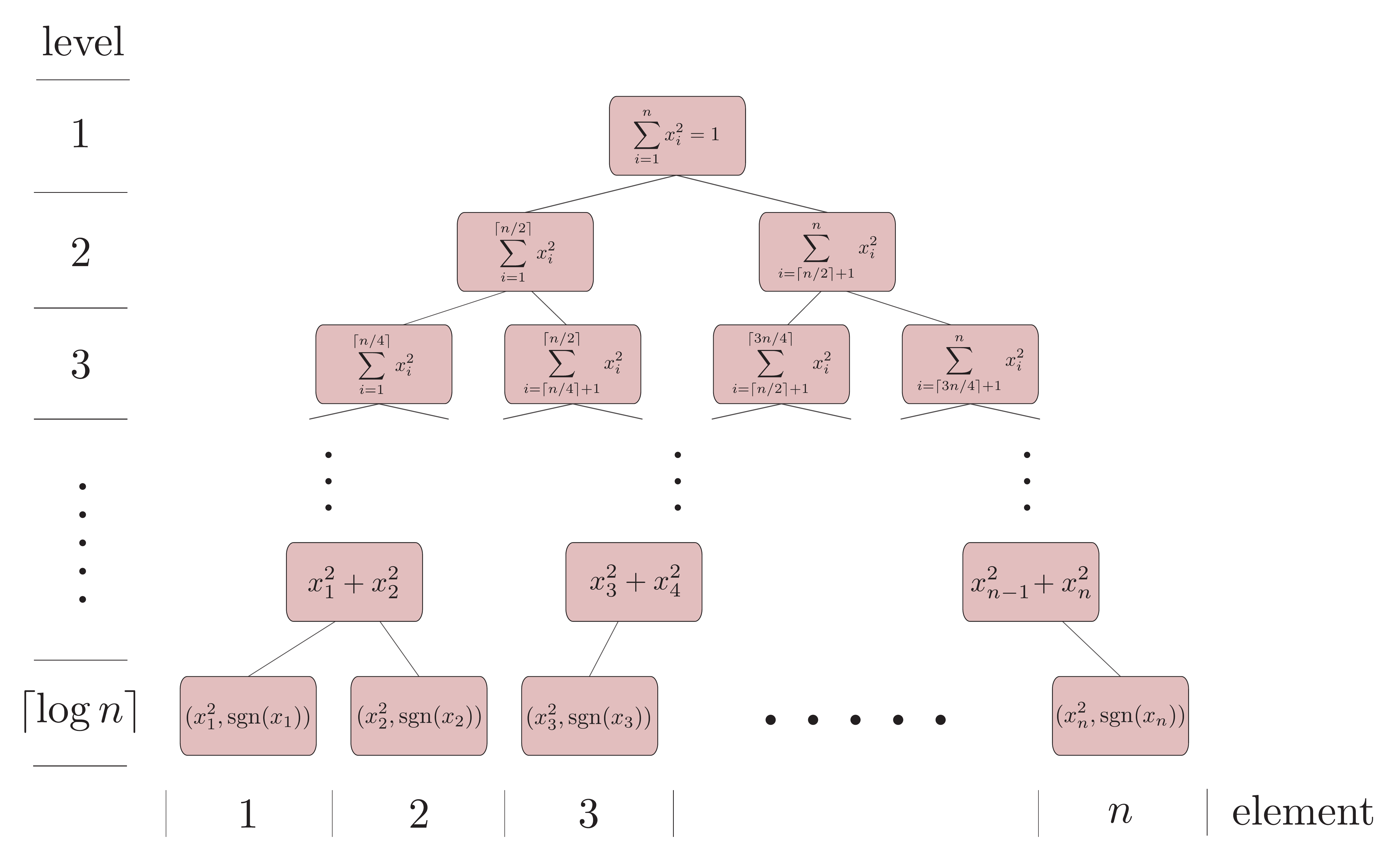}
	\caption{An illustration of a classical data structure $B_{\vb{x}}$ that, when equipped with quantum access, constitutes a qRAM, storing a vector $\vb{x}\in \mathbb{R}^n,\ \norm{\vb{x}}_2 = 1$.
	The vector $\vb{x}$ is stored in the binary tree shown.
	To each element of $\vb{x}$, $x_i$, there is a leaf of the tree.
	Each leaf contains the squared amplitude of the element $x_i$ and its sign $\operatorname{sgn}(x_i)$.
	Every other node contains the sum of its child nodes (ignoring the $\operatorname{sgn}(x_i)$ terms for the $(\lceil \log n \rceil - 1)^{\text{th}}$ level).
	To load the vector, we move through the tree from the root node, appending relevant qubits to the computational register where necessary and rotating conditioned on the values stored in the corresponding nodes.
	This procedure is detailed in Algorithm~\ref{alg:qRAM}.}
	\label{fig:qRAM}
\end{figure}

In the HHL algorithm, a real-valued vector has to be manipulated by the quantum computer. That is, it must be loaded into a computational register, where the elements of the vector will be encoded in the amplitudes of a quantum state. As quantum states are normalised, these amplitudes will be the elements of the vector scaled by the norm of the vector.

More precisely, let $\vb{x} \in \mathbb{R}^N$ be the input vector and let $\ket*{x}:=  \norm{\vb{x}}_2^{-1} \textstyle{\sum_{i=1}^N } x_i \ket*{i} $ be a quantum state on $\lceil \log N \rceil$ qubits.
We would like to have an operation $\mathcal{R}$, that takes some finite precision representation of $\vb{x}$,which we call $\tilde{\vb{x}}$, and outputs the state $\ket*{x}$, along with $\norm{\vb{x}}_2$ i.e.
\begin{equation}
	\mathcal{R}: \tilde{\vb{x}} \mapsto \ket*{\widetilde{\norm{\vb{x}}_2}} \ket*{x},
\end{equation}
where $\ket*{\widetilde{\norm{\vb{x}}_2}}$ is a computational basis state encoding the value of $\norm{\vb{x}}_2$ to some finite precision (denoted by the tilde on top).

The operation $\mathcal{R}$ is known as a \emph{quantum RAM} (qRAM) and is non-trivial to implement. For instance, a linear system can be classically solved in polynomial time. Thus, a necessary condition to obtain a quantum exponential speed-up is for $\mathcal{R}$ to run in time at most polylogarithmic in $N$. This statement holds if we use a qRAM for any task that is polynomial-time computable classically.

Here, we present a memory structure used to implement it, taken from the PhD thesis of Prakash~\cite{PrakashThesis} along with the state preparation procedure of Grover and Rudolph~\cite{Grover2002}. 
We shall say that this is a \emph{classical data structure with quantum access}, in that, it stores classical information, but can be accessed in quantum superposition.
This can be implemented with an ordinary quantum circuit. 

First, we consider the following lemma:
\begin{lemma}\emph{(Controlled rotation)}\label{lem:cosrotation}
	Let $\theta \in \mathbb{R}$ and let $\tilde{\theta}$ be its $d$-bit finite precision representation.
	Then there is a unitary, $U_{\theta}$, that acts as
	\begin{equation}\label{eq:controlledrotation}
		U_{\theta} : \ket*{\tilde{\theta}}\ket*{0}  \mapsto \ket*{\tilde{\theta}} \qty( \cos \tilde{\theta} \ket*{0} + \sin \tilde{\theta} \ket*{1} ).
	\end{equation}
\end{lemma}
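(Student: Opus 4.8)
The plan is to construct $U_\theta$ explicitly as a cascade of controlled single-qubit $Y$-rotations, one per bit of the register holding $\tilde\theta$, using the elementary fact that rotations about a fixed axis add their angles. Recall the single-qubit rotation $R_y(\alpha) := \exp(-i\alpha Y/2)$, which acts on $\ket*{0}$ as $R_y(\alpha)\ket*{0} = \cos(\alpha/2)\ket*{0} + \sin(\alpha/2)\ket*{1}$. Hence the target state in~\eqref{eq:controlledrotation} is precisely $R_y(2\tilde\theta)\ket*{0}$, so it suffices to realise the map $\ket*{\tilde\theta}\ket*{0} \mapsto \ket*{\tilde\theta}\, R_y(2\tilde\theta)\ket*{0}$, i.e.\ a $Y$-rotation of the target qubit whose angle is dictated by the classical value stored in the first register.

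First I would fix the finite-precision convention: writing $\tilde\theta = \sum_{k=1}^{d} b_k\,\theta_k$, where $b_k \in \{0,1\}$ are the bits of the $d$-qubit register $\ket*{\tilde\theta}$ and $\theta_k$ are the corresponding fixed, known angular weights prescribed by that convention (for instance $\theta_k = 2\pi\cdot 2^{-k}$), we get $2\tilde\theta = \sum_k b_k(2\theta_k)$. For each $k$ I would then define the two-qubit gate $C_k$ that applies $R_y(2\theta_k)$ to the target qubit, controlled on the $k$-th qubit of $\ket*{\tilde\theta}$ being $\ket*{1}$, and set $U_\theta := C_d C_{d-1}\cdots C_1$.

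The key step is to check the composition on a computational basis state $\ket*{\tilde\theta} = \ket*{b_1\cdots b_d}$ of the first register: every $C_k$ leaves the control register untouched, and on the target qubit $C_k$ acts as $R_y(2 b_k\theta_k)$, which is the identity when $b_k = 0$ and $R_y(2\theta_k)$ when $b_k = 1$. Since all these rotations share the same ($y$-)axis they commute and their angles add, so $C_d\cdots C_1$ acts on the target as $R_y\!\big(\sum_k 2 b_k\theta_k\big) = R_y(2\tilde\theta)$, producing $\ket*{\tilde\theta}\big(\cos\tilde\theta\ket*{0} + \sin\tilde\theta\ket*{1}\big)$ exactly. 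Linearity extends this to superpositions over the first register, and $U_\theta$ is unitary as a product of unitaries. Each $C_k$ is a standard controlled single-qubit rotation, implementable exactly in a continuous gate set (or to accuracy $\varepsilon$ with $\operatorname{polylog}(1/\varepsilon)$ elementary gates by Solovay--Kitaev), so $U_\theta$ uses $\Ord{d}$ controlled rotations and is efficient.

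As for the main obstacle: there is no deep difficulty here. The only points that need care are (i) pinning down the finite-precision convention so that the weights $\theta_k$ are exactly those for which $\sum_k b_k\theta_k = \tilde\theta$, and (ii) being explicit that ``controlled on a $d$-bit string'' is realised as a cascade of $d$ singly-controlled rotations, \emph{not} as a single multiply-controlled gate --- this is precisely what keeps the cost linear in $d$. Everything else follows from the additivity of co-axial rotations.
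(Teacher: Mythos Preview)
Your proof is correct. The paper's proof takes a shorter, more abstract route: it simply \emph{defines} $U_\theta := \sum_{\tilde\theta\in\{0,1\}^d}\ketbra*{\tilde\theta}\otimes\exp(-i\tilde\theta\sigma_y)$ as a block-diagonal unitary, writes out $\exp(-i\theta\sigma_y)=\smqty(\cos\theta & -\sin\theta\\ \sin\theta & \cos\theta)$, and checks the action on $\ket*{\tilde\theta}\ket*{0}$ directly. The $\Ord{d}$ circuit realisation via a cascade of singly-controlled rotations with halving angles is then mentioned only as a remark after the proof, not as the proof itself.

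So you have essentially promoted the paper's post-proof remark into the proof: you construct the circuit first and read off the desired action from the additivity of co-axial rotations. This buys you the gate count as an immediate by-product and makes explicit that no multi-controlled gate is needed; the paper's approach buys brevity and separates existence from implementation. Both are fine here --- the lemma is routine --- and your version is arguably more useful pedagogically since the $\Ord{d}$ cost is what matters downstream.
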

\begin{proof}
	Let
	\begin{equation}
		U_{\theta} = \sum_{\tilde{\theta} \in \{0,1\}^{ d } } \ketbra*{\tilde{\theta}} \otimes \exp(- i \tilde{\theta} \sigma_y ),
	\end{equation}
	where $\sigma_y = \smqty(0 & -i \\ i & 0)$ is the Pauli $Y$ matrix.
	The matrix $\exp(- i \theta \sigma_y )$ is given by $\exp(- i \theta \sigma_y ) = \smqty(\cos\theta & - \sin\theta \\ \sin\theta & \cos\theta)$. Finally, applying $U_\theta$ to $\ket*{\tilde{\theta}}\ket*{0} $ yields the result in Eq.~\eqref{eq:controlledrotation}.
\end{proof}
We should note that the unitary $U_\theta$ can be implemented in $\Ord{d}$ gates, where $d$ is the number of bits representing $\tilde{\theta}$, using one rotation controlled on each qubit of the representation, with the angle of rotation for successive bits cut in half.

The vector $\vb{x}$ is stored in a binary tree $B_{\vb{x}}$, as illustrated in Figure~\ref{fig:qRAM}. $\tilde{\norm{\vb{x}}_2}$ can be easily loaded into a computational register using CNOT gates and we thus henceforth assume that $\norm{\vb{x}}_2 = 1$ for simplicity. In addition, we assume a finite precision representation throughout.
Each vertex holds the sum of its two children, apart from the leaves, where the $i^{\text{th}}$ leaf holds the squared amplitude of the $i^{\text{th}}$ element of $\vb{x_i}$ along with its sign, i.e. $(x^2_i, \operatorname{sgn}(x_i))$.
Moreover, observe that each of these values must be precomputed then stored. The structure $B_{\vb{x}}$ has $\Ord{N}$ nodes.
We don't take the cost of this preparation into account when computing the time required to prepare the vector $\ket*{\widetilde{\norm{\vb{x}}_2}} \ket*{x}$ as this can be done once, then many copies of the state $\ket*{x}$ can be created.

The procedure for preparing $\ket*{x}$ from $\vb{x}$ is shown in Algorithm~\ref{alg:qRAM}.
An intuitive way to think about it is the following:
the aim is to associate an amplitude $x_i$ to each basis state $\ket*{i}$, for a given $\ket*{x}$. This is achieved  by depth-first-traversal of a binary tree, whose leaves correspond to the basis vectors $\ket*{i}$.
We add qubits to the working register, then rotate them so as to assign the appropriate `amplitude mass' to each subtree of a given node $u$.
We use the term amplitude mass in the same sense as probability mass.
In this way, once we reach the leaves the amplitudes are as desired.
For the node $u$, the precomputed values at each child of $u$ partition this amplitude mass allocated to $u$ and are used as a control.
This whole procedure is encompassed by the $\texttt{processNode}$ subroutine.
The sign of each element of $\vb{x}$ is then easily handled by the $\texttt{processSign}$ subroutine.

\newcommand{\aldent}{\hspace{2em}}
\begin{algorithm}[!htbp]
	\caption{Load vector from Quantum RAM}
	\label{alg:qRAM}
	\begin{enumerate}
		\item Input: vector $\vb{x}\in \mathbb{R}^n$, $\norm{\vb{x}}_2 = 1$, loaded into classical binary tree $B_{\vb{x}}$ (see Figure~\ref{fig:qRAM}).

		Output: state vector $\ket*{x}$.
		\item Initialise $\lceil \log n \rceil$ qubits to $\ket*{0}^{\otimes \lceil \log n \rceil}$, label them $q_1, \ldots, q_{\lceil \log n \rceil}$.
		\item Let $v$ be the root node of $B_{\vb{x}}$.
		Then execute $\texttt{processNode}(v)$.
		\vspace{0.2cm}
		\item $\texttt{processNode}(\texttt{vertex}\ u)\, \{$
		\item[] \aldent Let $u_l$ and $u_r$ be the left and right child nodes of $u$ respectively and let $k$ be the level
		\item[] \aldent of the tree $u$ is in.
		\item[] \aldent $\theta \leftarrow \arccos(\sqrt{\texttt{value}(u_l) / \texttt{value}(u)}) = \arcsin(\sqrt{\texttt{value}(u_r) / \texttt{value}(u)})$.
		\item[] \aldent Perform the controlled rotation $q_k \leftarrow \cos\theta \ket*{0} + \sin\theta \ket*{1}$ (see Lemma~\ref{lem:cosrotation}) conditioned
		\item[] \aldent on the qubits $q_1 \ldots q_{k-1}$ being equal to the binary representation of the vertex $u$.
		\item[] \aldent $\texttt{if}(u_l,u_r \text{ are leaves} )\{$
		\item[] \aldent\aldent $q_k \leftarrow \texttt{processSign}(q_k,\ u_l, \ u_r )$.
		\item[] \aldent\aldent \texttt{return}
		\item[] \aldent $\}$
		\item[] \aldent $\texttt{else}\{$
		\item[] \aldent\aldent $\texttt{processNode}(u_l)$
		\item[] \aldent\aldent $\texttt{processNode}(u_r)$
		\item[] \aldent $\}$
		\item[] $\}$
		\vspace{0.2cm}
		\item $\texttt{processSign}(\texttt{qubit}\ q,\ \texttt{leaf}\  u_l, \ \texttt{leaf}\  u_r )\{$
		\item[] \aldent $\texttt{switch} \{$
		\item[] \aldent\aldent $\texttt{case}(\operatorname{sgn}(u_l) = +,\  \operatorname{sgn}(u_r) = +)\{\ \texttt{return}\ q \ \}$
		\item[] \aldent\aldent $\texttt{case}(\operatorname{sgn}(u_l) = +,\  \operatorname{sgn}(u_r) = -)\{\ \texttt{return}\ \sigma_z q \ \}$
		\item[] \aldent\aldent $\texttt{case}(\operatorname{sgn}(u_l) = -,\  \operatorname{sgn}(u_r) = +)\{\ \texttt{return}\ -\sigma_z q \ \}$
		\item[] \aldent\aldent $\texttt{case}(\operatorname{sgn}(u_l) = -,\  \operatorname{sgn}(u_r) = -)\{\ \texttt{return}\ - q \ \}$
		\item[] \aldent $\}$
		\item[] $\}$
	\end{enumerate}
\end{algorithm}

All we need now is to be sure that Algorithm~\ref{alg:qRAM} does what it is supposed to do and runs in polylogarithmic time.
\begin{theorem}\emph{(qRAM loading)} \label{thm:data}
	Suppose $\vb{x} \in \mathbb{R}^N$ , $\norm{\vb{x}}_2 = 1$, and we have stored $\vb{x}$ in the data structure $B_{\vb{x}}$, as previously described, see Figure~\ref{fig:qRAM}.
	Furthermore, let $\ket*{x}= \textstyle{\sum_{i=1}^N } x_i \ket*{i}$.
	Then, Algorithm~\ref{alg:qRAM} prepares the state $\ket*{x}$ in time $\Ord{\log N}$.
\end{theorem}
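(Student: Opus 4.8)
The plan is to verify two claims: that Algorithm~\ref{alg:qRAM} outputs the state $\ket*{x}$ exactly (modulo the finite-precision caveat below), and that it runs in $\Ord{\log N}$ time. For correctness I would argue by induction on the depth $k$ in the binary tree $B_{\vb x}$, carrying the loop invariant: after $\texttt{processNode}$ has been executed on every node at depth $\le k$, the register holds $\sum_{u}\sqrt{\texttt{value}(u)}\,\ket*{p(u)}\otimes\ket*{0}^{\otimes(\lceil\log N\rceil-k)}$, where the sum ranges over the depth-$k$ nodes $u$ and $p(u)$ is the $k$-bit path-address of $u$. The base case $k=0$ is immediate: the register is initialised to $\ket*{0}^{\otimes\lceil\log N\rceil}$ and, by the defining property of $B_{\vb x}$ (each internal node stores the sum of its children, each leaf stores $x_i^2$), $\texttt{value}(\mathrm{root})=\sum_i x_i^2=\norm{\vb x}_2^2=1$.

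For the inductive step, fix a depth-$k$ node $u$ with children $u_l,u_r$; by hypothesis the amplitude on the address $p(u)$ (with the remaining qubits in $\ket*{0}$) is $\sqrt{\texttt{value}(u)}$. Conditioned on $q_1\cdots q_k=p(u)$, $\texttt{processNode}(u)$ applies to the next qubit the rotation $\ket*{0}\mapsto\cos\theta_u\ket*{0}+\sin\theta_u\ket*{1}$ of Lemma~\ref{lem:cosrotation}, with $\cos\theta_u=\sqrt{\texttt{value}(u_l)/\texttt{value}(u)}$ and $\sin\theta_u=\sqrt{\texttt{value}(u_r)/\texttt{value}(u)}$; using $\texttt{value}(u)=\texttt{value}(u_l)+\texttt{value}(u_r)$, the amplitude $\sqrt{\texttt{value}(u)}$ is split onto the two child addresses (appending a $0$, resp.\ a $1$, to $p(u)$) as $\sqrt{\texttt{value}(u_l)}$ and $\sqrt{\texttt{value}(u_r)}$, which is exactly the invariant at depth $k+1$. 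Because the rotations for distinct depth-$k$ nodes are controlled on distinct addresses, they act on mutually orthogonal subspaces, so the depth-first order in which they are applied is irrelevant. At the leaves $\texttt{value}(i)=x_i^2$, so the magnitude of the amplitude on $\ket*{i}$ is $|x_i|$, and $\texttt{processSign}$ applies $\mathbb{I}$, $\sigma_z$, $-\sigma_z$ or $-\mathbb{I}$ to that qubit according to the four sign patterns of a pair of sibling leaves $(u_l,u_r)$ — which is precisely what turns $(|x_{u_l}|,|x_{u_r}|)$ into $(x_{u_l},x_{u_r})$. Hence the output is $\sum_i x_i\ket*{i}=\ket*{x}$.

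For the runtime, the key observation is that although the traversal visits $\Ord{N}$ nodes, the rotations performed at a single depth $k$ all act on the same qubit and are controlled only on the address register $q_1\cdots q_k$, the rotation angle $\theta_u$ being determined by the value stored at the corresponding node of $B_{\vb x}$. So the whole layer of depth-$k$ rotations is realised by one superposed access to $B_{\vb x}$ — writing $\tilde\theta_u$ into an $\Ord{d}$-qubit ancilla, which the ``classical data structure with quantum access'' supports as an ordinary circuit — followed by the Lemma~\ref{lem:cosrotation} rotation ($\Ord{d}$ gates) and the uncomputation of the ancilla; the sign correction is likewise one controlled operation per layer. Counting a data-structure access as a unit-cost primitive (the sense in which a qRAM must be polylogarithmic, as stressed just before the theorem), each of the $\lceil\log N\rceil$ layers costs $\Ord{d}$, for a total of $\Ord{d\log N}=\Ord{\log N}$ at fixed precision $d$; keeping $d$ explicit would instead give a $\tOrd{\log N}$ bound with a corresponding $\epsilon$-error in $\norm{\ket*{x}-\ket*{\tilde x}}$, which the statement suppresses.

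The step I expect to be the main obstacle is precisely this equivalence between the explicit $\Ord{N}$-node recursion of Algorithm~\ref{alg:qRAM} and the $\lceil\log N\rceil$ layer-wise superposed queries: one has to argue that the depth-first recursion is only classical bookkeeping identifying which stored value controls which rotation, and that ``quantum access'' to $B_{\vb x}$ is a legitimate polylog-cost circuit — this is where the one-time cost of precomputing and storing all the partial sums in $B_{\vb x}$ is implicitly being amortised away. The remaining points — fixing the off-by-one between tree levels and qubit indices, checking the four $\texttt{processSign}$ cases are exhaustive and correctly signed, and invoking orthogonality of the address subspaces to treat the within-layer rotations independently — are routine.
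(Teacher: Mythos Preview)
Your proposal is correct and follows essentially the same approach as the paper: your forward induction on depth with the invariant $\sum_u\sqrt{\texttt{value}(u)}\,\ket*{p(u)}$ is the level-by-level restatement of the paper's per-leaf telescoping product $\prod_k\sqrt{\texttt{value}(u_k)/\texttt{value}(u_{k-1})}$, and your runtime argument---collapsing each layer of address-controlled rotations into a single parallel operation $\bigoplus_x U_x$ on one qubit---is exactly what the paper does. Your treatment is in fact slightly more careful than the paper's in making explicit the superposed qRAM lookup of $\tilde\theta_u$, the $\Ord{d}$ cost per layer, and the tension between the $\Ord{N}$-node depth-first recursion as written and the claimed $\Ord{\log N}$ runtime.
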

\begin{proof}
	First we verify that the state $\ket*{x}$ is prepared by Algorithm~\ref{alg:qRAM}. Let the state output by the algorithm be $\ket*{\tilde{x}}$ and its $i^{\text{th}}$ amplitude $\tilde{x}_i$.
	We can think of the value of $\tilde{x}_i$ as being computed by walking from the root node of $B_{\vb{x}}$ to the leaf $i$ along a path $P_i$, multiplying by a relevant factor at every intermediate node followed by the sign of $x_i$ right at the end.
	Now $P_i=(u_1, u_2, \ldots, u_{\lceil \log n \rceil} )$.
	The factor we multiply by at each intermediate node $u_k$ is $\sqrt{\texttt{value}(u_{k}) / \texttt{value}(u_{k-1})}$.
	We thus have
	\begin{equation}
			\tilde{x}_i =  \prod_{k=2}^{\lceil \log n \rceil} \sqrt{\frac{\texttt{value}(u_{k})}{\texttt{value}(u_{k-1})}} \operatorname{sgn}(x_i) =
			\sqrt{\frac{\texttt{value}(u_{\lceil \log n \rceil})}{\texttt{value}(u_{1})}} \operatorname{sgn}(x_i) = \sqrt{\frac{x_i^2}{1}} \operatorname{sgn}(x) = x_i.
	\end{equation}
	Since this argument works for any $i \in [n]$, we have that $\ket*{\tilde{x}} = \ket*{x}$, as desired.
	For the runtime, there are $2^{k}$ rotations executed at the $k^{\text{th}}$ level of the tree, apart from the last level where there are none.
	For a given level these rotations can be executed in parallel as they are all controlled operations on the same qubit, conditioned on different bit-string values of a shared register.
	To see this, let $U_x$ be a single qubit rotation conditioned on a bitstring $x\in \{0,1\}^k$.
	Then the unitary $\bigoplus_{x\in \{0,1\}^k}U_x$ applied to $\ket*{y} \otimes \ket*{q}$ achieves the desired parallel operation on the single qubit $\ket*{q}$, where $\ket*{y} = \sum_{x\in \{0,1\}^k} \alpha_x \ket*{x}$ is some superposition over bitstrings.
	
	Since there are $\lceil \log n \rceil$ levels, we have the runtime $\Ord{\log n}$, assuming a constant cost for the rotations.
\end{proof}

We now consider the case where we can't execute the rotations $\tilde{R}$ perfectly. Presume there is some constant error, $\epsilon$, on each rotation $\tilde{R}$ such that $\norm*{\tilde{R} \ket*{\psi} - \exp(- i \theta \sigma_y) \ket*{\psi} } \leq \epsilon $ for any qubit state $\ket*{\psi} \in \mathbb{C}^2$.
There are $\sum_{k=1}^{\lceil \log n \rceil - 1} 2^k = \Ord{n}$ single-qubit rotations that are used in total.
We claim that the errors are additive, so we have that $\norm{\ket*{\tilde{x}} - \ket*{x}}_2 = \Ord{ n \epsilon }$.
Thus the error on the rotations needs to scale as $\epsilon = \Ord{1/n}$ for the state $\ket*{\tilde{x}}$ to be prepared to constant precision.
It remains to justify the claim.
We have
\begin{cla}\emph{(Additive unitary error)} \label{cla:close}
	Let $U_1, \ldots, U_k$ be $d\times d$ unitary matrices such that $\norm{U_i \ket*{\psi} - \ket*{\psi} }_2 \leq \epsilon$ for all $i \in \{1,\ldots,k\}$, $\ket*{\psi} \in \mathbb{C}^d$ and $\epsilon > 0$.
	Then $\norm{U_k U_{k-1} \cdots U_1 \ket*{\psi} - \ket*{\psi}}_2 \leq k \epsilon$.
\end{cla}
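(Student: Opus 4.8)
The plan is to prove this by induction on $k$, using only the triangle inequality and the fact that unitary matrices preserve the Euclidean norm. The key observation, which makes the induction go through cleanly, is that the hypothesis $\norm{U_i\ket*{\psi} - \ket*{\psi}}_2 \leq \epsilon$ is assumed to hold for \emph{every} $\ket*{\psi} \in \mathbb{C}^d$, so in the inductive step we are free to apply it to the partially-evolved state.

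For the base case $k = 1$, the statement is exactly the hypothesis. For the inductive step, assume the bound holds for the product of $k-1$ unitaries, i.e. $\norm{U_{k-1}\cdots U_1 \ket*{\psi} - \ket*{\psi}}_2 \leq (k-1)\epsilon$ for all $\ket*{\psi}$. Now I would insert a telescoping term: write
\begin{equation}
U_k U_{k-1}\cdots U_1 \ket*{\psi} - \ket*{\psi} = U_k\left( U_{k-1}\cdots U_1 \ket*{\psi} - \ket*{\psi} \right) + \left( U_k \ket*{\psi} - \ket*{\psi} \right).
\end{equation}
Applying the triangle inequality to the right-hand side gives
\begin{equation}
\norm{U_k U_{k-1}\cdots U_1 \ket*{\psi} - \ket*{\psi}}_2 \leq \norm{U_k\left( U_{k-1}\cdots U_1 \ket*{\psi} - \ket*{\psi} \right)}_2 + \norm{U_k \ket*{\psi} - \ket*{\psi}}_2.
\end{equation}
Since $U_k$ is unitary it is norm-preserving, so the first term on the right equals $\norm{U_{k-1}\cdots U_1 \ket*{\psi} - \ket*{\psi}}_2$, which is at most $(k-1)\epsilon$ by the inductive hypothesis; the second term is at most $\epsilon$ by the hypothesis of the claim. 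Summing yields $\norm{U_k \cdots U_1 \ket*{\psi} - \ket*{\psi}}_2 \leq k\epsilon$, completing the induction.

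There is no real obstacle here: the argument is a routine triangle-inequality/telescoping estimate, and the only thing to be slightly careful about is to peel off the unitaries in the correct order (innermost versus outermost) so that the factor one keeps bounded is always $U_j$ acting on a single fixed argument, with the leftover product absorbed into the inductive hypothesis. One could equally peel from the inside out; either bookkeeping works. Applying the claim with $U_i = \tilde R_i$ and the given per-rotation error then immediately gives $\norm{\ket*{\tilde x} - \ket*{x}}_2 = \Ord{n\epsilon}$, as asserted in the surrounding discussion.
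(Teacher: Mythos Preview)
Your proof is correct and follows essentially the same approach as the paper: induction on $k$ with a telescoping split and the triangle inequality. The only cosmetic difference is the choice of split---the paper inserts $\pm\,U_{k-1}\cdots U_1\ket*{\psi}$ and bounds $\norm{U_k\ket*{\psi'}-\ket*{\psi'}}_2$ directly from the hypothesis (so unitarity is never invoked), whereas you factor out $U_k$ and use norm-preservation; as you yourself note, either bookkeeping works.
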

\begin{proof}
	We proceed by induction.
	From the assumptions, $\norm{U_1 \ket*{\psi} - \ket*{\psi} }_2 \leq \epsilon$.
	Now, we assume the hypothesis holds for $k-1$ and consider
	\begin{equation}
		\begin{aligned}
			\norm{U_k U_{k-1} \cdots U_1 \ket*{\psi} - \ket*{\psi}}_2  &= \norm{U_k U_{k-1} \cdots U_1 \ket*{\psi} - U_{k-1} U_{k-2} \cdots U_1\ket*{\psi} + U_{k-1} U_{k-2} \cdots U_1 \ket*{\psi} - \ket*{\psi}}_2 \\
			&\leq
			\norm{U_k U_{k-1} \cdots U_1 \ket*{\psi} -  U_{k-1} U_{k-2} \cdots U_1\ket*{\psi}}_2 + \norm{U_{k-1} U_{k-2} \cdots U_1 \ket*{\psi} - \ket*{\psi}}_2 \\
			&=
			\norm{ U_k \ket*{\psi'} - \ket*{\psi'} }_2 + \norm{U_{k-1} U_{k-2} \cdots U_1 \ket*{\psi} - \ket*{\psi}}_2 \\
            &\leq \epsilon + (k-1)\epsilon = k\epsilon,
		\end{aligned}
	\end{equation}
	where the first inequality is the triangle inequality, for the second equality we define $\ket*{\psi'} := U_{k-1} U_{k-2} \cdots U_1\ket*{\psi}$ and for the last inequality we use the assumption for the claim, along with the inductive hypothesis.
\end{proof}
Here, we have considered only one type of error in the qRAM, but there are other potential sources of error, such as for instance bit-flip errors on the control qubits.
The errors involved in qRAM are discussed more thoroughly and for a slightly different architecture in~\cite{Arunachalam2015}.

From the discussion in this section we see that preparing a vector as a quantum state is a non-trivial task.
Indeed, it is still unclear whether states can be prepared to sufficient precision in polylogarithmic time at scales desirable for applications.
These considerations notwithstanding, for the remainder of these notes, we presume that the unitary operation $\mathcal{R}$ exists and can be carried out in time polylogarithmic in the size of the vector of interest. This concludes our introduction to crucial ideas in quantum algorithms, and in the next section we introduce the HHL algorithm in detail.

\section{HHL} \label{sec:HHL}
We now consider in section \ref{sec:LinSysdefinitions} the problem of solving a system of linear equations, a well-known problem which is  at the heart of many questions in mathematics, physics and computer science. In section \ref{subsec:HHL}, we present the HHL algorithm, with first a brief summary \ref{subsubsec:summary} followed by a more detailed discussion in \ref{subsubsec:details}. Then, in section \ref{subsec:hhl_err_analysis} we analyse how errors occurring both on the input as well as during the computation affect the algorithm's performance. In section \ref{subsec:red} we briefly cover how this problem is $\mathsf{BQP}$-complete as well as its optimality in \ref{subsec:optimality}. Finally, the non-hermitian case is considered in \ref{subsec:non-hermitian}.

\subsection{Problem definition}\label{sec:LinSysdefinitions}

We are given a system of $N$ linear equations with $N$ unknowns which can be expressed as $A \textbf{x}=\textbf{b}$, where $\textbf{x}$ is a vector of unknowns, $A$ is the matrix of coefficients and $\textbf{b}$ is the vector of solutions. If $A$ is an invertible matrix, then we can write that the solution is given by $\textbf{x}=A^{-1}\textbf{b}$. This is known as the Linear Systems Problem (LSP), and can be expressed more formally as given in Definition \ref{def:LSP}.

\begin{definition} \label{def:LSP}
	\emph{(LSP)} Given a matrix $A \in \mathbb{C}^{N \times N}$ and a vector $\vb{b}\in \mathbb{C}^N$, output a vector $\vb{x} \in \mathbb{C}^N$ such that $A\vb{x}=\vb{b}$, or a flag indicating the system has no solution.
\end{definition}
The `flag' can be an ancilla bit with the value of `1' if there is a solution and `0' otherwise.
\par
The quantum version of this problem is called the QLSP \cite{childs2015quantum}, as given in Definition \ref{def:QLSP}, where the matrix is now required to be Hermitian with unit determinant.

\begin{definition}\label{def:QLSP}
	\emph{(QLSP)} Let $A$ be an $N\times N$ Hermitian matrix with unit determinant~\footnote{These restrictions can be slightly relaxed by noting that, even for the non-hermitian matrix $A$, the matrix $\smqty[0 & A^\dagger \\ A & 0]$ is Hermitian (and therefore also square), and any matrix with non-zero determinant can be scaled appropriately.
	This issue is discussed further in section~\ref{subsec:non-hermitian}.}.
	Also, let $\vb{b}$ and $\vb{x}$ be $N$-dimensional vectors such that $\vb{x}:=A^{-1}\vb{b}$. Let the quantum state on $\lceil \log N \rceil$ qubits $\ket*{b}$ be given by
    \begin{equation}
	\ket*{b}:=\frac{\textstyle{\sum_i}b_i\ket*{i}}{ \norm{\textstyle{\sum_i}b_i\ket*{i}}_2}
    \end{equation}
and $\ket*{x}$ by
    \begin{equation}
  \ket*{x}:=\frac{\textstyle{\sum_i}x_i\ket*{i}}{\norm{\textstyle{\sum_i}x_i\ket*{i}}_2},
	\end{equation}
	where $b_i$, $x_i$ are respectively the $i^{\text{th}}$ component of vectors $\vb{b}$ and $\vb{x}$.
	Given the matrix $A$ (whose elements are accessed by an oracle) and the state $\ket*{b}$, output a state $\ket*{\widetilde{x}}$ such that $\norm{\ket*{\widetilde{x}}-\ket*{x}}_2\le\epsilon$ with some probability larger than $\frac{1}{2}$. Note that in practice, we will introduce a `flag' qubit which will determine whether or not this process has been successful.
\end{definition}

Note that the case of systems with no solution, i.e. when $A$ isn't invertible, has not been considered. Indeed, the unit determinant condition precludes this. Furthermore, for all the algorithms we will discuss, a solution close to $\ket*{(A')^{-1} \vb{b}}$ is returned, where $A'$ is the \emph{well-conditioned} component of $A$, that is, the projection of $A$ onto the subspace associated to eigenvalues that are sufficiently large by some criterion, which we discuss later on (section~\ref{subsubsec:details}).

Although the QLSP and LSP problems are similar, these are nonetheless two distinct problems. In particular, the HHL algorithm considers the question of solving QLSP, which has been proven to be a useful subroutine in other quantum algorithms, see e.g.~\cite{rebentrost2014quantum,ashley_finite_element}.

Importantly, solving QLSP has a number of caveats as compared with solving LSP. The main difference is the requirement that both the input and output are given as quantum states. 
This means that any efficient algorithm for QLSP (for whichever definition of `efficient' one is concerned with) requires \emph{i)} an `efficient' preparation of $\ket*{b}$ and \emph{ii)} `efficient' readout of $\ket*{x}$, both of which are non-trivial tasks. 

In the quantum linear systems algorithm literature, `efficient' is taken to be `polylogarithmic in the system size $N$'. We can immediately see how this is problematic if we wish to read out the elements of $\ket*{x}$, since we require time $\Ord{N}$ for this.
Thus, a solution to QLSP must be used as a \emph{subroutine} in an application where \emph{samples} from the vector $\vb{x}$ are useful. 
More extensive discussion can be found in~\cite{aaronson2015read}.

\subsection{The HHL algorithm}\label{subsec:HHL}
In the following, we first present a summary of the HHL algorithm in section \ref{subsubsec:summary}. Then, in section \ref{subsubsec:details}, we delve into the details of the algorithm.

\subsubsection{Algorithm summary} \label{subsubsec:summary}

The HHL algorithm proceeds in the following three steps: first with phase estimation, followed by a controlled rotation and finally uncomputation.

Let $A=\sum_j \lambda_j |u_j \rangle \langle u_j|$, and let us first consider the case when the input state is one of the eigenvectors of $A$, $\ket*{b} = \ket*{u_j}$, .
As seen in section \ref{subsec:QFT}, given a unitary operator $U$ with eigenstates $\ket*{u_j}$ and corresponding complex eigenvalues $ e^{ i \varphi_j}$, the technique of quantum phase estimation allows for the following mapping to be implemented:
\begin{equation}
\ket*{0}\ket*{u_j} \mapsto \ket*{\tilde{\varphi}} \ket*{u_j},
\end{equation}
where $\tilde{\varphi}$ is the binary representation of $\varphi$ to a certain precision. In the case of a Hermitian matrix $A$, with eigenstates $\ket*{u_j}$ and corresponding eigenvalues $\lambda_j$, we have that the matrix $\exp( i A t)$ is unitary, with eigenvalues $ \exp (i \lambda_j t)$ and eigenstates $\ket*{u_j}$.
Thus, the technique of phase estimation can be applied to the matrix $\exp( i A t)$ in order to implement the mapping:
\begin{equation}
\ket*{0}\ket*{u_j} \mapsto \ket*{\tilde{\lambda}_j}\ket*{u_j},
\end{equation}
where $\tilde{\lambda}_j$ is the binary representation of $\lambda_j$ to a tolerated precision. 

The second step of the algorithm implements a controlled rotation conditioned on $\ket*{\tilde{\lambda}_j}$. In order to do this, a third ancilla register is added to the system in state $\ket*{0}$, and performing the controlled $\sigma_y$-rotation produces a normalised state of the form
\begin{equation}
\sqrt{1-\frac{C^2}{\tilde{\lambda^2_j}}}\ket*{\tilde{\lambda}_j}\ket*{u_j}\ket*{0}+ 	\frac{C}{\tilde{\lambda_j}}\ket*{\tilde{\lambda}_j}\ket*{u_j}\ket*{1},
\end{equation}
\noindent where $C$ is a constant of normalisation.
As seen in Lemma~\ref{lem:cosrotation}, this can be achieved through the application of the operator
	\begin{equation}
    \exp (- i \theta \sigma_y)=
	\left(
	\begin{array}{cc}
	\cos (\theta) & - \sin(\theta) \\
	\sin(\theta) & \cos(\theta)
	\end{array}
	\right),
	\end{equation}
where we have that $\theta = \arccos (C/ \tilde{\lambda})$.


By definition, we have that $A=\sum_j \lambda_j |u_j \rangle \langle u_j|$, and so its inverse is  given by $A^{-1}=\sum_j \frac{1}{\lambda_j} |u_j \rangle \langle u_j|$.
Next, from the definition of the QLSP,  we assume we are given the quantum state $|b\rangle = \sum_i b_i |i\rangle$. This state can be expressed in the eigenbasis $\{|u_j\rangle\}$ of operator $A$, i.e.\ $|b\rangle =\sum_j \beta_j |u_j\rangle $. 
So, enacting the procedure described above on the superposition $|b\rangle =\sum_j \beta_j |u_j\rangle $ we get the state

\begin{equation}
\sum^N_{j=1}\beta_j \ket*{\tilde{\lambda}_j}\ket*{u_j}\left( \sqrt{1-\frac{C^2}{\tilde{\lambda}^2_j}} \ket*{0}+\frac{C}{\tilde{\lambda}_j}\ket*{1}\right).
\end{equation}
We uncompute the first register, giving us

\begin{equation}
\ket*{0} \otimes \sum^N_{j=1}\beta_j \ket*{u_j}\left( \sqrt{1-\frac{C^2}{\tilde{\lambda}^2_j}} \ket*{0}+\frac{C}{\tilde{\lambda}_j}\ket*{1}\right).
\end{equation}

Now notice that $A^{-1}\ket*{b} = \sum^N_{j=1}\frac{\beta_j}{\tilde{\lambda}_j}\ket*{u_j}$.
Thus, the quantum state $|x\rangle=A^{-1} |b\rangle$ (or more precisely, a state close to $\ket*{x}$) can be constructed in the second register by measuring the third register and postselecting on the outcome `1', modulo the constant factor of normalisation $C$.
Later, we will use amplitude amplification at this step to boost the success probability instead of simply measuring and postselecting.

This entire process made of three consecutive steps---phase estimation, controlled rotation and uncomputation---is illustrated in Figure~\ref{fig:hhlalgocircuit}.

\begin{figure}[h!]
	\centering
	\includegraphics[width=0.85\textwidth]{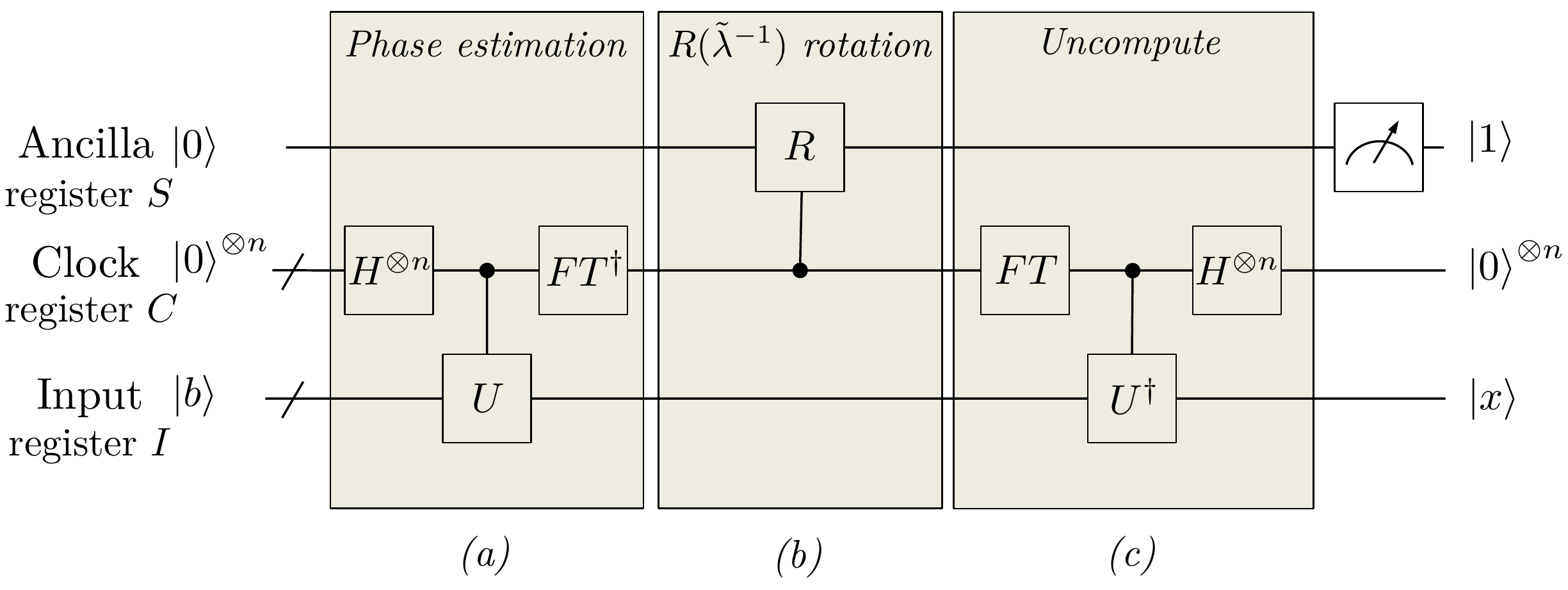}
	\caption{\emph{HHL Algorithm Schematic:} Broadly, the algorithm proceeds in three main steps.
		\emph{(a)} Phase estimation is used to estimate the eigenvalues of $A$, using $U = \sum^{T-1}_{k=0}\ketbra*{k}^C\otimes e^{ i A k t_0/T}$, where $T=2^t$, with $t$ the number of qubits in the clock register $C$ and $t_0=2\pi$.
		After this stage the state of the computation is $ (\sum^N_{j=1}\beta_j\ket*{u_j}^R\ket*{\tilde{\lambda}_j}^I)\otimes\ket*{0}^S$, where the $\ket*{u_j}$ are the eigenvectors of $A$, the $\beta_j$ are $\ket*{b}$'s representation in this basis and the $\ket*{\tilde{\lambda}_j}$ are the binary representations of the eigenvalues of $A$.
		\emph{(b)} The controlled $R(\tilde{\lambda}^{-1})$ rotation extracts the eigenvalues of $A^{-1}$ and executes a $\sigma_y$ rotation, conditioned on $\tilde{\lambda}_j$, leaving the state as $\sum^N_{j=1}\beta_j\ket*{u_j}_I\ket*{\tilde{\lambda}_j}^C( (1-C^2/\tilde{\lambda}_j^2)^{1/2}\ket*{0}+C/\tilde{\lambda}_j\ket*{1})^S$, where $C$ is a normalising constant.
		\emph{(c)} The inverse phase estimation subroutine sets the register to $(\ket*{0}^{\otimes n})^C$ and leaves the remaining state as $\sum^N_{j=1}\beta_j\ket*{u_j}^I( (1-C^2/\tilde{\lambda}_j^2)^{1/2}\ket*{0}+C/\tilde{\lambda}_j\ket*{1})^S$, so postselecting on $\ket*{1}^S$ gives the state $C\sum^N_{j=1}(\beta_j/\tilde{\lambda}_j)\ket*{u_j}^I$, which is proportional to $\ket*{x}^I$.}
	\label{fig:hhlalgocircuit}
\end{figure}

We now consider the run time of the HHL algorithm, and compare it with its classical counterpart.
\begin{definition}\label{def:cn}\emph{(Condition number)}
The condition number of $A$ is the ratio of its largest to smallest eigenvalue, i.e. $\kappa=\lambda_{\text{max}}/ \lambda_{\text{min}}$.
\end{definition}
The best general purpose classical matrix-inversion algorithm, the conjugate-gradient method \cite{Shewchuck1994}, runs with $O\qty(N s \kappa \log(1/\epsilon))$, where $s$ is the matrix sparsity, $\kappa$ the condition number and $\varepsilon$ the precision. In contrast, the HHL algorithm scales as $O\qty(\log(N) s^2 \kappa^2 / \epsilon)$, and is thus exponentially faster in $N$, but linearly slower in sparsity $s$ and condition number $\kappa$. 
In particular, we have that the HHL scales exponentially worse in the precision $\epsilon$, a slowdown which was subsequently eliminated by
Childs \emph{et al.} $\epsilon$~\cite{childs2015quantum} and which is discussed further in section~\ref{subsec:exp. improved}.

One can ask if there might exist an even more efficient classical algorithm. In~\cite{harrow2009quantum}, it is established that this is highly unlikely, as matrix inversion can be shown to be $\mathsf{BQP}$-complete, see section~\ref{subsec:optimality}. More precisely, they show that a classical poly$(\log N, \kappa, 1/\epsilon)$ algorithm would be able to simulate a poly$(N)$-gate quantum circuit in poly$(N)$ time, a scenario which is generally understood to be implausible.
 \par
Finally, we note that it is assumed that the state $\ket*{b}$ can be efficiently constructed, i.e., in polylogarithmic time. Efficient state preparation was previously discussed in section \ref{subsec:qram}, and is an important step in the computational process, with the potential to dramatically slow-down an algorithm.

\subsubsection{Algorithm details}\label{subsubsec:details}
Let us first detail the HHL algorithm rigorously, and then proceed with the analysis. The pseudo-code for the HHL algorithm is given in Algorithm~\ref{alg:HHL}.
\begin{algorithm}[!htbp]
	\caption{HHL algorithm for QLSP.}
	\label{alg:HHL}
	\begin{itemize}
		\item[] \textbf{Input:} State vector $\ket*{b}$, matrix $A$ with oracle access to its elements.
		Parameters $t_0 = \Ord{\kappa /\epsilon}$, $T = \widetilde{\mathcal{O}}(\log(N) s^2 t_0)$, $\epsilon$ is desired precision.
        \item[] $\mathcal{A}_{\text{HHL}}(\ket*{b}, A, t_0, T, \epsilon)\  \{$
        \begin{enumerate}
          \item Prepare the input state $\ket*{\varPsi_0}^C \otimes \ket*{b}^I$, where $\ket*{\varPsi_0} = \sqrt{\frac{2}{T}}\sum_{\tau=0}^{T-1}\sin \frac{\pi(\tau+\frac{1}{2})}{T}\ket*{\tau}^C$.
          \item Apply the conditional Hamiltonian evolution $\sum_{\tau=0}^{T-1} \ketbra*{\tau}^C \otimes  e^{ i A \tau t_0/T}$ to the input.
          \item Apply the quantum Fourier transform to the register $C$, denoting the new basis states $\ket*{k}$, for $k\in \{0,\ldots T-1\}$. Define $\tilde{\lambda}:= 2\pi k/ t_0$.
          \item Append an ancilla register, $S$, and apply a controlled rotation on $S$ with $C$ as the control, mapping states $\ket*{\tilde{\lambda}} \mapsto \ket*{h(\tilde{\lambda})}$, with $\ket*{h(\tilde{\lambda})}$ as defined in Eq~\ref{eq:filter_funcs}.
          \item Uncompute garbage in the register $C$.
          \item Measure the register $S$.
          \item $\texttt{if}($ result $=$ `well' $)$ $\{$ \texttt{return} register $I$ $\}$
          \item[] $\texttt{else}\{$ \texttt{goto} step 1. $\}$
          \item[$\}$]
        \end{enumerate}
        \item[] Perform $\Ord{\kappa}$ rounds of amplitude amplification on $\mathcal{A}_{\text{HHL}}(\ket*{b}, A, t_0, T, \epsilon)$.
		\item[] \textbf{Output:} State $\ket*{\tilde{x}}$ such that $\norm{\ket*{\tilde{x}} - \ket*{x}}_2 \leq \epsilon$.
	\end{itemize}
\end{algorithm}

We start with an input register $I$ and a clock register $C$.
The first step of the algorithm is to prepare the state $\ket*{b}^I$. To do so, we simply assume that there exist a unitary operator $B$ and an initial state $\ket*{\text{initial}}$ such that $B\ket*{\text{initial}}^I=\ket*{b}^I$ to perfect accuracy, requiring $T_B$ gates to implement, where $B$ is the qRAM oracle from section~\ref{subsec:qram} and $T_B$ is polylogarithmic in the dimension of $\ket*{b}$. Note that, as previously discussed, this is a delicate step whereby the complexity of state preparation could dwarf any speed-up achieved by the algorithm itself.

Then, the clock register is prepared in the state
\begin{equation}
\ket*{\varPsi_0}=\sqrt{\frac{2}{T}}\sum_{\tau=0}^{T-1}\sin \frac{\pi(\tau+\frac{1}{2})}{T}\ket*{\tau}^C,
\end{equation}
which  can be prepared up to error $\epsilon_{\varPsi}$ in time $\poly{\log(T/\epsilon_{\varPsi})}$ \cite{Grover2002}. The time $T = \widetilde{\mathcal{O}}(\log (N) s^2 t_0)$, corresponds to the number of computational steps required to simulate $e^{iAt}$ for some time $0 \leq t\leq t_0$ when $A$ is $s$-sparse (see section~\ref{subsec:ham_sim}) and $t_0 = \Ord{\kappa/\epsilon}$.
The quotient $t_0/T$ is the step size of the simulation.
\par
Next, the conditional Hamiltonian evolution $\sum_{\tau=0}^{T-1} \ketbra*{\tau}^C \otimes  e^{ i A \tau t_0/T}$ is applied to the input state $\ket*{\varPsi_0}^C\otimes \ket*{b}^I$ using the Hamiltonian simulation techniques described in section~\ref{subsec:ham_sim}.
By conditional Hamiltionian simulation we mean that the length of the simulation is conditioned on the value of the clock register $\ket*{t}^C$.
The parameter $t_0$ is chosen to achieve the desired error bound, which we further discuss in section~\ref{subsec:hhl_err_analysis}. It can be easily verified that this results in the state
\begin{equation}\label{eq:hhlphasecalc1}
\sqrt{\frac{2}{T}}\sum^N_{j=1}\beta_j \qty(\sum^{T-1}_{\tau=0}  e^{\frac{ i \lambda_j t_0 \tau}{T}}  \sin \frac{\pi(\tau+\frac{1}{2})}{T} \ket*{\tau}^C)\ket*{u_j}^I,
\end{equation}
where $\lambda_j$, $\ket*{u_j}$ are the $j$\textsuperscript{th} eigenvalue and eigenvector of $A$ respectively.
Now,  the state of the first qubit, i.e. the bracketed part of Eq.~\eqref{eq:hhlphasecalc1}, is expressed in the Fourier basis $\ket*{k}$ by taking the inner product with the state $\frac{1}{\sqrt{T}}\sum^{T-1}_{k=0} e^{-\frac{2\pi i k \tau}{T}}\ket*{k}$, see section \ref{subsubsec:QFT}. This leads to the state:
\begin{equation}
\sum^N_{j=1}\beta_j\sum^{T-1}_{k=0}\qty(\frac{\sqrt{2}}{T}\sum^{T-1}_{\tau=0}  e^{\frac{ i \tau}{T}(\lambda_j t_0 - 2\pi k)}  \sin \frac{\pi(\tau+\frac{1}{2})}{T})\ket*{k}^C \ket*{u_j}^I :=\sum^N_{j=1}\beta_j\sum^{T-1}_{k=0}\alpha_{k|j}\ket*{k}^C \ket*{u_j}^I,
\end{equation}
where we have defined the coefficient $\alpha_{k|j}=\qty(\frac{\sqrt{2}}{T}\sum^{T-1}_{\tau=0}  e^{\frac{ i \tau}{T}(\lambda_j t_0 - 2\pi k)}  \sin \frac{\pi(\tau+\frac{1}{2})}{T})$. 

Now, let $\delta:=\lambda_j t_0 - 2\pi k$. The goal is now to derive the following upper bound for the coefficients: $\abs{\alpha_{k|j}}^2 \leq 64\pi^2/\delta^2$ when $\abs{k-\lambda_j t_0 /2\pi}\geq 1$, the full calculation of which can be found in~\cite[Appendix A]{harrow2009quantum}. Here, we discuss the key steps from the proof. First, the identity $2i\sin x=e^{ix}-e^{-ix}$ is applied, giving
\begin{equation}
\alpha_{k|j}=\frac{1}{i\sqrt{2}T} \sum_{\tau=0}^{T-1} \left( e^{\frac{i\pi}{2T}}e^{i\tau \frac{\delta + \pi}{T}}-e^{-\frac{i\pi}{2T}}e^{i\tau \frac{\delta - \pi}{T}} \right) .
\end{equation}
This can then be identified as a geometric sequence, and we can thus apply the well-known expression for the sum of the first $T$ terms, $S_T=\frac{a(1-r^T)}{1-r}$, where $a$ is the first term and $r$ the common ratio. By rearranging, and using the identity $2\cos x=e^{ix}+ e^{-ix}$, we finally have
\begin{equation}\label{eq:hl_pe_calc}
	\alpha_{k|j} = e^{i \frac{\delta}{2} (1- \frac{1}{T}) } \frac{\sqrt{2} \cos(\frac{\delta}{2}) }{ T } \cdot
	\frac{ 2 \cos(\frac{ \delta }{ 2T }) \sin(\frac{ \pi }{ 2T }) }{ \sin(\frac{ \delta + \pi }{ 2T }) \sin(\frac{ \delta - \pi  }{ 2T })  }.
\end{equation}

We can take $\delta \geq 2\pi$ since we want a result for the case when $\abs{k-\lambda_j t_0 /2\pi}\geq 1$.
Also, $T$ is sufficiently large so that $\delta \leq T/ 10$.
Taking note that $\alpha - \alpha^3/6 \leq \sin \alpha$ for small $\alpha$ in the denominator of~\ref{eq:hl_pe_calc}, $\sin \alpha \leq \alpha $ in the numerator and $\cos \alpha \leq 1$ we get that
\begin{equation}
	\abs{\alpha_{k|j}} \leq \frac{ 4\pi \sqrt{2} }{ (\delta^2 - \pi^2) ( 1 - \frac{\delta^2 + \pi^2}{3T^2} ) } \leq \frac{ 4\pi \sqrt{2} }{ (\delta^2 - (\delta / 2)^2 )  ( 1 - \frac{\delta^2 + (\delta / 2)^2}{3(\delta/10)^2} ) } \leq \frac{4\pi \sqrt{2} }{  \frac{3}{4}\delta^2 (1 - \frac{5}{1200}) } \leq \frac{8\pi}{\delta^2}.
\end{equation}

\noindent Thus, $\abs{\alpha_{k|j}}^2 \leq 64\pi^2/\delta^2$ whenever $\abs{k-\lambda_j t_0 /2\pi}\geq 1$.

We now have that $\abs{\alpha_{k|j}}$ is large if and only if $\lambda_j\approx \frac{2\pi k}{t_0}$.
We can relabel the basis states $\ket*{k}$  by defining $\widetilde{\lambda}_k:=2\pi k/t_0$, which gives 
\begin{equation}\label{eq:Ptilde}
\sum^N_{j=1}\beta_j\sum^{T-1}_{k=0}\alpha_{k|j}\ket*{\widetilde{\lambda}_k}^C \ket*{u_j}^I.
\end{equation}
\par
Next, an additional ancillary register $S$ is adjoined to the state which is used to perform a controlled inversion of the eigenvalues. To do so, the first register storing the eigenvalues will be used.

Previously, in \ref{subsubsec:summary}, we saw how the controlled inversion on the eigenvalues of $A$ was used to apply $A^{-1}$ to the input state.
Here, it is important to consider the numerical stability of the algorithm. For instance, suppose we have a quantity $\mu \in \mathbb{R}$ that is close to zero and we wish to compute $1/\mu$.
Any small change in $\mu$ results in a large change in $1/\mu$ and so we can only reliably calculate $1/\mu$ for sufficiently large $\mu$. Thus, in the context of the HHL algorithm, we would wish to only invert the well conditioned part of the matrix, i.e.\ the eigenvalues that lie in a certain range of values that is large with respect to $1/ \kappa$.
Why do we need this range of values to be large with respect to $\kappa$?
Suppose we have an eigenvalue $\lambda := \epsilon_\kappa / \kappa$ for some $0 < \epsilon_\kappa \ll 1 $ and we invert it, i.e., we have $1/\lambda = \kappa / \epsilon_\kappa$.
A small relative error in $1/\lambda$ will give a result deviating from the true value by many times $\kappa$, the `characteristic' scale of the matrix at hand, $A$. This error would dominate all other terms in the sum $\sum_j (\lambda_j)^{-1} \ketbra*{u_j}$ and so the returned value of $A^{-1}$ would deviate from its true value to an unacceptable degree.

To achieve this, we introduce the \emph{filter functions} $f(\lambda)$, $g(\lambda)$, that act to invert $A$ only on its well-conditioned subspace, that is, the subspace spanned by eigenvectors corresponding to $\lambda\geq 1/\kappa$.
This is so that a small error in $\lambda$ doesn't introduce a large error in $A^{-1}$, as discussed in the previous paragraph.
We require (for the error analysis) that the map implementing the filter functions is Lipschitz continuous, that is, it has bounded derivative.
The controlled rotations in the algorithm (step 4 in $\mathcal{A}_{\text{HHL}}$) are controlled by an angle of the form $\theta = \arccos x$, where $x$ is the output of a filter function on $\lambda$. Any argument to $\arccos(\,\cdot\,)$ needs to lie in the interval $[-1, 1]$, so the image of the filter functions needs to be $[-1,1]$.
The domain of the filter functions is $[\lambda_{min}, \lambda_{max} ]$.
We demand that the filter functions are proportional to $1/\lambda$ in the well-conditioned subspace, to carry out the eigenvalue inversion. 
We are also concerned with intermediary eigenvalues, characterised by $1/\kappa' \leq\lambda\leq 1/\kappa$ for some $\kappa'$, say $\kappa'=2\kappa$, where we want interpolating behaviour.
This interpolating behaviour leads to better numerical stability~\cite{rank_deficient}.

The (not unique) choice of filter functions satisfying all of these desiderata chosen by the authors of \cite{harrow2009quantum} are: 
\begin{equation}\label{eq:HHLfiltfunctions}
f(\lambda) =
\begin{cases}
\frac{1}{2\kappa\lambda}, & \lambda \geq 1/\kappa;\\
\frac{1}{2}\sin(\frac{\pi}{2}\cdot\frac{\lambda-\frac{1}{\kappa'}}{\frac{1}{\kappa}-\frac{1}{\kappa'}}), & \frac{1}{\kappa}>\lambda>\frac{1}{\kappa'};\\
0, & \frac{1}{\kappa'}>\lambda;
\end{cases}
\qq{and}
g(\lambda) =
\begin{cases}
0, & \lambda \geq 1/\kappa;\\
\frac{1}{2}\cos(\frac{\pi}{2}\cdot\frac{\lambda-\frac{1}{\kappa'}}{\frac{1}{\kappa}-\frac{1}{\kappa'}}), & \frac{1}{\kappa}>\lambda>\frac{1}{\kappa'};\\
\frac{1}{2}, & \frac{1}{\kappa'}>\lambda.
\end{cases}
\end{equation}
\par

\noindent Notice that this step introduces a $\kappa$ dependency into the algorithm.

After the controlled rotation the register $S$ is then in the state:
\begin{equation}\label{eq:filter_funcs}
\ket*{h(\widetilde{\lambda}_k)}^S := \sqrt{1-f(\widetilde{\lambda}_k)^2 - g(\widetilde{\lambda}_k)^2} \ket*{\text{nothing}}^S + f(\widetilde{\lambda}_k)\ket*{\text{well}}^S + g(\widetilde{\lambda}_k)\ket*{\text{ill}}^S.
\end{equation}
for functions $f(\lambda),g(\lambda)$ which are defined above in Eq.~\eqref{eq:HHLfiltfunctions}. The flag `nothing' corresponds to no inversion taking place, `well' means it has, and `ill' indicates the part of $\ket*{b}$ in the ill-conditioned subspace of $A$.
\par
After the filter functions are applied, the phase estimation procedure is reversed, uncomputing garbage qubits in the process, a technique we discussed in section~\ref{subsec:uncompute}. We denote by $U_{\text{invert}}$ the procedure which we have described up until now.
Applying $U_{\text{invert}}$ to $\ket*{b}$, then measuring $S$ with the outcome `well', returns the state $\ket*{\widetilde{x}}$ with success probability $\widetilde{p} = \Ord{1/\kappa^2}$, which we shall derive in section~\ref{subsec:hhl_err_analysis}.
Thus by the amplitude amplification lemma (Lemma~\ref{lemma:amplitude_amplification}), we have arbitrary success probability with a number $\Ord{1/\sqrt{\tilde{p}}} = \Ord{\kappa}$ of repetitions.
\par
Previously, we saw that preparation of the state $|b\rangle$ takes 
$\widetilde{O}(T_B)$ and running the quantum simulation $\widetilde{O}(t_0 s^2 \log N)$. Thus, the total algorithm run time is $\widetilde{O}(\kappa(T_B+t_0 s^2 \log N))$, the factor $\kappa$ is due to amplitude amplification. Since we have that  $t_0=\Ord{\kappa/\epsilon}$, the runtime can be written as $\widetilde{O}(\kappa T_B + \kappa^2 s^2 \log(N)/\epsilon)$. We will now give a more detailed error analysis.

\subsection{Error analysis}\label{subsec:hhl_err_analysis}
Now, we consider the error analysis of the HHL algorithm. First, we present a short intuitive error analysis, before then discussing the main components of the full error analysis presented in the original paper. 
\subsubsection{A short analysis}
Having introduced the filter functions, we can now use them to bound the error of the computation. Recall that these allow us to invert only the
well-conditioned part of the matrix while it flags the ill-conditioned eigenvalues and
interpolates between these two behaviors when $1/\kappa' < |\lambda| < 1/\kappa$, where we had that $\kappa'=2\kappa$. We therefore only invert eigenvalues which
are larger than $1/\kappa'$. 
We will need the following lemma:
\begin{lemma}[Filter functions are $\Ord{\kappa}$-Lipschitz\cite{harrow2009quantum}]
\label{lem:lipschitz}
The map $\lambda \rightarrow \ket*{h(\lambda)}$ is $\Ord{\kappa}$-Lipschitz, i.e.\ for all eigenvalues $\lambda_i \neq \lambda_j$:
\begin{align}
\norm{ \ket*{h(\lambda_i)} - \ket*{h(\lambda_j)} }_2  = \sqrt{2(1- Re \langle h(\lambda_i) |h(\lambda_j)\rangle )}
\leq c \kappa | \lambda_i - \lambda_j|,
\end{align}
for some constant $c = \Ord{1}$.
\end{lemma}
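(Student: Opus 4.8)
The plan is to reduce the statement to a uniform bound on the derivative of a real, three-component vector-valued function, and then integrate.

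\emph{Step 1 (the equality).} Both $\ket*{h(\lambda_i)}$ and $\ket*{h(\lambda_j)}$ are real unit vectors, since their coefficients $\sqrt{1-f(\lambda)^2-g(\lambda)^2}$, $f(\lambda)$, $g(\lambda)$ are real (Eq.~\eqref{eq:filter_funcs}). Expanding the square, $\norm{\ket*{h(\lambda_i)}-\ket*{h(\lambda_j)}}_2^2 = \braket*{h(\lambda_i)}{h(\lambda_i)} + \braket*{h(\lambda_j)}{h(\lambda_j)} - 2\,\mathrm{Re}\braket*{h(\lambda_i)}{h(\lambda_j)} = 2\bigl(1-\mathrm{Re}\braket*{h(\lambda_i)}{h(\lambda_j)}\bigr)$, which is the first equality. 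It therefore remains to prove the inequality.

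\emph{Step 2 (a derivative bound, piecewise).} Regard $h$ as the map $\lambda \mapsto \bigl(\sqrt{1-f(\lambda)^2-g(\lambda)^2},\, f(\lambda),\, g(\lambda)\bigr) \in \mathbb{R}^3$ on $[\lambda_{\min},\lambda_{\max}]$. From Eq.~\eqref{eq:HHLfiltfunctions} this map is $C^1$ on each of the three intervals $\lambda\ge 1/\kappa$, $1/\kappa'<\lambda<1/\kappa$, $\lambda<1/\kappa'$, and is continuous at the two breakpoints: at $\lambda=1/\kappa$ the one-sided limits of $(f,g)$ both equal $(\tfrac12,0)$, and at $\lambda=1/\kappa'$ they both equal $(0,\tfrac12)$. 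On each piece I would bound $\norm{h'(\lambda)}_2$: (i) for $\lambda\ge1/\kappa$ one has $g\equiv0$, $f=\tfrac{1}{2\kappa\lambda}\le\tfrac12$, hence $|f'|=\tfrac{1}{2\kappa\lambda^2}\le\kappa/2$, and the first component has derivative $-ff'/\sqrt{1-f^2}$ of magnitude at most $|f'|$, so $\norm{h'}_2=\Ord{\kappa}$; (ii) for $1/\kappa'<\lambda<1/\kappa$ (with $\kappa'=2\kappa$) we get $f^2+g^2\equiv\tfrac14$, so the first component is constant and, writing $f=\tfrac12\sin(\tfrac\pi2 u)$, $g=\tfrac12\cos(\tfrac\pi2 u)$ with $u=(\lambda-\tfrac1{\kappa'})/(\tfrac1\kappa-\tfrac1{\kappa'})$ so that $du/d\lambda=2\kappa$, a direct computation gives $(f')^2+(g')^2=(\pi\kappa/2)^2$, hence $\norm{h'}_2=\pi\kappa/2$; (iii) for $\lambda<1/\kappa'$ the map is constant, so $\norm{h'}_2=0$. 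Thus $\sup_\lambda\norm{h'(\lambda)}_2\le c\kappa$ for an absolute constant $c$ (e.g.\ $c=\pi/2$ works for all $\kappa\ge1$).

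\emph{Step 3 (integrate).} Assume without loss of generality $\lambda_i<\lambda_j$. Partition $[\lambda_i,\lambda_j]$ at whichever of the breakpoints $1/\kappa'$, $1/\kappa$ it contains; on each resulting subinterval $h$ is $C^1$, so the fundamental theorem of calculus applies, and continuity of $h$ glues the endpoint contributions together. This yields
\begin{equation*}
\norm{\ket*{h(\lambda_i)}-\ket*{h(\lambda_j)}}_2 = \norm*{\int_{\lambda_i}^{\lambda_j} h'(\lambda)\,d\lambda}_2 \le \int_{\lambda_i}^{\lambda_j}\norm{h'(\lambda)}_2\,d\lambda \le c\kappa\,|\lambda_i-\lambda_j|,
\end{equation*}
which is the claim.

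\emph{Anticipated difficulty.} There is no genuinely hard step here — each derivative estimate is a one-line calculation. The only thing needing care is the piecewise bookkeeping: checking continuity of $h$ at the two breakpoints so that the global integral in Step 3 is legitimate, and confirming that $f$ does not develop a large slope as $\lambda\downarrow 1/\kappa$ (it does not, since $|f'|=\tfrac1{2\kappa\lambda^2}\le\kappa/2$ on that interval). Everything else follows from the explicit formulas for $f$ and $g$.
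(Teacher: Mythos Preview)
Your proof is correct and follows exactly the approach the paper indicates (``bounding the derivatives piecewise for the different regimes''), carrying out the computations the paper omits; you even recover the constant $c=\pi/2$ that the paper uses later in the detailed analysis.
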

\noindent
The proof relies on using the filter functions defined in the previous section, and bounding the derivatives piecewise for the different regimes.

In an ideal setting, we would be able to implement the Hamiltonian simulation with negligible error, followed by perfect phase estimation procedure and controlled rotation without postselection (i.e.\ the unitary that performs the inversion of the eigenvalues). Let the operator $Q$ correspond to this error-free operation, given by
\begin{align} \label{equ:error free}
\ket*{\psi} := Q \ket*{b}^I\ket*{0}^S = \sum_i \beta_i \ket*{ u_i}^I \ket*{h(\lambda_i)}^S,
\end{align}
where we had $|b\rangle^I = \sum_i \beta_i |u_i\rangle^I$.

In contrast, let $\tilde{Q}$ be the operation describing the same procedure but where the phase estimation step is erroneous, i.e.\ the eigenvalues are estimated to a certain $\delta$ error each, as discussed in~\ref{subsec:phase est}. Thus, the erroneous operator $\tilde{Q}$ is given by

\begin{align}\label{equ:error}
\ket*{\tilde{\psi}} := \tilde{Q} \ket*{b}^I\ket*{0}^S = \sum_i \beta_i \ket*{ u_i}^I \ket*{h(\tilde{\lambda}_i)}^S.
\end{align}
Indeed, quantum phase estimation, which we summarised with Theorem~\ref{pest} in the phase estimation section, can be expressed in the slightly different notation as the following.

\begin{theorem}[Phase estimation \cite{kitaev1995quantum}]
  \label{pest2}
  Let the unitary $U \ket*{ v_j} = \exp(i \theta_j) \ket*{ v_j}$ with $\theta_j \in [ - \pi ,\pi ]$ for $j \in [n]$. There is a quantum algorithm that transforms $\sum_{j \in [n]} \alpha_j \ket*{ v_j} \to \sum_{j \in [n]} \alpha_j \ket*{ v_j} \ket*{\tilde{\theta}_j}$ such that $|\tilde{\theta_{j}} - \theta_{j} |\leq \delta$ for all $j\in [n]$ with probability $1-1/\poly{n}$ in time $\Ord{T_U \log{(n)} / \delta}$, where $T_U$ defines the time to implement $U$.
\end{theorem}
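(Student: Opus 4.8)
The plan is to assemble the standard phase-estimation circuit from ingredients already developed in sections~\ref{subsec:QFT} and~\ref{subsec:phase est}, and then boost its success probability by repetition and a median. Fix a single eigenvector $\ket*{v_j}$ for intuition; linearity handles the superposition at the end. First I would adjoin $t = \lceil \log(2\pi/\delta)\rceil + c$ ancilla qubits (for a small constant $c$) in the state $\ket*{0}^{\otimes t}$, apply $H^{\otimes t}$ to create the uniform superposition, and apply the controlled powers $U^{2^0}, U^{2^1}, \ldots, U^{2^{t-1}}$ with the $k$-th ancilla as control, exactly as in Figure~\ref{fig:qpe}. Using $U^{2^k}\ket*{v_j} = e^{i 2^k \theta_j}\ket*{v_j}$, the ancilla register becomes $\tfrac{1}{\sqrt{2^t}}\sum_{y=0}^{2^t-1} e^{i \theta_j y}\ket*{y}$; with $\phi_j := (\theta_j \bmod 2\pi)/(2\pi)$ this is the phase-encoded state that $\mathrm{QFT}^\dagger$ decodes, coinciding with $\mathrm{QFT}\ket*{2^t\phi_j}$ when $2^t\phi_j$ is an integer. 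This step uses $\sum_{k=0}^{t-1} 2^k = \Ord{2^t} = \Ord{1/\delta}$ applications of $U$, hence $\Ord{T_U/\delta}$ time.

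Next I would apply $\mathrm{QFT}^\dagger$ to the ancilla register (cost $\Ord{t^2}$ gates, section~\ref{subsubsec: QFT implementation}) and invoke the standard accuracy analysis of phase estimation (\cite[\S5.2]{nielsen2002quantum}): the register becomes a superposition $\sum_k c_{k|j}\ket*{k}$ with $|c_{k|j}|^2$ concentrated on the integer nearest $2^t\phi_j$, so that $|2\pi k/2^t - \theta_j| \le 2\pi\,2^{-t} \le \delta$ with probability at least $4/\pi^2$, and the tail probability of deviating by $\ell$ grid units decaying like $1/\ell^2$. Appending a fresh register and computing $k \mapsto 2\pi k/2^t$ (with the obvious correction on $[\pi,2\pi)$ to land in $[-\pi,\pi]$) via a reversible arithmetic circuit yields $\sum_k c_{k|j}\ket*{k}\ket*{\tilde\theta_{(k)}}$, whose last register carries amplitude weight at least $4/\pi^2$ on values within $\delta$ of $\theta_j$.

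To upgrade this constant success probability to $1-1/\poly{n}$ I would run $r = \Ord{\log n}$ independent copies on disjoint ancilla blocks, obtaining $\sum_{j} \alpha_j \ket*{v_j}\,\ket*{\tilde\theta_j^{(1)}}\cdots\ket*{\tilde\theta_j^{(r)}}$, and then coherently compute $\tilde\theta_j := \mathrm{median}(\tilde\theta_j^{(1)},\ldots,\tilde\theta_j^{(r)})$ into the output register. By the powering lemma the median lies within $\delta$ of $\theta_j$ except on a set of amplitude weight $2^{-\Omega(r)}$, and taking the hidden constant in $r$ large enough makes this at most $1/\poly{n}$ simultaneously for all $j$ by a union bound over the at most $n$ distinct eigenvalues. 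Finally I would uncompute the raw estimates and all phase-estimation garbage using the uncompute trick of section~\ref{subsec:uncompute}, leaving $\sum_j \alpha_j \ket*{v_j}\ket*{\tilde\theta_j}$. The total cost is $r$ phase estimations plus $\poly{t,r}$ post-processing, i.e.\ $\Ord{T_U \log(n)/\delta}$, the QFT and arithmetic being lower-order; this also reproduces Theorem~\ref{pest}.

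The main obstacle is the last conceptual step rather than any single calculation: the bare circuit only produces a \emph{measurement} that is correct with constant probability, whereas the theorem demands a coherent map whose per-eigenvector error is \enquote{all or nothing} up to a $1/\poly{n}$ slack. Making this rigorous requires carrying out the median-of-repetitions amplification at the level of amplitudes rather than probabilities, a union bound over eigenvalues that keeps $r = \Ord{\log n}$, and disentangling the many garbage registers without disturbing the output --- each routine individually standard but needing care in combination. The scaling bookkeeping (the range $\theta_j \in [-\pi,\pi]$ versus the $[0,1)$ phase convention of section~\ref{subsec:QFT}, and the factor $2\pi$ relating $2^{-t}$ to the additive error $\delta$) is routine once a normalisation is fixed.
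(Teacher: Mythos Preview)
The paper does not prove this theorem; it is stated twice (as Theorems~\ref{pest} and~\ref{pest2}) as a known result with a citation to Kitaev and then used as a black box in the error analysis of section~\ref{subsec:hhl_err_analysis}. Section~\ref{subsec:phase est} describes the phase-estimation circuit and motivates the construction, but does not carry out the concentration bound or the amplification to $1-1/\poly{n}$ success probability.

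Your proposal is the standard argument and is essentially correct. The constant-probability guarantee from a single run (the $4/\pi^2$ bound and the $1/\ell^2$ tail from \cite[\S5.2]{nielsen2002quantum}) is textbook, the median-of-$\Ord{\log n}$-repetitions boost is the usual route to $1-1/\poly{n}$, and the union bound over at most $n$ eigenvalues is the right way to make the guarantee hold simultaneously for all $j$. The one genuinely delicate point --- which you already flag --- is that the median must be taken coherently and the raw-estimate registers then disentangled. In practice this is handled either by keeping an (irrelevant) garbage register, since the theorem as stated is slightly informal about the exact output state, or by observing that after the uncompute the resulting state is $1/\poly{n}$-close in norm to the idealised output $\sum_j \alpha_j \ket*{v_j}\ket*{\tilde\theta_j}$; the downstream analysis in section~\ref{subsec:hhl_err_analysis} only ever uses closeness in norm, so either reading suffices.
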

We will use this theorem now to bound the error $\epsilon$ in the final state in terms of the error $\delta$ of the phase estimation procedure.
The original proof is slightly more complicated and takes into account a improved phase estimation scheme, which we will omit here in order to make the steps more understandable.

The goal is to bound the error in $\norm{ \tilde{Q} - Q}_2$ due to the imperfect phase estimation. By choosing an arbitrary but general state $\ket*{ b}$, this is equivalent to bounding the quantity $\norm{ Q \ket*{ b}^I - \tilde{Q} \ket*{b}^I}_2 := \norm{ \ket*{\tilde{\psi}} - \ket*{\psi} }_2$.

The following norm is used as a distance measure between quantum states:
\begin{equation}\label{equ:norm}
\norm{ \ket*{\tilde{\psi}} - \ket*{\psi} }_2 = \sqrt{2 \left(1-Re \braket*{\tilde{\psi}}{\psi} \right)},
\end{equation}
so it suffices
to lower-bound the quantity $Re\braket*{\tilde{\psi}}{\psi} \in [0,1]$. Here, we shall consider the error-free and erroneous states given in Eqs. \eqref{equ:error free} and \eqref{equ:error} respectively.
We have
\begin{align}
Re \braket*{\tilde{\psi}}{\psi} = \sum\limits_{i=1}^N |\beta_i |^2  Re\braket*{h(\tilde{\lambda}_i)}{h(\lambda_i)}^S
\geq \sum\limits_{i=1}^N |\beta_i |^2 \left( 1 - \frac{c^2 \kappa^2 \delta^2}{2} \right),
\end{align}
where we use the $\Ord{\kappa}$-Lipschitz property of Lemma~\ref{lem:lipschitz} and apply the error bound on the eigenvalues from phase estimation, i.e. $| \lambda_i - \tilde{\lambda}_i | \leq \delta $.

Recalling that $\sum\limits_{i=1}^N |\beta_i |^2 = 1$ and $ 0 \leq Re \braket*{\tilde{\psi}}{\psi} \leq 1$, it follows that
\begin{align}
1 - Re \braket*{\tilde{\psi}}{\psi} \leq
1 - \sum\limits_{i=1}^N |\beta_i |^2 \left( 1 - \frac{c^2 \kappa^2 \delta^2}{2} \right) =
\sum\limits_{i=1}^N |\beta_i |^2 \left(\frac{c^2 \kappa^2 \delta^2}{2} \right).
\end{align}
Finally, using again $\sum_i |\beta_i|^2 = 1$, recalling that $c=\Ord{1}$ and observing that the coefficients are independent of $i$, the distance can be bounded as
\begin{align}
\norm{ \ket*{\tilde{\psi}} - \ket*{\psi} }_2 \leq \Ord{\kappa \delta}.
\end{align}
If this error is required to be of $\Ord{\epsilon}$, then we need to take the phase estimation accuracy to be $\delta = \Ord{\frac{\epsilon}{\kappa}}$.
This results in a runtime that scales as $\Ord{\kappa \cdot \text{polylog}(N)/\epsilon}$ assuming the postselection is successful, since $\text{polylog}(N)$ is required for other steps which are added multiplicatively.\\
The original proof contains an improved phase estimation scheme, which we saw above in section \ref{subsubsec:details}, which makes the analysis slightly more complex. In that case, one needs to determine the probability of measuring a correct $2\pi k$
by proving concentration bounds for the amplitudes, and use the bounds to case-wise bound the overlap (Please see Appendix A of~\cite{harrow2009quantum} for details).
For the improved phase estimation scheme we refer the reader to the results of \cite{luis1996optimum} and \cite{buvzek2005optimal}.

The second $\kappa$ dependence in the runtime of the algorithm is introduced through the postselection process in the eigenvalue inversion step. We give now again a simplified analysis for this step, which assumes that all the eigenvalues are in the well-conditioned subspace, that is, the $S$ register is in the state $\ket*{\text{well}}$.

The postselection on the $S$ register is conditioned on being in state $\ket*{\text{well}}$. The probability of success of this step $\tilde{p}$ is at least $1/\kappa^2$, since 
\begin{equation}\label{eq:tilde_p}
	\tilde{p} \geq \sum_{i\,:\, \lambda_i \geq 1/\kappa} |\beta_i|^2 \cdot |1/\kappa \lambda_i|^2 = \Ord{1/\kappa^2}.
\end{equation}
Using amplitude amplification this can be boosted to $1/\kappa$, by increasing the amplitude of the $\ket*{\text{well}}$ subspace. We hence see that a number of $\Ord{\kappa}$ repetitions on average suffices for success.
Combining with the scaling above results in $\Ord{\kappa^2 \cdot \text{polylog}(N)/\epsilon}$ time, since the $\kappa$ dependences combine multiplicatively.

\subsubsection{Detailed analysis}

First, we introduce the operator $\tilde{P}$ which implements the first three steps of the algorithm (i.e. up until line 3 in $\mathcal{A}_{\text{HHL}}$, Algorithm~\ref{alg:HHL}) and is given by
\begin{equation}
\tilde{P}=\sum_j |u_j \rangle \langle u_j|^I \otimes \sum_k \alpha_{k|j}\ketbra*{k}^C |\text{garbage}(j,k)\rangle \langle \text{initial}|^S. 
\end{equation}
For context, $\tilde{P}$ is the operator that maps the input state to Eq.~\eqref{eq:Ptilde} and is analogous to the operator $\tilde{Q}$ in the previous section.
Previously, we saw that given an input state $|b\rangle =\sum_j  \beta_j |u_j\rangle^I$, the output would ideally be given by
\begin{equation}
|\psi \rangle = \sum_j \beta_j |u_j\rangle^I |h(\lambda_j)\rangle^S. 
\end{equation}

\noindent
In practice, due to the error from phase estimation, the output state obtained will instead be of the form 
\begin{equation}
|\tilde{\psi} \rangle = \tilde{P}^\dag \sum \beta_j |u_j \rangle^I \sum_k \alpha_{k|j} |k\rangle^C |h(\tilde{\lambda}_k)\rangle^S, 
\end{equation}
where we defined $\tilde{\lambda}_k=\frac{2\pi k}{t_0}$ and $\tilde{P}^\dagger$ enacts the uncomputation procedure.

\noindent
The goal is now to upper bound the quantity $\norm{\ket*{\tilde{\psi}} - \ket*{\psi} }_2 $ which, from Eq.~\eqref{equ:norm}, reduces to determining a lower bound on the fidelity
\begin{equation}
\langle \tilde{\psi}|\psi \rangle =\sum_j |\beta_j|^2 \sum_k |\alpha_{k|j}|^2 \langle h(\tilde{\lambda}_k) |h(\lambda_j) \rangle^S .
\end{equation}

The quantity $\langle h(\tilde{\lambda}_k) |h(\lambda_j) \rangle^S $ can be interpreted as a random variable distributed according to the probability distribution $\operatorname{Pr}(j,k)=|\beta_j|^2 |\alpha_{k|j}|^2$.
Thus, we can write that $ Re \langle \tilde{\psi}|\psi \rangle=\mathbb{E}_j \mathbb{E}_k [\langle h(\tilde{\lambda}_k) |h(\lambda_j) \rangle ].$
Applying Lemma \ref{lem:lipschitz} and rearranging, we have that
\begin{equation}
 \langle h(\tilde{\lambda}_k) |h(\lambda_j) \rangle^S \geq 1 - \frac{c^2 \kappa^2\delta^2}{2 t_0^2},
\end{equation}
\noindent
where $c\leq \frac{\pi}{2}$ and where we previously defined $\delta= \lambda_j t_0 -2\pi k=t_0(\lambda_j -\tilde{\lambda}_k)$ and thus $\lambda_j - \tilde{\lambda}_k=\frac{\delta}{t_0}$.
If $\delta \leq 2 \pi$, then we have that
$ \langle h(\tilde{\lambda}_k) |h(\lambda_j) \rangle^S \geq 1 - \frac{2 \pi^2 c^2 \kappa^2}{ t_0^2}$, giving an infidelity contribution of $\frac{2 \pi^2 c^2 \kappa^2}{ t_0^2}$. 
If $\delta > 2\pi$, then we previously saw that $|\alpha_{k|j}|^2\leq 64 \frac{\pi^2 }{\delta^4}$.
Summing up to compute the entire infidelity contribution over all $\lambda_j$ gives
\begin{equation}
 2 \sum_{k=\frac{\lambda_j t_0 }{2\pi }+1}^\infty \frac{64 \pi^2}{\delta^4}\frac{c^2 k^2 \delta^2}{2t_0^2} = \frac{64 \pi^2 c^2 \kappa^2}{t_0^2} \sum_{k=1}^\infty \frac{1}{4\pi^2 k^2}= \frac{8\pi^2 c^2}{3}\frac{\kappa^2}{t_0^2}.
\end{equation}
\noindent
Thus, we have that $Re \langle \tilde{\psi }|\psi \rangle \geq 1-\frac{5\pi^2 c^2 \kappa^2}{t_0^2}$, and $\norm{|\tilde{\psi}\rangle - |\psi \rangle }_2 \leq \frac{4\pi c \kappa}{t_0}$.
So, in the absence of postselection the error is bounded by $\mathcal{O}\bigg( \frac{\kappa}{t_0}\bigg)$, that is, $\norm*{P - \tilde{P}}_2 \leq \Ord{ \frac{\kappa}{t_0}}$.

In the introduction to the algorithm (section~\ref{subsubsec:summary}), we saw that in the final step of the algorithm, a computational basis measurement is performed on the state
\begin{equation}
\sum_j \beta_j |u_j\rangle^I \Big( \sqrt{1-\frac{C^2}{\lambda_j^2}} |0\rangle^S + \frac{C}{\lambda_j}|1\rangle^S \Big),
\end{equation}
\noindent
with postselection on measurement outcome `$1$', which thus resulted in the re-normalised state
\begin{equation}
|x\rangle = \frac{1}{\sum_j C^2 |\beta_j|^2 /|\lambda_j|^2} \sum_j \beta_j \frac{C}{\lambda_j}|u_j\rangle^I. 
\end{equation}
 We then introduced filter functions in order to ensure we were exclusively inverting the matrix on its well-conditioned subspace. In the ideal case, given a state $|b\rangle =\sum_j \beta_j |u_j\rangle^I$, the algorithmic procedure would then result in the state 
\begin{equation}
|\phi \rangle = \sum_j \beta_j |u_j\rangle^I |h(\lambda_j)\rangle^S,  
\end{equation}
\noindent
that is, replacing with the expression for $|h(\lambda_j)\rangle^S $ from Eq.~\eqref{eq:filter_funcs}
\begin{equation}
|\phi \rangle = \sum_j \beta_j |u_j\rangle^I \big(\sqrt{1-f(\lambda_j)^2 - g(\lambda_j)^2} |\text{nothing}\rangle^S  + f(\lambda_j) |\text{well}\rangle^S + g(\lambda_j)|\text{ill}\rangle^S \big). 
\end{equation}
\noindent
If we postselect on the inversion occurring, the final state is then given by 
\begin{equation}
|x\rangle = \frac{\sum_j \beta_j |u_j \rangle^I \big( f(\lambda_j )|\text{well} \rangle^S + g(\lambda_j) |\text{ill} \rangle^S  \big)}{\sqrt{p}},
\end{equation}
\noindent
 which occurs with probability $p=\sum_j |\beta_j|^2 (f(\lambda_j)^2+ g(\lambda_j)^2)$. We can choose to interpret this quantity as the random variable $(f(\lambda_j)^2 + g(\lambda_j)^2)$ occurring with probability $\operatorname{Pr}(j)=|\beta_j|^2$, and thus we have $p=\mathbb{E}_j[f(\lambda_j)^2 + g(\lambda_j)^2)]$.

In practice, the phase estimation procedure produces an error, and we have that the state after postselection is instead given by
\begin{equation}
|\tilde{x}\rangle = \frac{\tilde{P}^\dag \sum_j \beta_j |u_j\rangle^I \sum_k \alpha_{k|j} |k\rangle^C (f(\tilde{\lambda}_k) |\text{well}\rangle^S  + g(\tilde{\lambda}_k)|\text{ill}) \rangle^S }{\sqrt{\tilde{p}}},
\end{equation}
\noindent
where $\tilde{p}= \sum_{j,k} |\beta_j|^2 |\alpha_{k|j}|^2 (f(\tilde{\lambda}_k)^2 +g(\tilde{\lambda}_k)^2)$. Once again, we can choose to interpret the quantity $f(\tilde{\lambda}_k)^2 + g(\tilde{\lambda}_k)^2  $ as a random variable occurring with probability Pr$(j,k)=|\beta_j|^2 |\alpha_{j|k}|^2$, and thus write $\tilde{p}=\mathbb{E}_{j,k}[f(\tilde{\lambda})^2+g(\tilde{\lambda})^2]$.

The goal is to now bound the error on the postselected state, that is on $\norm{|x\rangle - |\tilde{x}\rangle}_2$. From Eq.~\eqref{equ:norm}, we can equivalently determine a lower bound on the fidelity $\langle \tilde{x}|x\rangle $. For clarity, the following notation is introduced: for a given value of $j$, let $\lambda := \lambda_j$,  $\tilde{\lambda}:= 2\pi k/t_0$, $f:= f(\lambda)$, $g:= g(\lambda)$, $\tilde{f}:= f(\tilde{\lambda})$ and $\tilde{g}:= g(\tilde{\lambda})$, and recall that $\delta =t_0(\lambda -\tilde{\lambda})$.

Applying these definitions and taking the inner product between the two states, we obtain
\begin{equation}
\langle \tilde{x}|x\rangle =\frac{\sum_{j,k}|\beta_j|^2 |\alpha_{j|k}|^2 ( \tilde{f} + \tilde{g})}{\sqrt{p\tilde{p}}},
\end{equation}
remembering that $\tilde{P}\ket*{x} = \ket*{x}$ from the definitions.
Using the random variable interpretation, we have:
\begin{equation}
\langle \tilde{x}|x\rangle=\frac{\mathbb{E}[\tilde{f}f+\tilde{g}g]}{\sqrt{p\tilde{p}}}.
\end{equation}
\noindent
From the property that the sum of expectations is the expectation of the sum, this can be re-expressed as:
\begin{equation}
\langle \tilde{x}|x\rangle=\frac{1+\frac{\mathbb{E}[(\tilde{f}-f)f + (\tilde{g}-g)g]}{p}}{\sqrt{1+\frac{\tilde{p}-p}{p}}}.
\end{equation}
\noindent
The denominator can be expanded as a Taylor series, yielding
\begin{equation}
\langle \tilde{x}|x\rangle \geq \Bigg( 1 + \frac{\mathbb{E}[(\tilde{f}-f)f + (\tilde{g}-g)g]}{p} \Bigg)\bigg(1 - \frac{1}{2}\frac{\tilde{p}-p}{p}\bigg). 
\end{equation}
\noindent
From the expression of $p$ and $\tilde{p}$, we have that:
\begin{equation}
\tilde{p}-p=\mathbb{E}[(\tilde{f}-f)^2] + \mathbb{E}[(\tilde{g}-g)^2].
\end{equation}
\noindent
Applying some algebra and the fact that the sum of expectations is expectation of sum, we have
\begin{equation}
\tilde{p}-p=2\mathbb{E}[(\tilde{f}-f)f] +2\mathbb{E}[(\tilde{g}-g)g] + \mathbb{E}[(\tilde{f}-f)^2]+ \mathbb{E}[(\tilde{g}-g)^2].
\end{equation}
\noindent
Substituting into the fidelity, we thus have
\begin{equation}
\langle \tilde{x}|x\rangle \geq 1- \frac{\mathbb{E}[(\tilde{f}-f)^2 + (\tilde{g}-g)^2]}{2p} - \frac{\mathbb{E}[(\tilde{f}-f)f + (\tilde{g}-g)g]}{p} \frac{\tilde{p}-p}{2p}.
\end{equation}
Next, we need a lower bound for both the second and third term of this equation. To do so, we introduce the following lemma:
\begin{lemma}[Errors on filter functions]\label{lem:lem6}
Let $f$, $\tilde{f}$, $g$, $\tilde{g}$ be defined as above with $\kappa '=2\kappa$. Then
\begin{equation}
 |f-\tilde{f}|^2 + |g-\tilde{g}|^2 \leq c \frac{\kappa^2}{t_0^2} \delta^2 |f^2 +g^2|, 
\end{equation}
where $c=\frac{\pi^2}{2}$.
\end{lemma}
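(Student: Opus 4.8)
The plan is to read the claimed inequality as a statement about the planar curve $\gamma(\mu):=\big(f(\mu),g(\mu)\big)\in\mathbb{R}^2$. Setting $p(\mu):=f(\mu)^2+g(\mu)^2$ and using $\delta=t_0(\lambda-\tilde\lambda)$, so that $\delta^2/t_0^2=(\lambda-\tilde\lambda)^2$, the assertion is exactly
\begin{equation}
\norm{\gamma(\lambda)-\gamma(\tilde\lambda)}_2^2\;\le\;\tfrac{\pi^2}{2}\,\kappa^2(\lambda-\tilde\lambda)^2\,p(\lambda).
\end{equation}
The first step is to evaluate $p$ explicitly from the piecewise formulas of Eq.~\eqref{eq:HHLfiltfunctions} with $\kappa'=2\kappa$: one finds $p(\mu)=\tfrac14$ for $\mu\le 1/\kappa$ — it equals $\tfrac14$ on the constant branch $\mu<1/2\kappa$ and, since $\sin^2+\cos^2=1$, also on the sinusoidal bridge $1/2\kappa<\mu<1/\kappa$ — whereas $p(\mu)=\tfrac{1}{4\kappa^2\mu^2}$ for $\mu\ge 1/\kappa$. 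One also checks that $\gamma$ is continuous at the two breakpoints $1/2\kappa$ and $1/\kappa$. The proof is then a case analysis according to which of the three intervals $(0,1/2\kappa)$, $(1/2\kappa,1/\kappa)$, $[1/\kappa,\infty)$ contain $\lambda$ and $\tilde\lambda$.

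The ``same-branch'' cases are elementary. If both lie in the constant branch, the left-hand side is $0$. If both lie in the sinusoidal bridge, I write $\gamma(\mu)=\tfrac12(\sin\theta(\mu),\cos\theta(\mu))$ with $\theta(\mu)=\pi\kappa\mu-\tfrac{\pi}{2}$, so $\theta'\equiv\pi\kappa$, and use the identity $\norm{\gamma(\lambda)-\gamma(\tilde\lambda)}_2^2=\tfrac12\big(1-\cos(\theta(\lambda)-\theta(\tilde\lambda))\big)$ together with $1-\cos x\le x^2/2$ to obtain a bound of order $\kappa^2(\lambda-\tilde\lambda)^2$; since $p(\lambda)=\tfrac14$ here, this is the required form. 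If both lie in the inversion branch, then $g=\tilde g=0$ and $|f-\tilde f|=\tfrac{1}{2\kappa}\tfrac{|\lambda-\tilde\lambda|}{\lambda\tilde\lambda}$, and the inequality reduces to $\tfrac{1}{\kappa^2\tilde\lambda^2}\le\tfrac{\pi^2}{2}$, which holds with room to spare because $\tilde\lambda\ge 1/\kappa$ and $\kappa\ge 1$. Collecting the constants produced in these three sub-cases gives the stated $c=\tfrac{\pi^2}{2}$, the binding contribution coming from the sinusoidal bridge.

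For the ``cross-branch'' cases, $\lambda$ and $\tilde\lambda$ straddle a breakpoint $b\in\{1/2\kappa,1/\kappa\}$, and I would use the triangle inequality $\norm{\gamma(\lambda)-\gamma(\tilde\lambda)}_2\le\norm{\gamma(\lambda)-\gamma(b)}_2+\norm{\gamma(b)-\gamma(\tilde\lambda)}_2$, applying the same-branch estimates to the two sub-intervals $[\lambda,b]$ and $[b,\tilde\lambda]$, each of which is no longer than $[\lambda,\tilde\lambda]$; continuity of $\gamma$ at $b$ makes this splitting legitimate. The subtlety is that the two sub-estimates naturally carry $p(b)$ (or $p(\tilde\lambda)$) rather than the target's $p(\lambda)$. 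This is reconciled by noting that the lemma is only ever invoked for $\delta\le2\pi$ (see the detailed analysis), i.e.\ $|\lambda-\tilde\lambda|=\delta/t_0\le 2\pi/t_0$, which is far below the breakpoint scale $1/\kappa$ since $t_0=\Ord{\kappa/\epsilon}$; hence $p(\lambda)$, $p(\tilde\lambda)$ and $p(b)$ agree up to a bounded multiplicative factor that is absorbed into the absolute constant.

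The main obstacle is precisely this relative-error feature: the right-hand side is anchored at the \emph{true} eigenvalue $\lambda$, while $p$ is not monotone — flat at $\tfrac14$ below $1/\kappa$, then decaying like $\tfrac{1}{4\kappa^2\mu^2}$ — so $p$ cannot be replaced by $p(\lambda)$ uniformly on the segment $[\lambda,\tilde\lambda]$ without exploiting the smallness of $|\lambda-\tilde\lambda|$ relative to $1/\kappa$. Everything else — the explicit computation of $p$, the trigonometric identity, and the $1-\cos x\le x^2/2$ step that fixes the constant — is routine.
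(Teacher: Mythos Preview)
Your approach---a piecewise case analysis on the three intervals---is exactly what the paper describes; the paper itself does not give a proof, only says ``the proof essentially considers the three intervals on which the filter functions were defined'' and defers to the original HHL paper. So at the level of strategy there is nothing to contrast.

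Two concrete issues in your execution. First, the constant: your sinusoidal-bridge estimate $\norm{\gamma(\lambda)-\gamma(\tilde\lambda)}_2^2\le (\theta-\tilde\theta)^2/4 = \pi^2\kappa^2(\lambda-\tilde\lambda)^2/4$, set against $p(\lambda)=1/4$, forces $c\ge\pi^2$, not $c=\pi^2/2$. Indeed, taking $\lambda,\tilde\lambda$ at the two ends of the bridge ($\theta-\tilde\theta=\pi/2$) gives left-hand side $=1/2$ while the right-hand side with $c=\pi^2/2$ is $\pi^2/32<1/2$; so either the paper's stated constant is a transcription slip or you dropped a factor of two, but in any event the bridge does not pin down $c=\pi^2/2$. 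Second, and more seriously, your cross-branch patch is not a proof of the lemma as stated: you invoke $\delta\le 2\pi$ to make $|\lambda-\tilde\lambda|\ll 1/\kappa$, but in the detailed analysis the lemma is applied \emph{inside an expectation over all $k$}, i.e.\ for arbitrary $\delta$ (the tail $\delta>2\pi$ is controlled by the decay of $|\alpha_{k|j}|^2$, not by restricting the lemma). You cannot import a side-condition from the application to close the lemma itself. The honest route---and the one the original HHL argument takes---is to bound each mixed case directly rather than via triangle-inequality-plus-same-branch, since the latter carries the wrong $p$ whenever $\lambda$ sits on the decaying branch $\lambda\ge 1/\kappa$ and $\tilde\lambda$ does not.
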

\noindent The proof of this lemma essentially considers the three intervals on which the filter functions were defined, and can be found in the original paper. 

Now, by applying Lemma~\ref{lem:lem6} to the second term and substituting with the expression for $p$, we have
\begin{equation}\label{equ:square}
\frac{\mathbb{E}[(\tilde{f}-f)^2 + (\tilde{g}-g)^2]}{2p} \leq \mathcal{O}\bigg( \frac{\kappa^2}{t_0^2} \bigg) \frac{\mathbb{E}[(f^2 +g^2)\delta^2]}{\mathbb{E}[(f^2 +g^2)]},
\end{equation}
which, as $\mathbb{E}[\delta^2] \leq \Ord{1}$, gives
\begin{equation}
 \frac{\mathbb{E}[(\tilde{f}-f)^2 + (\tilde{g}-g)^2]}{2p} \leq \mathcal{O}\bigg( \frac{\kappa^2}{t_0^2} \bigg).
\end{equation}

Next, we consider the second term: first we apply the Cauchy-Schwarz inequality, yielding
\begin{equation}
\frac{\mathbb{E}[(\tilde{f}-f)f + (\tilde{g}-g)g]}{p} \leq \frac{\mathbb{E}[\sqrt{\big((\tilde{f}-f)^2 + (\tilde{g}-g)^2}\big) (f^2 + g^2)]}{p}.
\end{equation}
\noindent
Lemma~\ref{lem:lem6} can be applied to the right hand side, and we obtain
\begin{equation}\label{equ:nsquare}
\frac{\mathbb{E}[\sqrt{\frac{\delta^2 \kappa^2}{t_0^2} (f^2 + g^2)}]}{p}\leq \mathcal{O}\bigg(\frac{\kappa}{t_0} \bigg).
\end{equation}
We now consider the second part of the second term, $\frac{\tilde{p}-p}{p}$. Previously, we saw that
\begin{equation}
\tilde{p}-p=2\mathbb{E}[(\tilde{f}-f)f] +2\mathbb{E}[(\tilde{g}-g)g] + \mathbb{E}[(\tilde{f}-f)^2]+ \mathbb{E}[(\tilde{g}-g)^2].
\end{equation}
 Applying Eq.~\eqref{equ:square} and Eq.~\eqref{equ:nsquare}, we have
\begin{equation}
\frac{|\tilde{p}-p|}{p} \leq \mathcal{O}\bigg(\frac{\kappa }{t_0}\bigg). 
\end{equation}

\noindent We finally obtain
\begin{equation}
\langle \tilde{x}|x\rangle \geq 1-\mathcal{O}\bigg(\frac{\kappa^2}{t_0^2}\bigg).
\end{equation}
\noindent
Here we see why the parameter $t_0 = \Ord{\frac{\kappa}{\epsilon}}$: the quantity $\norm{\ket*{x}- \ket*{\tilde{x}}}_2 \leq \Ord{\epsilon}$ in this case.

In order to successfully perform the postselection step, we need to repeat the algorithm on average $\Ord{\kappa^2}$ times. 
This is because the success probability $\tilde{p} = \Ord{\kappa^2}$ (see Eq.~\eqref{eq:tilde_p})
This additional multiplicative
factor of $\Ord{\kappa^2}$ can be reduced to $\Ord{\kappa}$ using amplitude amplification (see section~\ref{subsec:aa}).

Putting everything together, we have an overall runtime that scales as $\Ord{\kappa^2 \cdot T_U \cdot \text{polylog}(N)/\epsilon}$, since we set $\delta = \Ord{\kappa/ \epsilon}$ in the phase estimation.
Taking into account the runtime of the Hamiltonian simulation $T_U = \Ord{s^2 \log N}$ an upper-bound of the runtime of the HHL is given by $\Ord{s^2 \kappa^2 \cdot \text{polylog}(N) /\epsilon}$.

 \subsection{Matrix inversion is $\BQP$-complete} \label{subsec:red}

The optimality of HHL is shown for a specific definition of the matrix inversion problem, which is defined as follows.
\begin{definition}\label{definition:MatInv}\emph{((Quantum) Matrix Inversion)} An algorithm solves matrix inversion if it has:\vspace{-0.3cm}
	\begin{itemize}
		\item\textbf{Input:} An $\Ord{1}$-sparse matrix $A$ specified using an oracle or via a $\poly{\log(N)}$-time algorithm that returns the non-zero elements of a row.
		\item\textbf{Output:} A bit that equals one with probability $\bra*{x} M\ket*{x}\pm\epsilon$, where $M=\ketbra*{0}\otimes I_{N/2}$ corresponds to measuring the first qubit and $\ket*{x}$ is a normalised state proportional to $A^{-1}\ket*{b}$ for $\ket*{b}=\ket*{0}$.
	\end{itemize}
	We also demand that $A$ be Hermitian and that $\kappa^{-1}I \preceq A \preceq I$.
\end{definition}
Despite this very weak definition for matrix inversion, this task is still classically hard.
Indeed, this problem is shown to be $\BQP$-complete, where $\BQP$ is the class of problems decidable in polynomial time on a quantum computer.
How is this shown?
\par
The authors of~\cite{harrow2009quantum} show that a quantum circuit using $n$ qubits and $T$ gates can be simulated by inverting an $\Ord{1}$-sparse matrix $A$ of dimension $N=O\qty(2^n \kappa)$.
Referring back to table~\ref{tab:LSAQLSAcompare}, we see that for $\kappa = \Ord{T^2}$ the conjugate gradient algorithm can perform this task in polynomial time.
This constitutes a classical algorithm simulating an arbitrary quantum computation in polynomial time, which is widely conjectured to be impossible.

The reduction proceeds as follows: let $\mathcal{C}$ be a quantum circuit acting on $n=\log N$ qubits, applying $T$ two-qubit gates $U_T \cdots U_1$.
The initial state is $\ket*{0}^{\otimes n}$, and the output is given by a measurement of the first qubit.
\par
We now adjoin an ancillary register of dimension $3T$, and define the unitary operator $U$ as
\begin{equation}\label{eq:HHLreductionUnitary}
U=\sum^{T}_{t=1}\ketbra*{t+1}{t}\otimes U_t + \ketbra*{t+T+1}{t+T}\otimes I +\ketbra*{t+2T+1\,\text{mod}\, 3T}{t+2T}\otimes U_{3T+1-t}^\dagger .
\end{equation}
This operator has been chosen such that for $T+1\leq t \leq 2T$, applying $U^t$ to the state $\ket*{1}\ket*{\psi}$ yields the output state $\ket*{t+1}\otimes U_T\cdots U_1 \ket*{\psi}$.
We can see this as the first $T+1$ applications of $U$ return $\ket*{T+2}\otimes U_{T}\cdots U_1 \ket*{\psi}$. We see from the second term of Eq.~\eqref{eq:HHLreductionUnitary} that for the next $t'\leq T-1$ applications the action on the $\ket*{\psi}$ register remains unchanged, while the ancillary variable is merely being incremented.
\par
We can thus now define the operator $A=I-U e^{-1/T}$, which gives $\kappa(A)=\Ord{T}$, by the following.
\begin{lemma}[Condition number of simulation matrix]
	Let $U$ be a unitary matrix, $T > 0$ and $A=I-U e^{-1/T}$.
	Then, the condition number of $A$, $\kappa(A) = \Ord{T}$.
\end{lemma}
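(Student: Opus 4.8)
The plan is to reduce the question about the condition number of $A$ — which for a non-Hermitian matrix means the ratio of its largest to smallest singular value — to a statement about the eigenvalues of $U$. The key observation is that $A = I - e^{-1/T}U$ is a polynomial in $U$, and $U$ is unitary hence normal; therefore $A$ is normal as well, and its singular values coincide with the moduli of its eigenvalues. Writing the spectral decomposition $U = \sum_j e^{i\theta_j}\ketbra*{v_j}$, the eigenvalues of $A$ are $\mu_j = 1 - e^{-1/T}e^{i\theta_j}$ on the same orthonormal eigenbasis, so $\kappa(A) = \max_j |\mu_j| \,/\, \min_j |\mu_j|$, and it suffices to bound $|1 - r e^{i\theta}|$ from above and below uniformly in $\theta$, where $r := e^{-1/T} \in (0,1)$.

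For the upper bound the triangle inequality gives immediately $|1 - r e^{i\theta}| \le 1 + r \le 2$. For the lower bound, expand $|1 - r e^{i\theta}|^2 = 1 - 2r\cos\theta + r^2$, which is minimised at $\theta = 0$, giving $|1 - r e^{i\theta}| \ge 1 - r = 1 - e^{-1/T}$. So $\sigma_{\max}(A) \le 2$ and $\sigma_{\min}(A) \ge 1 - e^{-1/T}$, and everything now rests on a lower estimate for $1 - e^{-1/T}$.

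This last estimate is elementary: from $e^{x} \ge 1 + x$ one gets $e^{-x} \le 1/(1+x)$ for $x > 0$, hence $1 - e^{-x} \ge x/(1+x)$; taking $x = 1/T$ yields $1 - e^{-1/T} \ge \tfrac{1}{T+1}$. Combining, $\kappa(A) \le 2(T+1) = \Ord{T}$ (for $T \ge 1$, say; for bounded $T$ the same bound gives $\kappa(A) = \Ord{1}$, which is the regime that does not arise in the reduction, where $T$ counts circuit gates and is large). This is exactly the claimed scaling.

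There is really no serious obstacle here; the only thing requiring a moment's thought is the first step — recognising that normality of $A$ turns a singular-value question into an eigenvalue question — after which the bounds are one-line trigonometric and exponential inequalities. One should also remark, for the downstream use in HHL, that feeding $A$ into the algorithm proceeds via the Hermitian dilation $\smqty[0 & A^\dagger \\ A & 0]$, whose nonzero eigenvalues are precisely $\pm$ the singular values of $A$; hence its condition number equals $\kappa(A)$ and no factor is lost in the symmetrisation step.
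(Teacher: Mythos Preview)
Your proof is correct and, in fact, more careful than the paper's own argument. Both arrive at the same governing ratio $\dfrac{1+e^{-1/T}}{1-e^{-1/T}}$ (you bound the numerator by $2$, which is equivalent), but the routes differ in two places worth noting. First, the paper invokes the variational characterisation $\lambda_{\max}(A)=\max_{\norm{x}=1}x^{\dagger}Ax$ directly, which is strictly valid only for Hermitian $A$; since $A=I-e^{-1/T}U$ is not Hermitian, that step is loose as written. Your observation that $A$ is a polynomial in the normal operator $U$, hence normal itself, is exactly what makes the eigenvalue picture legitimate: for normal matrices the singular values are the moduli of the eigenvalues, so $\kappa(A)=\max_j|\mu_j|/\min_j|\mu_j|$ is rigorous. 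Second, for the final asymptotic the paper appeals to a Laurent expansion of $2x-\tfrac{1+e^{-1/x}}{1-e^{-1/x}}$ about $x=\infty$, whereas your one-line inequality $1-e^{-1/T}\ge \tfrac{1}{T+1}$ (from $e^{-x}\le 1/(1+x)$) is more elementary and gives an explicit constant. The remark on the Hermitian dilation preserving $\kappa$ is a nice addition and not in the paper's proof.
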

\begin{proof}
	By the definition of condition number $\kappa(A) =$
	\begin{equation}\label{eq:kappaBQP}
		\displaystyle
		\frac{\lambda_{max}(A)}{\lambda_{min}(A)} \leq \frac{\max_{\norm{\vb{x}}_2 = 1} \vb{x}^\dagger ( I-U e^{-1/T} ) \vb{x} }{\min_{\norm{\vb{y}}_2 = 1} \vb{y}^\dagger ( I-U e^{-1/T} ) \vb{y}}
		=
		\frac{1 - \min_{\norm{\vb{x}}_2 = 1} \vb{x}^\dagger U \vb{x} \cdot e^{-1/T} }{1- \max_{\norm{\vb{y}}_2 = 1} \vb{y}^\dagger U \vb{y} \cdot e^{-1/T}}
		=
		\frac{1 - (-1) \cdot e^{-1/T} }{1- (+1) \cdot e^{-1/T}},
	\end{equation}
	since $U$ is unitary.
	Now let $f(x) = 2x - \frac{1 + e^{-1/x} }{1 - e^{-1/x}}$, where $x>0$.
	The Laurent expansion of $f(x)$ around $x = \infty$ is $\Theta(1/x)$, so we have that $\lim_{x \to \infty} f(x) = 0$.
	This means that the upper bound on $\kappa(A)$ in \ref{eq:kappaBQP} asymptotically tends to $2T$ and the result follows.
\end{proof}

The matrix $A=I-U e^{-1/T}$ can be expressed as
\begin{equation}\label{eq:HHLexpaInverse}
A^{-1} = \sum_{t\geq 0} U^t  e^{-k/T}.
\end{equation}
We can see Eq.~\eqref{eq:HHLexpaInverse} holds by multiplying by $A$, then observing all terms in the series apart from $k=0$ cancel.
We can also interpret this as applying $U^t$ for $t$ a exponentially distributed random variable, as in, applying $A$ has the same effect as drawing $t$ according to the exponential distribution with parameter $1/T$ then applying $U^t$.
As $U^{3T}=I$, we can assume $1\leq t\leq 3T$.
Measuring the first register and obtaining $T+1 \leq t\leq 2T$ occurs with probability $ e^{-2}/(1+ e^{-2}+ e^{-4})\geq 1/10$ (taking the amplitude of the time-step $\ket*{t}$ as $e^{-t/T}$ modulo normalisation).
If we obtain this result, the second register is left in the state $U_T\cdots U_1\ket*{\psi}$.
If we draw $t$ from the appropriate exponential distribution and apply $U^t$ many times to $\ket*{1}\ket*{0}$, followed by postselecting on $T+1 \leq t\leq 2T$, sampling the resulting state is equivalent to sampling $\ket*{x}=A^{-1}\ket*{0}$.

\par
For any quantum circuit $\mathcal{C}$ there exists an associated matrix $A_{\mathcal{C}}$ which, when inverted then sampled from, will in a fixed fraction of instances give the same results as sampling from the output of $\mathcal{C}$, Matrix Inversion (as defined in Definition~\ref{definition:MatInv})  is  $\mathsf{BQP}$-complete.

\subsection{Optimality of HHL} \label{subsec:optimality}

In order to simulate a quantum circuit up to some pre-determined accuracy $\epsilon$, the simulations need to be iterated.
That is, at iteration $i$, one uses the technique above to simulate the circuit at iteration $i-1$.
The authors show that if one can solve matrix inversion in time $\kappa^{1-\delta}(\log(N)/\epsilon)^{c_1}$ for constants $c_1\geq 2,\delta > 0$, then a computation with $T\leq 2^{2n}/18$ gates can be simulated with a polynomial number of qubits and gates.
The TQBF (totally quantified Boolean formula satisfiability) problem requires time $T\leq 2^{2n}/18$ by exhaustive enumeration and is $\mathsf{PSPACE}$-complete.
Thus a quantum algorithm for matrix inversion running in time $\kappa^{1-\delta}\poly{\log N}$ could efficiently solve TQBF, thus implying $\mathsf{PSPACE}= \mathsf{BQP}$.

\subsection{Non-Hermitian Matrices} \label{subsec:non-hermitian}

We now consider the case where $A$ is neither square nor Hermitian.
Let us now suppose that we have $A\in\mathbb{C}^{m\times n}$, $\vb{b}\in\mathbb{C}^m$ and $\vb{x}\in\mathbb{C}^n$ with $m\leq n$.
Generically the linear system $A\vb{x} =\vb{b}$ is under-constrained.
The singular value decomposition of $A$ is given by:
\begin{equation}
A=\sum_{j=1}^m \sigma_j \ketbra*{u_j}{v_j},
\end{equation}
where $\ket*{u_j}\in\mathbb{C}^m$, $\ket*{v_j}\in\mathbb{C}^n$ and $\sigma_1\geq\cdots\geq\sigma_n\geq 0$. Next, we can define the operator $H$ as
\begin{equation}\label{eq:LinSys:nonhermitian}
H:=\sum^{m}_{j=1}\sigma_{j}\qty(\ketbra*{0}{1}\otimes\ketbra*{u_j}{v_j}+\ketbra*{1}{0}\otimes\ketbra*{v_j}{u_j})\equiv \mqty(0 & A \\ A^\dagger & 0),
\end{equation}
where we now have that the matrix $H$ is Hermitian with eigenvalues $\pm \sigma_1,\ldots,\pm\sigma_m$, corresponding to eigenvectors $\ket*{w_j^\pm}:=\frac{1}{\sqrt{2}}(\ket*{0}\ket*{u_j}\pm\ket*{1}\ket*{v_j})$.
This can be verified by direct substitution into Eq.~\eqref{eq:LinSys:nonhermitian}.
The matrix $H$ also has $n-m$ zero eigenvalues, with eigenspace $V^\perp$, where $V:=\operatorname{span}(\ket*{w_1^{\pm}},\ldots,\ket*{w_m^{\pm}})$.
\par
The HHL algorithm can now be applied to  the input state $\ket*{0}\ket*{b}$.
If $\ket*{b}=\sum^m_{j=1}\beta_j\ket*{u_j}$ then we have
\begin{equation}
\sum_{j=1}^m \beta_j\frac{1}{\sqrt{2}}\qty(\ket*{w_j^+}+\ket*{w_j^-}),
\end{equation}
and upon running of the algorithm we get
\begin{equation}
H^{-1}\ket*{0}\ket*{b}=\sum^m_{j=1}\beta_j\sigma_1^{-1}\frac{1}{\sqrt{2}}\qty(\ket*{w_j^+}-\ket*{w_j^-})=\sum^m_{j=1}\beta_j\sigma_j^{-1}\ket*{1}\ket*{v_j}.
\end{equation}
If we discard the $\ket*{1}$, we can recognise this expression as the Moore-Penrose pseudo-inverse of $A$ applied to $\ket*{b}$. Thus, on non-square, non-Hermitian $A$, HHL outputs the state $\ket*{x}:=A^+\ket*{b}$.
\par
Now in the over-constrained case where $m\geq n$, the equation $A\ket*{x}=\ket*{b}$ is only satisfiable if $\ket*{b}\in U$, where $U=\operatorname{span}(\ket*{u_1},\ldots,\ket*{u_n})$.
In this case, applying $H$ to $\ket*{0}\ket*{b}$ will return a valid solution.
Otherwise the state $\ket*{x^*}$ is returned, which minimises the squared error loss, i.e. $\ket*{x^*}:=\arg\min_{\ket*{x}\in\mathbb{R}^n}\norm{A\ket*{x}-\ket*{b}}_2^2$, which is again the Moore-Penrose pseudoinverse. Formally, this is expressed in Definition \ref{definition:NM:pinv}.

\begin{definition}[Moore-Penrose pseudoinverse]\label{definition:NM:pinv}
	Let $A\in\mathbb{C}^{m\times n}$ have singular value decomposition given by $A = U \varSigma V^\dagger$ and be of rank $r$, with $U$ and $V$ unitary matrices.
	The \emph{Moore-Penrose pseudoinverse} of $A$ is given by
	\begin{equation}
	A^+ = V\varSigma^+ U^\dagger,
	\end{equation}
	where $\varSigma^+ = \operatorname{diag}(\frac{1}{\sigma_1},\ldots, \frac{1}{\sigma_r},0,\ldots,0)$ for the singular values $\sigma_i, i\in[r]$.
	\par
	Further to this,
	\begin{equation}
	A^+ =
	\begin{cases}
	\qty(A^\dagger A)^{-1}A^\dagger, & \text{if }m\geq n; \\
	A^\dagger \qty(A A^\dagger)^{-1}, & \text{if } m \leq n.
	\end{cases}
	\end{equation}
\end{definition}

\section{Improvements to the HHL}\label{sec: improvements HHL}

In this section we briefly discuss several improvements of the HHL algorithm. The first two of these improvements are in terms of the runtime complexity in the condition number $\kappa$ and the precision $\epsilon$ respectively.
The third and final improvement removes the sparsity requirement, that is a algorithm for QLSP on \emph{dense} input matrices.

Ambainis~\cite{ambainis2010variable} reduced the condition number dependence from $\kappa^2$ to $\kappa \log^3 \kappa$ which is close to the lower bound of $\kappa$ discussed in section~\ref{subsec:optimality}.
This is a consequence of the reduction we demonstrated in section~\ref{subsec:red}.
We discuss this improvement in section~\ref{subsec:variable time}.
Further work by Childs et al.~\cite{childs2015quantum} reduced the precision number dependency of the algorithm from $\Ord{\text{poly}(1/\epsilon)}$ to $\Ord{\text{poly}\log(1/\epsilon)}$, see section~\ref{subsec:exp. improved}.
Wossnig, Zhao and Prakash give an algorithm~\cite{wossnig2017quantum} for QLSP taking time $\Ord{\kappa^2 \sqrt{n} \operatorname{polylog}{n} / \epsilon}$, a polynomial speedup for dense matrices (where $s = \Ord{n}$).
This algorithm is based on a quantum singular value estimation (QSVE) procedure; we discuss this in section~\ref{sec:QSVE}

For the remainder of the text, we depart somewhat from the pedagogical style developed thus far in the text. Specifically, we shall cover the general ideas behind these improvements and describe the new contributions, but refer the reader to the original papers for detailed proofs.

\subsection{Variable time amplitude amplification for improved condition number dependence}\label{subsec:variable time}

The HHL algorithm has two parts which contribute a linear factor of $\kappa$ to the runtime. The first one is the eigenvalue estimation part, which requires a certain precision and hence a repeated number of applications of the Hamiltonian simulation $\exp (-i\hat{H}t)$. If all of the eigenvalues $\lambda_i$ are small, i.e.\ of magnitude $\Omega (1/\kappa)$ as a lower bound, then $| \lambda_i - \tilde{\lambda}_i | = \Omega (\epsilon/\kappa)$, where $\tilde{\cdot}$ denotes the estimate obtained via phase estimation, since we require $| \lambda_i - \tilde{\lambda}_i | \leq \epsilon \tilde{\lambda}_i$, similar to our previous discussion about the quantum Fourier transform and phase estimation (see sections~\ref{subsec:QFT}~and~\ref{subsec:phase est}).

The second contribution of $\kappa$ in the runtime comes from amplitude amplification (AA) and the postselection step.
If all of the eigenvalues $\lambda_i$ are large, let's say constant, then the coefficients $1/ \kappa \lambda_i$ can be of order $1/\kappa$, which in turn results in a number of $\Ord{\kappa}$ repetitions leveraging AA.
If now all eigenvalues have roughly similar order of magnitude, for example let $\lambda \in [a, 2a]$ for some constant $a$, then eigenvalue estimation requires $\Ord{1/a\epsilon}$ time and AA takes $\Ord{a \kappa}$ time, which leads to a total of $\Ord{\kappa/\epsilon}$ runtime and an induced error of $\epsilon$.\\

The main idea to improve the $\kappa$-dependency is that a quantum algorithm can have variable stopping times. Let $\mathcal{A}$ be an quantum algorithm that can stop at $m$ different times $t_1, \ldots, t_m$, with an indicator qubit that is set to $1$ whenever the branch stops in an additional clock register (initialised to `$0$'). More specifically, if it stops at time $t_j$, then it sets the $j$-th qubit to `$1$'. Any subsequent operation can then not affect the stopped branches anymore. In the HHL algorithm this can be seen as runs of the eigenvalue estimation with increasing precision and increasing number of steps, as one can recall that we apply the unitary repeatedly for different powers in the phase estimation step.

In this way we can formalise the above idea of of a variable stopping time algorithm by requiring that $\mathcal A$ can be written as a product of $m$ algorithms $\mathcal{A} = \mathcal{A}_m \mathcal{A}_{m-1} \cdots \mathcal{A}_1$, where each $\mathcal{A}_j$ is a controlled unitary that acts on the register $\mathcal{H}_{C_j} \otimes \mathcal{H}_A$ and controlled by the first $j-1$ qubits of $\mathcal{H}_C$ being set to $0$.
Hence $\mathcal{A}$ acts on $\mathcal{H}_{C} \otimes \mathcal{H}_A$, where $\mathcal{H}_{C} = \bigotimes_j \mathcal{H}_{C_j}$. We have a decomposition of the space into a control register and a target register.

We can also view $\mathcal{A}$ as an algorithm that produces $\ket*{\Psi_{succ}}$ with probability $\tilde{p} = | \alpha_{1} |^2$, i.e.\ we obtain a state
\begin{equation}
\alpha_{0} \ket*{0} \otimes \ket*{\Psi_{fail}} + \alpha_{1} \ket*{1} \otimes \ket*{\Psi_{succ}},
\end{equation}
which results in the desired output state $\ket*{\Psi_{succ}}$ with probability $\tilde{p}$ when we measure the ancillary register.

Ambainis~\cite{ambainis2010variable} introduced the \textit{variable time amplitude amplification} (VTAA) scheme which allows to amplify the success probability of quantum algorithms in which some branches of the computation stop earlier than other branches. Conventional AA stops only once all computational branches have finished, which takes time $\Ord{T_{max}/\sqrt{\tilde{p}}}$ and which can thus lead to a substantial inefficiency in certain cases. VTAA achieves a improved runtime of
\begin{equation}
\Ord{T_{max} \sqrt{\log T_{max}} + \frac{T_{av}}{\sqrt{\tilde{p}}} \log ^{1.5} T_{max} },
\end{equation}
\noindent
where $T_{max}$ is the maximum runtime of a branch of the algorithm, $T_{av}= \sqrt{\sum_i p_i t_i^2}$ is the average time over all the branches and we defined $p_i$ as the probability of the algorithm stopping at time $t_i$. This can thus be substantially faster if $T_{av} \ll T_{max}$.

The high level idea behind this result is that quantum amplitude amplification is  split into a variable runtime algorithm. For each branch the algorithm tests whether the computation has stopped or not, and then applies a conditional operation only on the branches which are not yet finished. If the branch didn't stop, then the algorithm first checks whether the success probability surpasses a certain threshold using amplitude estimation, and depending on the outcome either: uses amplitude amplification to first boost the success probability and then applies the algorithm again recursively; or just applies the algorithm recursively if the success probability is already high enough. In this way the branches which haven't achieved the required accuracy yet are executed for longer periods of time, while other ones can stop early.

Applying this scheme to the quantum linear system algorithm, Ambainis~\cite{ambainis2010variable} reduces the $\kappa$-dependence in the HHL algorithm runtime from $\Ord{\kappa^2}$ to $\Ord{\kappa \log^3 \kappa}$, neglecting all other parameters which essentially remain the same.

\subsection{Exponentially improved precision number dependence} \label{subsec:exp. improved}

Let $k$ be the number of bits of precision we desire for the output of a particular algorithm. The precision number in this case is then related to the number of bits $k$ by $\epsilon = 1/10^k$ (assuming no other sources of error in the output). 
We would like to have an algorithm that grows linearly with the number of bits, that is, for every extra bit of precision added to the representation of our data we want the runtime to increase additively, not multiplicatively.
For this to be satisfied, the runtime dependence on precision needs to be $\Ord{\text{poly}\log(1/\epsilon)}$.

Indeed, the HHL algorithm does not satisfy this desideratum.
However, an algorithm has been devised by Childs, Kothari and Somma that solves QLSP with polylogarithmic dependence on $\epsilon$
\cite{childs2015quantum}. 
The main contribution is a framework that allows one to circumvent the phase estimation procedure which has an inherent $1/\epsilon$ dependency.
Let us recap the HHL algorithm, using the best known result for black box Hamiltonian simulation~\cite{berry2015hamiltonian}.

\begin{theorem}[HHL algorithm] \label{thm:hhl}
  The quantum linear system problem can be solved by a gate-efficient algorithm (i.e. a algorithm with complexity $Q$, the gate complexity is only by logarithmic factors larger than $Q$) that makes $\Ord{\kappa^2 s \poly{\log (s \kappa/\epsilon) / \epsilon}}$ queries to the oracles of $A$ and $\Ord {\kappa s \poly{ \log (s \kappa/\epsilon)} / \epsilon}$ queries to the oracle to prepare the state corresponding to $\mathbf b$. Using a form of qRAM for the data access a multiplicative factor of $\Ord{\log N}$ is added in the runtime of the algorithm.
\end{theorem}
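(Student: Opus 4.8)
The plan is to assemble the statement from pieces already in hand: the procedure $\mathcal{A}_{\text{HHL}}$ of Algorithm~\ref{alg:HHL} together with the operator $U_{\text{invert}}$, the error analysis of Section~\ref{subsec:hhl_err_analysis}, the amplitude amplification Lemma~\ref{lemma:amplitude_amplification}, and---the one new ingredient---the best known black-box sparse Hamiltonian simulation result of~\cite{berry2015hamiltonian} used in place of the Trotter--Suzuki scheme of Section~\ref{subsec:ham_sim}. First I would fix the error budget. By the detailed analysis of Section~\ref{subsec:hhl_err_analysis} (in particular Lemmas~\ref{lem:lipschitz} and~\ref{lem:lem6}), running $U_{\text{invert}}$ with phase-estimation precision $\delta = \Ord{\epsilon/\kappa}$, equivalently clock parameter $t_0 = \Ord{\kappa/\epsilon}$, yields a post-selected state within $\Ord{\epsilon}$ of $\ket*{x}$. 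One splits the target error into a contribution $\Ord{\kappa/t_0}$ from imperfect eigenvalue resolution, a contribution from the $\ket*{b}$-preparation oracle, and a contribution $\epsilon'$ from the Hamiltonian-simulation step, each assigned a constant fraction of $\epsilon$, so that it suffices to take $\epsilon' = \Ord{\epsilon}$, which costs only a $\poly{\log(1/\epsilon)}$ overhead in the simulation subroutine.

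Next I would invoke the simulation bound: for an $s$-sparse $A$ the evolution $e^{-iAt}$ can be implemented to precision $\epsilon'$ with $\Ord{ t\, s\, \poly{\log(ts/\epsilon')}}$ queries to the sparse-access oracle of $A$ and, being gate-efficient, a comparable number of elementary gates. In $\mathcal{A}_{\text{HHL}}$ the conditional evolution $\sum_\tau \ketbra*{\tau}^C \otimes e^{iA\tau t_0/T}$ simulates $A$ for total time $\Ord{t_0} = \Ord{\kappa/\epsilon}$, so one invocation of $U_{\text{invert}}$ costs $\Ord{(\kappa s/\epsilon)\,\poly{\log(s\kappa/\epsilon)}}$ queries to $A$, plus one use of the $\ket*{b}$-preparation oracle, one inverse quantum Fourier transform on $\Ord{\log(\kappa/\epsilon)}$ qubits ($\poly\log$ gates), and the controlled filter-function rotation, which by Lemma~\ref{lem:cosrotation} needs only $\poly\log$ gates. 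Then I would apply Lemma~\ref{lemma:amplitude_amplification}: since the `well' post-selection succeeds with probability $\tilde p = \Omega(1/\kappa^2)$ (Eq.~\eqref{eq:tilde_p}), amplitude amplification wraps $\Ord{1/\sqrt{\tilde p}} = \Ord{\kappa}$ repetitions of $U_{\text{invert}}$ (and its inverse, interleaved with two reflections each using $\poly(\log N)$ gates) around the procedure. Counting uses of each oracle across these $\Ord{\kappa}$ amplified repetitions gives $\Ord{(\kappa^2 s/\epsilon)\,\poly{\log(s\kappa/\epsilon)}}$ queries to the oracle of $A$ and $\Ord{(\kappa s/\epsilon)\,\poly{\log(s\kappa/\epsilon)}}$ queries to the $\ket*{b}$ oracle; since every subroutine (simulation, QFT, controlled rotation, amplification reflections) is gate-efficient, the total gate count exceeds the query count only by logarithmic factors, and routing all oracle calls through the qRAM of Section~\ref{subsec:qram} introduces the extra multiplicative $\Ord{\log N}$.

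\textbf{Main obstacle.} The simulation primitive itself is imported as a black box, so the real work is the bookkeeping: one must verify that replacing the Trotter simulation by the optimal one leaves the error analysis of Section~\ref{subsec:hhl_err_analysis} intact, i.e.\ that a perturbation of $U_{\text{invert}}$ of operator-norm $\Ord{\epsilon'}$ propagates additively through phase estimation, the Lipschitz filter-function step (Lemma~\ref{lem:lipschitz}), the uncomputation, and the post-selection/amplitude-amplification wrapper, so that the final state error remains $\Ord{\epsilon}$ at only a $\poly\log(1/\epsilon)$ cost. A secondary point requiring care is confirming that the \emph{conditional} (controlled-on-the-clock) Hamiltonian evolution can be realised at the same asymptotic query cost as the uncontrolled simulation, and that the $\poly\log$ gate overheads are genuinely dominated by the query cost, which is what justifies the `gate-efficient' qualifier in the statement.
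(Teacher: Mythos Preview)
The paper does not actually prove this theorem: it is stated as a recap of the HHL complexity once one plugs in the optimal sparse Hamiltonian simulation of~\cite{berry2015hamiltonian}, and no derivation is given in the text. Your proposal is therefore not competing with a proof in the paper but supplying one, and the route you take---error analysis of Section~\ref{subsec:hhl_err_analysis} to fix $t_0=\Ord{\kappa/\epsilon}$, black-box simulation cost $\Ord{t_0 s\,\poly{\log(t_0 s/\epsilon')}}$ per run of $U_{\text{invert}}$, then $\Ord{\kappa}$ rounds of amplitude amplification from Eq.~\eqref{eq:tilde_p}---is exactly the standard derivation and is what the paper's surrounding discussion (Sections~\ref{subsubsec:details}--\ref{subsec:hhl_err_analysis}) implicitly sets up.

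There is one internal inconsistency you should fix. You correctly argue that a single invocation of $U_{\text{invert}}$ uses the $\ket*{b}$-preparation oracle \emph{once}, and that amplitude amplification incurs $\Ord{\kappa}$ repetitions. That yields $\Ord{\kappa}$ calls to the $\ket*{b}$ oracle, not the $\Ord{(\kappa s/\epsilon)\,\poly{\log(s\kappa/\epsilon)}}$ you then write down. The latter is what the theorem \emph{states}, but your own accounting does not produce it; either the stated bound is deliberately loose (it certainly contains $\Ord{\kappa}$), or it is a typo in the paper. You should flag this: your derivation gives the sharper $\Ord{\kappa}$ count for the $\ket*{b}$ oracle, and the $s/\epsilon$ factors belong only to the $A$-oracle count. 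Apart from this bookkeeping slip, the argument and the identified obstacles (error propagation through the replaced simulation primitive, and cost of the \emph{controlled} evolution) are the right ones.
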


The main idea of the approach is to use a technique for implementing linear combinations of unitary operations (LCUs) based on a decomposition of the operators using Fourier and Chebyshev series.
We now present the method of LCUs by considering an example similar to the original work~\cite{childs2015quantum}. Consider the operator $V = U_0 + U_1$, where $U_i$ are unitary operators that are easy to implement. We start with the state $\ket*{\Phi}$ and want to apply $V$ to this state.
First we add another ancillary qubit in the $|+\rangle$ state, and then apply a conditional $U_i$, i.e. $\ket*{0}\bra*{0} \otimes U_0 + \ket*{1}\bra*{1} \otimes U_1$ to the input state $|+\rangle |\Phi\rangle$. This leads to the state
\begin{equation}
\frac{1}{\sqrt{2}} (\ket*{0} U_0 \ket*{\Phi} + \ket*{1} U_1 \ket*{\Phi}).
\end{equation}

Measuring the first qubit in the $x$-basis would then result in the application of $V$ if we obtain the measurement outcome $\ket*{+}$.
If we have the ability to prepare $\ket*{\Phi}$ multiple times or reflect about $\ket*{\Phi}$, then we can use amplitude amplification to boost the success probability of this measurement.

For a more general statement we can generalise this to an arbitrary unitary $V$. Let $U:= \sum_i \ketbra*{i} \otimes U_i$ implement the conditioned $U_i$ operation, and let $V = \sum_i \alpha_i U_i$, where $\alpha_i >0$ without loss of generality since we can absorb the sign into the phase of $U_i$. Then we can apply $V$ using the following trick. Let
\begin{equation}
W : \ket*{0^m} \rightarrow \frac{1}{\sqrt{\alpha}} \sum_i \sqrt{\alpha_i} \ket*{i},
\end{equation}
 where $\alpha := \sum_i \alpha_i$.
We can apply the operator $V$ with high probability using the operator $M=W^{\dagger}UW$, since $W$ first creates a superposition of the control register, then applies the conditional unitary operators $U_i$, and finally performs the uncomputation of the superposition in the first register which results in a superposition of a state proportional to the all-zeros and a state that is orthogonal to it:
\begin{equation}
M \ket*{0^m} \ket*{\Phi} = \frac{1}{\alpha}\ket*{0^m} V \ket*{\Phi} + \ket*{\Psi^{\perp}},
\end{equation}
 where $(\ket*{0^m} \bra*{0^m} \otimes I)\ket*{\Psi^{\perp}} =0$, i.e.\ we have collected all orthogonal terms in the latter state, $\ket*{\Psi^{\perp}}$. This can further be generalised to non-unitary operations, but we omit this here for the sake of simplicity.
The probabilistic implementation here is successful if we measure the state $\ket*{0^m}$, i.e. after successful postselection, the successful outcome occurs with probability $(||V \ket*{\Phi}||/\alpha)^2$. Here again we can repeat the process $\Ord{(\alpha/||V \ket*{\Phi}||)^2}$ times or use amplitude amplification to boost the probability quadratically, i.e.\ using only $\Ord{\alpha/||V \ket*{\Phi}||}$.

Let us assume for now that we can find a LCU that approximated the operator we desire. In order to apply this scheme to the quantum linear system problem (QLSP), we need to find a LCU that closely approximates $A^{-1}$. Now, it remains to consider how the unitary operators can be applied to the state. The commonly used method for this is Hamiltonian simulation, which uses unitary operators of the form $\exp(-iAt)$. This means we need to find a LCU that closely approximates $A^{-1}$ based on the decomposition $\sum_j \alpha_j \exp(-iAt_j)$ for some coefficients $\alpha_j$ and evolution times $t_j$. 
Since both sides of the equation are diagonal in the same basis, obtaining this representation is equivalent to obtaining $x^{-1}$ as a linear combination of $\sum_j \alpha_j \exp(-ix t_j)$, where $x \in \mathbb{R}\setminus \{0\}$.
This is because we can diagonalise $A$ and then work with the eigenvalues directly (recall that we showed in the Hamiltonian simulation section~\ref{subsec:ham_sim} that $\exp(-iU\hat{H}U^{\dagger}t) = U \exp(-i\hat{H}t) u^{\dagger}$). 

Similar to the HHL algorithm we restrict the range of eigenvalues to be inverted to lie in the range $D_{\kappa} := [-1, -1/\kappa] \cup [1/\kappa, 1]$, i.e.\ the well-conditioned subspace, so we only require the representation to be correct in this interval (we have also assumed that $\lambda_{\text{max}} = 1$, which can be enforced by a simple re-scaling).
The strategy is to smooth the function $1/x$ around the singular points ($0$) and then perform a Fourier transform over various $\exp(-ixt)$, since this will give us a sum over feasible functions.
The idea is to use an integral representation of the function $1/x$ which is given by \begin{equation}
\frac{1}{x} = \frac{i}{\sqrt{2\pi}} \int_0^{\infty} dy \int_{- \infty}^{+\infty} dz\ z e^{-z^2/2}e^{-ixyz},
\end{equation}
then take a finite-sum approximation to this integral.

The result is an $\epsilon$-close Fourier expansion of the function $1/x$ on the interval $D_{\kappa}$ in which the eigenvalues are well behaved, i.e. do not come too close to $0$. We can summarize this in the following lemma.
\begin{lemma}[\cite{childs2015quantum}, Lemma 11] \label{lem:childs}
Let the function $h(x)$ be defined as
\begin{equation}
h(x) := \frac{i}{\sqrt{2 \pi}} \sum_{j=0}^{J-1} \delta_y \sum_{k=-K}^K \delta_z z_k e^{-z_k^2/2} e^{-ixy_jz_k},
\end{equation}
where $y_j := j \delta_y$, $z_k := k \delta_z$, for some $J=\Omega(\frac{\kappa}{\epsilon} \log(\kappa/\epsilon))$, $K = \Omega(\kappa \log(\kappa/\epsilon))$, $\delta_y = \Omega(\epsilon/ \sqrt{\log (\kappa/\epsilon)})$ and $\delta_z = \Omega((\kappa \sqrt{\log (\kappa/\epsilon)})^{-1})$. Then $h(x)$ is $\epsilon$-close to $1/x$ on the domain $D_{\kappa}$.
\end{lemma}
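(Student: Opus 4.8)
The plan is to start from the exact integral representation
\begin{equation}
\frac{1}{x} = \frac{i}{\sqrt{2\pi}} \int_0^{\infty} dy \int_{-\infty}^{+\infty} dz\ z\, e^{-z^2/2} e^{-ixyz},
\end{equation}
verify that it holds for all real $x \neq 0$, and then control the error incurred by two successive approximations: truncating the $y$-integral to a finite interval $[0, J\delta_y]$ and the $z$-integral to $[-K\delta_z, K\delta_z]$, and then replacing both integrals by Riemann sums on grids of spacing $\delta_y$ and $\delta_z$. The triangle inequality splits $|h(x) - 1/x|$ into a truncation error and a discretisation error for each variable, and the goal is to show that with the stated choices of $J$, $K$, $\delta_y$, $\delta_z$ each piece is $\Ord{\epsilon}$ uniformly for $x \in D_\kappa$.

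First I would check the integral identity itself: the inner $z$-integral $\int_{-\infty}^{\infty} z\, e^{-z^2/2} e^{-ixyz}\,dz$ is (up to constants) the derivative of a Gaussian integral, evaluating to $-i\sqrt{2\pi}\, xy\, e^{-x^2y^2/2}$, so the double integral collapses to $\int_0^\infty xy\, e^{-x^2y^2/2}\,dy = 1/x$ for $x \neq 0$ (and the sign/branch works out for negative $x$ too). Next, for the $z$-truncation and $z$-discretisation: the integrand $z e^{-z^2/2}e^{-ixyz}$ is Schwartz in $z$, so the tail beyond $|z| = K\delta_z$ is exponentially small in $(K\delta_z)^2$, and since $\delta_z = \Omega((\kappa\sqrt{\log(\kappa/\epsilon)})^{-1})$ while $K = \Omega(\kappa\log(\kappa/\epsilon))$ we get $K\delta_z = \Omega(\sqrt{\log(\kappa/\epsilon)})$, making the tail $\Ord{\epsilon}$; the midpoint/Riemann-sum error on a smooth integrand over a bounded interval is controlled by $\delta_z$ times the total variation of the derivative, which is bounded, giving $\Ord{\delta_z} = \Ord{\epsilon}$ after absorbing the $\sqrt{\log}$ factor. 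The analogous bounds for the $y$-variable are where the $1/\kappa$ lower bound on $|x|$ enters: the $y$-tail of $\int xy\,e^{-x^2y^2/2}\,dy$ beyond $y = J\delta_y$ is $e^{-x^2(J\delta_y)^2/2}$, which is $\Ord{\epsilon}$ precisely because $|x| \geq 1/\kappa$ and $J\delta_y = \Omega(\sqrt{\kappa^2\log(\kappa/\epsilon)}\cdot \text{const}) $... more carefully, $J = \Omega(\tfrac{\kappa}{\epsilon}\log(\kappa/\epsilon))$ and $\delta_y = \Omega(\epsilon/\sqrt{\log(\kappa/\epsilon)})$ give $J\delta_y = \Omega(\kappa\sqrt{\log(\kappa/\epsilon)})$, so $x^2(J\delta_y)^2 = \Omega(\log(\kappa/\epsilon))$ and the tail is $\Ord{\epsilon}$; the $y$-discretisation error again scales like $\delta_y$ times a bounded derivative norm, which is $\Ord{\epsilon}$.

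The step I expect to be the main obstacle is bookkeeping the discretisation error of the \emph{double} Riemann sum uniformly in $x \in D_\kappa$: one must bound the derivatives of the integrand $z e^{-z^2/2}e^{-ixyz}$ with respect to $y$ and $z$ on the truncated box, and these derivatives pick up factors of $x$, $y$, $z$ that are as large as $1$, $J\delta_y$, $K\delta_z$ respectively, so the naive product of grid spacings with derivative bounds is not obviously $\Ord{\epsilon}$ — it requires carefully tracking how the polynomial growth competes against the Gaussian decay and how the constants in the $\Omega(\cdot)$ hypotheses on $J,K,\delta_y,\delta_z$ are chosen to win. This is exactly the computation carried out in Lemma 11 of \cite{childs2015quantum}, and I would follow their argument, being careful to state which derivative bounds are used and to note that the final constant hidden in $h(x)$ being $\epsilon$-close to $1/x$ depends only on the implicit constants in the hypotheses. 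The remaining ingredients — smoothness of the integrand, exponential decay of Gaussian tails, standard midpoint-rule error estimates — are routine once the box dimensions are fixed.
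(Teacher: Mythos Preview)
Your proposal is correct and follows exactly the approach the paper sketches: start from the integral representation of $1/x$, truncate the domains, discretise to Riemann sums, and bound the resulting truncation and discretisation errors using Gaussian tail decay together with the condition $|x|\geq 1/\kappa$. The paper itself does not give a full proof of this lemma but only states the integral identity and the idea of a finite-sum approximation, deferring the detailed error bookkeeping to \cite{childs2015quantum}; your outline fills in precisely those details and correctly identifies the double-discretisation bound as the place where the constants in the $\Omega(\cdot)$ hypotheses must be chosen with care.
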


We still need to clarify the question of whether a operator which is close to $A^{-1}$ yields a state which is also close to our target state. For the case of the QLSP we want that this operator applied to our target, i.e.\ $\tilde{A}^{-1}\ket*{b}/||\tilde{A}^{-1}\ket*{b}||$ is close to $A^{-1}\ket*{b}/||A^{-1}\ket*{b}||$. The following can be  shown using the triangle inequality:\\
For any Hermitian operator $A$, with $||A^{-1}|| \leq 1$ (i.e. the smallest eigenvalue of $A$ in absolute value is at least $1$) and a approximation to it which satisfies $|| A -\tilde{A}|| \leq \epsilon <1/2$, the states \begin{equation}
    \ket*{x}:= A \ket*{b}/||A \ket*{b}||
\end{equation}
 and 
\begin{equation}
    \ket*{\tilde{x}}:= \tilde{A} \ket*{b}/||\tilde{A} \ket*{b}||
\end{equation}
satisfy
\begin{equation}
  || \ket*{x} - \ket*{\tilde{x}}|| < 4 \epsilon.
\end{equation}

Applying this to the scheme described above the authors of~\cite{childs2015quantum} find that any function can be implemented $\epsilon$-close using the LCU approach.
Using this approximation of the function of $x$ for the matrix $A$ discussed above and the results pertaining to the approximation error, we then have a quantum linear system algorithm with the following properties.

\begin{theorem}[\cite{childs2015quantum} Fourier approach] \label{thm:childs}
The QLSP can be solved with $O\qty( \kappa \sqrt{\log(\kappa/\epsilon)})$ uses of a Hamiltonian simulation algorithm that approximates $\exp(-iAt)$ for $t=O \qty(\kappa \log(\kappa/\epsilon))$ with precision $O\qty( \epsilon/(\kappa \sqrt{\log (\kappa / \epsilon)}))$.
\end{theorem}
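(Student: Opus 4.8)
The plan is to assemble the theorem from three ingredients already in hand: the Fourier expansion $h(x)$ of $1/x$ from Lemma~\ref{lem:childs}, the linear-combination-of-unitaries (LCU) construction of section~\ref{subsec:exp. improved}, and the stability estimate stated there that upgrades an $\epsilon$-approximation of $A^{-1}$ to an $\Ord{\epsilon}$-approximation of the normalised solution state. After the rescaling $\lambda_{\max}(A)=1$, I would read Lemma~\ref{lem:childs} as the operator identity
\begin{equation}
h(A) = \sum_{j,k}\alpha_{jk}\,e^{-iAt_{jk}},\qquad t_{jk}:=y_jz_k,\quad \alpha_{jk}:=\frac{i\,\delta_y\delta_z\,z_k e^{-z_k^2/2}}{\sqrt{2\pi}},
\end{equation}
so that every term is a Hamiltonian evolution of $A$, and $h(A)$ is $\epsilon$-close to $A^{-1}$ on the well-conditioned subspace $D_\kappa$. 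Setting $\alpha:=\sum_{j,k}\abs{\alpha_{jk}}$, $U:=\sum_{j,k}\ketbra*{j,k}\otimes e^{-iAt_{jk}}$ and $W:\ket*{0^m}\mapsto\alpha^{-1/2}\sum_{j,k}\sqrt{\abs{\alpha_{jk}}}\,\ket*{j,k}$ with $m=\Ord{\log(JK)}=\Ord{\log(\kappa/\epsilon)}$, the operator $M=W^\dagger UW$ satisfies $M\ket*{0^m}\ket*{b}=\alpha^{-1}\ket*{0^m}h(A)\ket*{b}+\ket*{\Psi^\perp}$ exactly as in section~\ref{subsec:exp. improved}; postselecting on $\ket*{0^m}$ and boosting with amplitude amplification produces a state close to $h(A)\ket*{b}/\norm{h(A)\ket*{b}}$.

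Next I would do the parameter bookkeeping. The largest evolution time is $\abs{t_{jk}}\le(J\delta_y)(K\delta_z)=\Ord{\kappa\sqrt{\log(\kappa/\epsilon)}}\cdot\Ord{\sqrt{\log(\kappa/\epsilon)}}=\Ord{\kappa\log(\kappa/\epsilon)}$, which is the simulation time $t$ claimed in the statement. The coefficient sum is dominated by the Gaussian weight truncated in $y$: $\alpha\le\frac{\delta_y}{\sqrt{2\pi}}\sum_j\sum_k\delta_z\abs{z_k}e^{-z_k^2/2}=\Ord{J\delta_y}=\Ord{\kappa\sqrt{\log(\kappa/\epsilon)}}$. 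Since $\lambda_{\max}(A)=1$ we have $\norm{A^{-1}\ket*{b}}\ge 1$, and $h(A)$ agrees with $A^{-1}$ there up to $\epsilon<1/2$, so $\norm{h(A)\ket*{b}}=\Omega(1)$. Hence the number of amplitude-amplification rounds, and therefore the number of invocations of the conditional Hamiltonian-evolution $U$ (each a single Hamiltonian simulation of a sparse operator for time $\Ord{\kappa\log(\kappa/\epsilon)}$), is $\Ord{\alpha/\norm{h(A)\ket*{b}}}=\Ord{\kappa\sqrt{\log(\kappa/\epsilon)}}$ by Lemma~\ref{lemma:amplitude_amplification}.

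The error analysis is where the stated simulation precision $\Ord{\epsilon/(\kappa\sqrt{\log(\kappa/\epsilon)})}$ appears. There are three error sources: (i) the Fourier truncation, bounded by Lemma~\ref{lem:childs} by $\epsilon$; (ii) imperfect simulation of each factor $e^{-iAt_{jk}}$ to precision $\epsilon_{\mathrm{sim}}$, which perturbs the unnormalised output by at most $\sum_{j,k}\abs{\alpha_{jk}}\epsilon_{\mathrm{sim}}=\alpha\,\epsilon_{\mathrm{sim}}$; and (iii) the amplitude-amplification step, which contributes $\Ord{\epsilon}$ if the amplified block is prepared to that accuracy. Dividing (ii) by $\norm{h(A)\ket*{b}}=\Omega(1)$ and requiring the total to be $\Ord{\epsilon}$ forces $\epsilon_{\mathrm{sim}}=\Ord{\epsilon/\alpha}=\Ord{\epsilon/(\kappa\sqrt{\log(\kappa/\epsilon)})}$. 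Feeding the resulting $\Ord{\epsilon}$-approximation $\tilde A^{-1}$ of $A^{-1}$ into the stability estimate of section~\ref{subsec:exp. improved} (applied to the Hermitian $A$ after rescaling so that $\norm{A^{-1}}\le1$) yields $\norm{\ket*{x}-\ket*{\tilde x}}=\Ord{\epsilon}$. Gate-efficiency follows since $W$ costs $\Ord{\mathrm{poly}(m)}=\Ord{\mathrm{polylog}(\kappa/\epsilon)}$ gates, $U$ is a controlled Hamiltonian simulation, and each amplification round adds only $\Ord{1}$ reflections.

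The main obstacle will be the quantitative control of the $\ell_1$ coefficient norm $\alpha$ together with the clean lower bound $\norm{h(A)\ket*{b}}=\Omega(1)$: without the former one cannot count amplitude-amplification rounds or fix $\epsilon_{\mathrm{sim}}$, and without the latter the normalisation could amplify the error. Both rely on the extra truncation (discarding terms with $\abs{t_{jk}}$ too large or $\abs{\alpha_{jk}}$ too small) that underlies Lemma~\ref{lem:childs}, so that the three error contributions really are additive and each stays $\Ord{\epsilon}$; these details are carried out in~\cite{childs2015quantum}.
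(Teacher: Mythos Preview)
Your proposal is correct and follows essentially the same route the paper sketches: it assembles the theorem from Lemma~\ref{lem:childs} (the Fourier approximation of $1/x$), the LCU construction with the operators $W$, $U$, $M$, and the stability estimate for normalised states, exactly as the surrounding text of section~\ref{subsec:exp. improved} indicates. The paper itself does not carry out the parameter bookkeeping you do---it simply states the theorem and defers the details to~\cite{childs2015quantum}---so your computation of $J\delta_y$, $K\delta_z$, the $\ell_1$ coefficient norm $\alpha=\Ord{\kappa\sqrt{\log(\kappa/\epsilon)}}$, and the resulting simulation precision $\epsilon_{\mathrm{sim}}=\Ord{\epsilon/\alpha}$ is a faithful fleshing-out of the same argument rather than a different approach.
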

Using the best known results for the Hamiltonian simulation algorithm (see section~\ref{subsec:ham_sim}), one obtains a algorithm that is polylogarithmic in the precision, $\epsilon$.

A similar approach is possible using a decomposition of the function via Chebyshev polynomials~\cite{childs2015quantum}. The benefit here is that the Chebyshev polynomials can be implemented via quantum walks, which was demonstrated by Childs~\cite{childs2010relationship}. This leads to a slightly more efficient implementation of the operators, but requires explicit access to the entries of $A$.
Considering a careful modification of the techniques, it is further possible to include the improvement by Ambainis and construct an algorithm solving QLSP which requires $O \qty (\kappa \cdot s\cdot  \text{poly} \log (s \kappa/\epsilon))$ calls to the data oracle~\cite{childs2015quantum}.\\

We will now demonstrate a last improvement which is an alternative approach to solving linear systems of equations using the so called Szegedy quantum walk~\cite{szegedy2004quantum}. This will lead to an improvement in the dimensionality of the problem even for dense matrices, as is the almost fastest algorithm in terms of the dimensionality. Kerenidis and Prakash improved this scaling subsequently by a small factor and applied it to a gradient descent algorithm with affine gradients~\cite{kerenidis1704quantum}.

\renewcommand{\R}{\ensuremath\mathbb{R}}

\subsection{QSVE based QLSA} \label{sec:QSVE}
The HHL algorithm has runtime scaling as $\Ord{\log(N) s^2 \kappa^2 / \epsilon}$, and the subsequent algorithms discussed so far scale polynomially in sparsity $s$.
In contrast, we recall that the classical conjugate-gradient algorithm scales as $\Ord{N s \kappa \log(1/\epsilon)}$~\cite{Shewchuck1994}. For dense matrices, where $s \sim N$, we have that the quantum algorithm achieves an only quadratic speed-up in system size $N$ over the classical version.
A slight improvement to this is still possible, even in the case of dense matrices, by using qRAM in combination with quantum walks \cite{wossnig2017quantum}.

This algorithm is built around a quantum singular value estimation (QSVE) procedure applied to $A$, which was introduced in \cite{kerenidis2016quantum}.
The QSVE algorithm requires the ability to efficiently prepare the
quantum states corresponding to the rows and columns of matrix $A$. The matrix entries are stored in a data structure that must enable the quantum algorithm with access to this data structure to perform the operations we define below. Note that this structure can be any form of qRAM (see section~\ref{subsec:qram}).

Let $A \in \R^{m \times n}$ be the matrix with entries $A_{ij}$. The required data structure must be able to perform the following tasks.

\begin{itemize}
	\item A quantum computer with access to the data structure can perform the following mappings in $\Ord{\text{polylog}(mn)}$ time.
	\begin{align}
	U_\mathcal{M}: \ket*{i}\ket*{0}\rightarrow\ket*{i,\mathbf{A_i}} &= \frac{1}{\|\mathbf{A_i}\|}\sum\limits_{j=1}^{n}A_{ij}\ket*{i,j},\notag \\
	U_\mathcal{N}: \ket*{0}\ket*{j}\rightarrow\ket*{\mathbf{A}_F,j} &= \frac{1}{\|A\|_F}\sum\limits_{i=1}^{m}\|\mathbf{A_i}\|\ket*{i,j},
	\end{align}
	where $\mathbf{A_i}\in \mathbb R^{n}$ corresponds to the $i^\text{th}$ row of the matrix $A$ and $\mathbf{A}_F\in \mathbb R^{m}$ is a vector whose entries are the $\ell_{2}$ norms
	of the rows, i.e.\ $(\mathbf{A}_F)_i=\norm{A_i}_2$.
\end{itemize}
This data structure can be realized with the qRAM architecture presented in section \ref{subsec:qram} by storing each row/column in a separate tree.

The QSVE algorithm is a quantum walk based algorithm that makes use of a well known connection between the singular values $\sigma_i$ of the target matrix $A$ and the principal angles $\theta_i$ between certain subspaces associated with $A$. We will review here some of the necessary ingredients.

\subsubsection{Singular values and subspaces}
\label{subsec:svd_and_subspaces}
First, the singular value decomposition is formally introduced.
\begin{theorem}[Singular Value Decomposition (SVD)] \label{thm:svd}
For $A \in \mathbb R ^{m \times n}$ there exists a decomposition of the form
$A = U \Sigma V^{\dagger}$, where $U \in \mathbb{R}^{m \times m}$, $V \in \mathbb{R}^{n \times n}$ are unitary operators, and $\Sigma\in \mathbb{R}^{m \times n}$ is a diagonal matrix
with $r$ positive entries $\sigma_1 ,\sigma_2, \ldots, \sigma_r$, and $r$ is the rank of $A$. Alternatively, we write $A=\sum_{i}^{r}\sigma_i\mathbf{u}_i\mathbf{v}_i^{\dagger}$, where $\{\mathbf{u}_i\}$, $\{\mathbf{v}_i\}$, and $\{\sigma_i\}$ are the sets of left and right mutually orthonormal singular vectors, and singular values of $A$ respectively.
\end{theorem}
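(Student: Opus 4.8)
The plan is to derive the SVD from the spectral theorem applied to the symmetric positive semidefinite matrix $A^\dagger A$. First I would observe that $A^\dagger A \in \mathbb{R}^{n\times n}$ is symmetric and positive semidefinite, so by the spectral theorem it admits an orthonormal eigenbasis $\{\mathbf{v}_1,\ldots,\mathbf{v}_n\}$ of $\mathbb{R}^n$ with real eigenvalues, which I order as $\lambda_1 \geq \cdots \geq \lambda_n \geq 0$. Since $\mathbf{v}\in\ker A \iff \mathbf{v}\in\ker A^\dagger A$, we have $\operatorname{rank}(A^\dagger A) = \operatorname{rank}(A) = r$, so exactly the first $r$ eigenvalues are strictly positive; I would then set $\sigma_i := \sqrt{\lambda_i}$ for $i \in [r]$, which are the claimed singular values.

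Next I would construct the left singular vectors by defining $\mathbf{u}_i := \sigma_i^{-1} A\mathbf{v}_i$ for $i \in [r]$. A one-line computation shows these are orthonormal: $\mathbf{u}_i^\dagger \mathbf{u}_j = (\sigma_i\sigma_j)^{-1}\mathbf{v}_i^\dagger A^\dagger A\mathbf{v}_j = (\sigma_i\sigma_j)^{-1}\lambda_j\,\mathbf{v}_i^\dagger\mathbf{v}_j = \delta_{ij}$. I would then extend $\{\mathbf{u}_1,\ldots,\mathbf{u}_r\}$ to an orthonormal basis $\{\mathbf{u}_1,\ldots,\mathbf{u}_m\}$ of $\mathbb{R}^m$ by Gram--Schmidt.

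To finish, I would verify that the two operators $A$ and $\sum_{i=1}^r \sigma_i \mathbf{u}_i\mathbf{v}_i^\dagger$ agree on the basis $\{\mathbf{v}_j\}_{j=1}^n$, hence everywhere. For $j \leq r$, the sum sends $\mathbf{v}_j \mapsto \sigma_j\mathbf{u}_j = A\mathbf{v}_j$. For $j > r$, $\|A\mathbf{v}_j\|^2 = \mathbf{v}_j^\dagger A^\dagger A\mathbf{v}_j = \lambda_j = 0$, so $A\mathbf{v}_j = \mathbf{0}$, which also matches the sum. Collecting $U = [\mathbf{u}_1 \cdots \mathbf{u}_m]$, $V = [\mathbf{v}_1 \cdots \mathbf{v}_n]$ (both orthogonal, hence unitary, by construction) and $\Sigma \in \mathbb{R}^{m\times n}$ the diagonal matrix with $\Sigma_{ii} = \sigma_i$ for $i\leq r$ and zeros otherwise, block multiplication gives $U\Sigma V^\dagger = \sum_{i=1}^r \sigma_i\mathbf{u}_i\mathbf{v}_i^\dagger = A$, which is both presentations in the statement at once.

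There is no genuine obstacle here; the proof is entirely standard. The only points requiring care are invoking the correct background lemmas -- the spectral theorem for real symmetric matrices, the identity $\operatorname{rank}(A^\dagger A)=\operatorname{rank}(A)$, and the extendability of a partial orthonormal system to a full orthonormal basis -- and checking that the "matrix" form $U\Sigma V^\dagger$ and the "outer-product sum" form $\sum_i\sigma_i\mathbf{u}_i\mathbf{v}_i^\dagger$ coincide, which is immediate. The ordering $\sigma_1\geq\cdots\geq\sigma_r$ is inherited directly from the chosen ordering of the eigenvalues of $A^\dagger A$.
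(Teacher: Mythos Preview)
Your proof is correct and is the standard derivation of the SVD from the spectral theorem for the symmetric positive semidefinite matrix $A^\dagger A$. The paper itself does not give a proof of this theorem; it merely states the SVD as a background fact and moves on, so there is nothing to compare against beyond noting that your argument supplies exactly the kind of standard justification the paper omits.
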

The Moore-Penrose pseudo-inverse of a matrix $A$ with singular value decomposition as defined above is then given by $A^+ = V \Sigma ^{+} U^{\dagger} = \sum_{i}^{r}(1/\sigma_i)\mathbf{v}_i \mathbf{u}_i^{\dagger}$. The matrix $AA^+$ is the projection onto the column space $Col(A)$, while $A^+A$ is the projection onto the row space $Row(A)$.

Let the matrix $A \in \R^{m \times n}$ (with Frobenius norm of $1$) now have the factorization and SVD $A= \mathcal{M}^{\dagger} \mathcal{N} = \sum_i \sigma_i \mathbf u_i \mathbf v_i^{\dagger}$. Then, it can be shown that $\cos \theta_i = \sigma_i$, where $\theta_i$ is the principal angle between the subspaces. Note that the columns of $\mathcal{M}$ and $\mathcal{N}$ are orthogonal and we define the projector into the column space $Col(\mathcal{M})$ as
\begin{equation}
\Pi_1 = \mathcal{M} \mathcal{M}^{\dagger} = \sum_{i \in [m]} \ket*{m^i} \bra*{m^i},
\end{equation}
 where $m^i$ denote the columns of $\mathcal M$. Similarly, the projector onto the column space of $\mathcal{N}$ is given by  $\Pi_2 = \mathcal{N} \mathcal{N}^{\dagger}$.

The eigenvectors of the operator $\Pi_1 \Pi_2 \Pi_1$ are then simply given by $\mathcal{M} \mathbf u_i$ with eigenvalues $\sigma_i^2$, and the eigenvectors of $\Pi_2 \Pi_1 \Pi_2$ are given by $\Pi_2 \mathcal{M} \mathbf u_i$ with eigenvalues $\sigma_i^2$, which can be checked using the identity $\mathcal{M}^{\dagger}\mathcal{M} = I$, from the orthonormality of the columns.
The principal vector pairs are given by $(\mathcal{M} \mathbf u_i, \mathcal{N} \mathbf v_i)$, since $\Pi_2 \mathcal{M} \mathbf u_i = \sigma_i \mathcal{N} \mathbf v_i$ and the principal angles are $\cos \theta_i = \langle \mathcal{M} \mathbf u_i \ket*{\mathcal{N} \mathbf v_i} =\sigma_i$. We will make use of this in the following.

\subsubsection{The QSVE algorithm}
\label{subsec:qsve}

The QSVE algorithm makes use of the factorization $\frac{A}{\|A\|_F}= \mathcal{M}^{\dagger}\mathcal{N}$, where $\mathcal{M}\in\mathbb{R}^{mn\times m}$ and $\mathcal{N}\in\mathbb{R}^{mn\times n}$ are unitary operators.
The key idea is that the unitary operator $W$, a combination of reflections, defined by $W=(2\mathcal{M}\mathcal{M}^{\dagger}-I_{mn})(2\mathcal{N}\mathcal{N}^{\dagger}-I_{mn})$, where $I_{mn}$ is the $mn \times mn$ identity matrix.
The operator $W$ can be implemented efficiently using a qRAM (see section \ref{subsec:qram}).
The operator $W$ has two-dimensional eigenspaces spanned by $\{\mathcal{M}\mathbf{u}_i,\mathcal{N}\mathbf{v}_i\}$ on which it acts as a rotation by angle $\theta_i$, such that $\cos\frac{\theta_i}{2}=\frac{\sigma_i}{\|A\|_F}$. Note that the subspace spanned by $\{\mathcal{M}\mathbf{u}_i,\mathcal{N}\mathbf{v}_i\}$ is also spanned by $\{\mathbf{w}_i^+,\mathbf{w}_i^-\}$, the eigenvectors of $W$ with eigenvalues $\exp(i\theta_i)$ and $\exp(-i\theta_i)$ respectively. We get this result by similarly decomposing the space in terms of the eigenvectors of the rotation matrix, since they also span the two-dimensional space.
In particular we may write the following decomposition, $\ket*{\mathcal{N}\mathbf{v}_i}=\omega_i^+\ket*{\mathbf{w}_i^+}+\omega_i^-\ket*{\mathbf{w}_i^-}$, where $|\omega_i^-|^2+|\omega_i^+|^2=1$.
Note we are now using the notation $\ket{\vb{v}}$ to represent the quantum amplitude encoding of a classical vector $\vb{v}$.
To use the language of section~\ref{subsec:qram}, $\ket{\vb{v}} = \mathcal{R}(\vb{v})$.

Algorithm 1 describes the QSVE algorithm, the analysis of which is contained in the following lemma. Note that the idea is similar to the process in amplitude amplification, where the rotation towards the desired state is performed iteratively.

\renewcommand{\vec}[1]{\mathbf{#1}}

\begin{algorithm}[!htbp]
	\caption{Quantum singular value estimation. \cite{kerenidis2016quantum}}
	\label{alg:qsve}
	\begin{enumerate}
		\item Create the arbitrary input state $\ket*{\alpha} = \sum_i \alpha_{\vec v_i} \ket*{\vec v_i}$.
		\item Append a register $\ket*{0^{\lceil \log{m} \rceil}}$ and create the state $\ket*{\mathcal{N} \alpha} = \sum_i \alpha_{\vec v_i} \ket*{\mathcal{N} \vec v_i}=\sum_i \alpha_{\vec v_i}(\omega_i^+\ket*{\mathbf{w}_i^+}+\omega_i^-\ket*{\mathbf{w}_i^-})$.
		\item Perform phase estimation with precision $2 \delta >0$ on input $\ket*{\mathcal{N}\alpha}$ for $W=(2\mathcal{M}\mathcal{M}^{\dagger}-I_{mn})(2\mathcal{N}\mathcal{N}^{\dagger}-I_{mn})$ and obtain $\sum_i \alpha_{\vec v_i}(\omega_i^+\ket*{\mathbf{w}_i^+,\tilde{\theta}_i}+\omega_i^-\ket*{\mathbf{w}_i^-,-\tilde{\theta}_i})$, where $\tilde{\theta_i}$ is
		the estimated phase $\theta_i$ in binary bit-strings.
		\item Compute $\tilde{\sigma}_i = \cos{(\pm\tilde{\theta_i}/2)}||A||_F$.
		\item Uncompute the output of the phase estimation and apply the inverse transformation of step (2) to obtain
		\begin{equation}
		\sum\limits_i \alpha_{\vec v_i} \ket*{\vec v_i} \ket*{\tilde{\sigma_i}}
		\end{equation}
	\end{enumerate}
\end{algorithm}

\begin{lemma}[Preparation of the unitary operators \cite{kerenidis2016quantum}]
	\label{lem:unitaries}
	Let $A\in \mathbb R^{m\times n}$ be a matrix with singular value decomposition $A = \sum_i \sigma_i \vec u_i \vec v_i^{\dagger}$ stored in any data structure that has the abilities described above. Then there exist matrices $\mathcal{M} \in \R^{mn \times m}$, and $\mathcal{N} \in \R^{mn \times n}$, such that
	\begin{enumerate}
		\item $\mathcal{M}, \mathcal{N}$ are unitary operators, that is $\mathcal{M}^{\dagger}\mathcal{M} = I_m$ and $\mathcal{N}^{\dagger}\mathcal{N} = I_n$ such that $A$ can be factorised as $A/\norm{A}_F = \mathcal{M}^{\dagger}\mathcal{N}$.

		Multiplication by $\mathcal{M},\mathcal{N}$, i.e.\ the mappings $\ket*{\alpha} \rightarrow \ket*{\mathcal{M}\alpha}$ and $\ket*{\beta} \rightarrow \ket*{\mathcal{N}\beta}$ can be performed in time $\Ord{\operatorname{polylog}(mn)}$.
		\item The reflections $2\mathcal{M}\mathcal{M}^{\dagger}-I_{mn}$, $2\mathcal{N}\mathcal{N}^{\dagger}-I_{mn}$, and hence the unitary $W=(2\mathcal{M}\mathcal{M}^{\dagger}-I_{mn})(2\mathcal{N}\mathcal{N}^{\dagger}-I_{mn})$ can be implemented in time $\Ord{\operatorname{polylog}(mn)}$.
		\item The unitary $W$ acts as rotation by $\theta_{i}$ on the two dimensional invariant subspace $\{\mathcal{M}\mathbf{u}_i,\mathcal{N}\mathbf{v}_i\}$ plane, such that $\sigma_i = \cos\frac{\theta_i}{2}\|A\|_F$, where $\sigma_{i}$ is the $i$-th singular value for $A$.
	\end{enumerate}
\end{lemma}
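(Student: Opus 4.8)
The plan is to construct $\mathcal{M}$ and $\mathcal{N}$ explicitly from the two data-structure maps $U_\mathcal{M},U_\mathcal{N}$ introduced above, and then to verify the three claims in order. For claim (1) I would let $\mathcal{M}$ be the $mn\times m$ matrix whose $i$-th column is $U_\mathcal{M}\ket*{i}\ket*{0}=\tfrac{1}{\norm{\mathbf{A_i}}}\sum_j A_{ij}\ket*{i,j}$ and $\mathcal{N}$ the $mn\times n$ matrix whose $j$-th column is $U_\mathcal{N}\ket*{0}\ket*{j}=\tfrac{1}{\norm{A}_F}\sum_i\norm{\mathbf{A_i}}\ket*{i,j}$. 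The columns of $\mathcal{M}$ are supported on disjoint blocks of basis states (labelled by the first register) and are manifestly normalised, so $\mathcal{M}^\dagger\mathcal{M}=I_m$; the same block argument on the second register gives $\mathcal{N}^\dagger\mathcal{N}=I_n$. Computing the $(i,j)$ entry of $\mathcal{M}^\dagger\mathcal{N}$, only the basis state $\ket*{i,j}$ survives both sums, leaving $\tfrac{A_{ij}}{\norm{\mathbf{A_i}}}\cdot\tfrac{\norm{\mathbf{A_i}}}{\norm{A}_F}=\tfrac{A_{ij}}{\norm{A}_F}$, which is the claimed factorisation $A/\norm{A}_F=\mathcal{M}^\dagger\mathcal{N}$. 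Multiplication $\ket*{\alpha}\mapsto\ket*{\mathcal{M}\alpha}$ is then literally $\ket*{\alpha}\mapsto U_\mathcal{M}(\ket*{\alpha}\otimes\ket*{0})$, which runs in $\Ord{\operatorname{polylog}(mn)}$ by hypothesis, and similarly for $\mathcal{N}$.

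For claim (2) I would use that, $\mathcal{M}$ being an isometry, $\mathcal{M}\mathcal{M}^\dagger$ is the orthogonal projector onto $\mathrm{Col}(\mathcal{M})$, and that extending $U_\mathcal{M}$ to a full unitary gives $\mathcal{M}\mathcal{M}^\dagger=U_\mathcal{M}\,(I_m\otimes\ketbra*{0})\,U_\mathcal{M}^\dagger$. Hence $R_\mathcal{M}:=2\mathcal{M}\mathcal{M}^\dagger-I_{mn}=U_\mathcal{M}\bigl(I_m\otimes(2\ketbra*{0}-I)\bigr)U_\mathcal{M}^\dagger$, whose middle factor is a reflection acting only on the second register and costs $\Ord{\log n}$ elementary gates. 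The analogous identity $\mathcal{N}\mathcal{N}^\dagger=U_\mathcal{N}(\ketbra*{0}\otimes I_n)U_\mathcal{N}^\dagger$ gives $R_\mathcal{N}:=2\mathcal{N}\mathcal{N}^\dagger-I_{mn}$, and $W=R_\mathcal{M}R_\mathcal{N}$; all of this is $\Ord{\operatorname{polylog}(mn)}$.

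Claim (3) is the geometric core. First I would check that each plane $S_i:=\operatorname{span}\{\mathcal{M}\mathbf{u}_i,\mathcal{N}\mathbf{v}_i\}$ is $W$-invariant: using $\mathcal{M}^\dagger\mathcal{M}=I$, $\mathcal{N}^\dagger\mathcal{N}=I$, $\mathcal{M}^\dagger\mathcal{N}=A/\norm{A}_F$ together with the SVD relations $A\mathbf{v}_i=\sigma_i\mathbf{u}_i$ and $A^\dagger\mathbf{u}_i=\sigma_i\mathbf{v}_i$, a one-line computation yields $R_\mathcal{M}\mathcal{M}\mathbf{u}_i=\mathcal{M}\mathbf{u}_i$, $R_\mathcal{M}\mathcal{N}\mathbf{v}_i=\tfrac{2\sigma_i}{\norm{A}_F}\mathcal{M}\mathbf{u}_i-\mathcal{N}\mathbf{v}_i$, and the mirrored identities for $R_\mathcal{N}$, all of which remain inside $S_i$. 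Next I would note $\langle\mathcal{M}\mathbf{u}_i|\mathcal{N}\mathbf{v}_i\rangle=\langle\mathbf{u}_i|A\mathbf{v}_i\rangle/\norm{A}_F=\sigma_i/\norm{A}_F$ with both vectors of unit norm, so defining $\cos\tfrac{\theta_i}{2}:=\sigma_i/\norm{A}_F$ the lines $\mathbb{R}\mathcal{M}\mathbf{u}_i$ and $\mathbb{R}\mathcal{N}\mathbf{v}_i$ subtend angle $\theta_i/2$ inside $S_i$. Restricted to $S_i$, $R_\mathcal{M}$ fixes $\mathbb{R}\mathcal{M}\mathbf{u}_i$ and — provided $\sigma_i\neq\norm{A}_F$, so that $\dim S_i=2$ — negates the orthogonal direction, hence is the planar reflection in that line; likewise $R_\mathcal{N}$ is the planar reflection in $\mathbb{R}\mathcal{N}\mathbf{v}_i$. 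Since a composition of two planar reflections whose axes meet at angle $\theta_i/2$ is a rotation by $\theta_i$, $W$ acts on $S_i$ exactly as stated, and reading off the resulting $2\times2$ rotation matrix gives the eigenvectors $\mathbf{w}_i^{\pm}$ of $W$ with eigenvalues $e^{\pm i\theta_i}$ that the QSVE analysis needs downstream.

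The step I expect to be the main obstacle is the bookkeeping in claim (3): proving $W$-invariance of $S_i$ cleanly and, in particular, arguing that $R_\mathcal{M}$ really restricts to a genuine reflection on $S_i$ rather than to $\pm I$, which relies on the non-degeneracy $\sigma_i\neq\norm{A}_F$ (valid for $i\in[r]$ unless $A$ has rank one with $\norm{A}_F=\sigma_1$, in which case $\theta_i=0$ and $W$ is trivial there), plus a one-line remark that $W$ is the identity on the leftover $\ker A$-type subspace $V^\perp$. Claims (1) and (2) are essentially routine once one fixes a consistent convention for splitting the $mn$-dimensional register into its \emph{row} and \emph{column} parts so that the block-support arguments for $\mathcal{M}^\dagger\mathcal{M}$ and $\mathcal{N}^\dagger\mathcal{N}$ go through.
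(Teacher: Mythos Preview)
Your proposal is correct and follows essentially the same construction as the paper. The paper defines $\mathcal{M},\mathcal{N}$ via the same column maps, verifies $(\mathcal{M}^\dagger\mathcal{N})_{ij}=A_{ij}/\norm{A}_F$ by the identical overlap computation, and for claim~(3) computes $W\ket*{\mathcal{N}\mathbf{v}_i}=\tfrac{2\sigma_i}{\norm{A}_F}\ket*{\mathcal{M}\mathbf{u}_i}-\ket*{\mathcal{N}\mathbf{v}_i}$ directly, then extracts $\cos\theta_i=\bra*{\mathcal{N}\mathbf{v}_i}W\ket*{\mathcal{N}\mathbf{v}_i}=2\sigma_i^2/\norm{A}_F^2-1$ and reads off $\cos(\theta_i/2)=\sigma_i/\norm{A}_F$ via the half-angle formula; your ``two reflections at angle $\theta_i/2$ compose to a rotation by $\theta_i$'' is the geometric restatement of the same identity. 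Your treatment of claim~(2), writing $R_\mathcal{M}=U_\mathcal{M}(I_m\otimes(2\ketbra*{0}-I))U_\mathcal{M}^\dagger$, is in fact more explicit than what the paper sketches.
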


We outline the ideas involved in the analysis of the QSVE algorithm and refer to \cite{kerenidis2016quantum} for further details. The map $\mathcal{M}$ appends to an arbitrary input state vector $\ket*{\alpha}$ a register that encodes the row vectors $\vec{A_i}$ of $A$, such that
\begin{align}
\mathcal{M}: \ket*{\alpha}&=\sum\limits_{i=1}^{m}\alpha_i\ket*{i}\rightarrow\sum\limits_{i=1}^{m}\alpha_i\ket*{i,\vec{A_i}}=\ket*{\mathcal{M}\alpha}. 
\end{align}
The map $\mathcal{N}$ similarly appends to an arbitrary input state vector $\ket*{\alpha}$ a register that encodes the vector $\vec{A_F}$ whose entries are the $\ell_{2}$ norms $\|\vec{A_i}\|$ of the rows of $A$,
\begin{align}
\mathcal{N}: \ket*{\alpha}=\sum\limits_{j=1}^{n}\alpha_j\ket*{j}\rightarrow\sum\limits_{j=1}^{n}\alpha_j\ket*{\vec{A_F},j}=\ket*{\mathcal{N}\alpha}. 
\end{align}

The factorisation of $A$ follows from the amplitude encoding of $\vec{A_i}$ and $\vec{A}_F$. We have $\ket*{i,\vec{A_i}}=\frac{1}{\|\vec{A_i}\|}\sum\limits_{j=1}^{n}A_{ij}\ket*{i,j}$ and $\ket*{\vec{A}_F,j}=\frac{1}{\|A\|_F}\sum\limits_{i=1}^{m}\|\vec{A_i}\|\ket*{i,j}$, implying that $(\mathcal{M}^{\dagger}\mathcal{N})_{ij}=\braket*{i,\vec{A_i}}{\vec{A}_F,j}=\frac{A_{ij}}{\|A\|_F}$. Similarly, it follows that $\mathcal{M}$ and $\mathcal{N}$ have orthonormal columns and thus $\mathcal{M}^{\dagger}\mathcal{M}=I_m$ and $\mathcal{N}^{\dagger}\mathcal{N}=I_n$.

To show the relation between the eigenvalues of $W$ and the singular values of $A$, we consider the following:
\begin{align}
W\ket*{\mathcal{N}\vec{v}_i}=&(2\mathcal{M}\mathcal{M}^{\dagger}-I_{mn})(2\mathcal{N}\mathcal{N}^{\dagger}-I_{mn})\ket*{\mathcal{N}\vec{v}_i}\nonumber\\
=&(2\mathcal{M}\mathcal{M}^{\dagger}-I_{mn})\ket*{\mathcal{N}\vec{v}_i}\nonumber\\=&2\mathcal{M}\frac{A}{\|A\|_F}\ket*{\vec{v}_i}-\ket*{\mathcal{N}\vec{v}_i}\nonumber\\
=&\frac{2\sigma_i}{\|A\|_F}\ket*{\mathcal{M}\vec{u}_i}-\ket*{\mathcal{N}\vec{v}_i},
\end{align}
where we used the singular value decomposition $A=\sum_i\sigma_i\ket*{\vec{u}_i}\bra*{\vec{v}_i}$, and the fact that the right singular vectors $\{\vec{v}_i\}$ are mutually orthonormal. Note that $W$ rotates $\ket*{\mathcal{N}\vec{v}_i}$ in the plane of $\{\mathcal{M}\mathbf{u}_i,\mathcal{N}\mathbf{v}_i\}$ by $\theta_i$, such that
\begin{align}
\cos\theta_i&=\bra*{\mathcal{N}\vec{v}_i}W\ket*{\mathcal{N}\vec{v}_i} \notag \\
&=\frac{2\sigma_i}{\|A\|_F^2} \bra*{\vec v_i} A^{\dagger} \ket*{\vec u_i}-1 \notag \\
&=\frac{2\sigma_i^2}{\|A\|_F^2}-1 ,
\end{align} where we have used the fact that $(2\mathcal{M}\mathcal{M}^\dagger-I_{mn})$ represents a reflection in $\ket*{\mathcal{M}\vec{u}_i}$
and that $A^{\dagger} = \mathcal{N}^{\dagger}\mathcal{M}= \sum_i \sigma_i \ket*{\vec v_i} \bra*{\vec u_i}$.
Therefore the angle between $\ket*{\mathcal{N}\vec{v}_i}$ and $\ket*{\mathcal{M}\vec{u}_i}$ is given by $\frac{\theta_i}{2}$, i.e.\ half of the total rotation angle. Comparing the above expression with the half-angle formula for cosine functions, we obtain the relation $\cos\left(\frac{\theta_i}{2}\right)=\frac{\sigma_i}{\|A\|_F}$.

The two dimensional sub-space spanned by $\{\mathcal{M}\mathbf{u}_i,\mathcal{N}\mathbf{v}_i\}$ is therefore invariant under the action of $W$ which acts on it as a rotation by angle $\theta_i$ in the plane.


The runtime of QSVE is dominated by the phase estimation procedure which returns an $\delta$-close estimate of $\theta_i$, s.t.\ $|\tilde{\theta}_i -\theta_i| \leq 2 \delta$, which translates into the estimated singular value via $\tilde{\sigma}_i = \cos{(\tilde{\theta}_i/2)} \norm{A}_F$.
The error in $\sigma_i$ can then be bounded from above by $|\tilde{\sigma}_i - \sigma_i | \leq \delta \norm{A}_F$.
The unitary $W$ can be implemented in time $\Ord{\text{polylog}(mn)}$ by Lemma~\ref{lem:unitaries}, the running time for estimating of the singular values with additive error $\delta \norm{A}_F$
in $\Ord{\text{polylog}(mn)/\delta}$.

We can now extend this results to a quantum linear system algorithm. This work has been presented in \cite{wossnig2017quantum}.

\subsubsection{Extension to QLSA}
\label{subsec:qlsa_from_qsve}

Without loss of generality we can assume that the matrix $A$ is Hermitian. Otherwise we can perform the reduction that was previously discussed in section \ref{subsec:non-hermitian}. The QSVE algorithm immediately yields a linear system solver for positive definite matrices as the estimated singular values and eigenvalues are related via $\tilde{\sigma_i}=|\tilde{\lambda_i}|$. Let us consider this statement in further detail.
\begin{proposition}
	\label{prop:svd}
	If $A=A^{\dagger}$ is a Hermitian matrix, then the eigenvalues are equal to the singular values of $A$ up to a sign ambiguity and the corresponding
	eigenvectors equal the singular vectors of $A$, i.e.\ $\sigma_i = |\lambda_i|$ and $\mathbf{u}=\mathbf{v}=\mathbf{s}$. Therefore the singular value decomposition and the spectral decomposition are related by $A=\sum_{i}\lambda_i\mathbf{s}_i\mathbf{s}_i^{\dagger} = \sum_i \pm \sigma_i\mathbf{u}_i\mathbf{u}_i^{\dagger}$.
\end{proposition}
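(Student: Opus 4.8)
The plan is to derive the proposition directly from the spectral theorem for Hermitian matrices together with the standard characterisation of singular values as the non-negative square roots of the eigenvalues of $A^\dagger A$. First I would invoke the spectral theorem: since $A=A^\dagger$, there is an orthonormal basis $\{\mathbf{s}_i\}$ of eigenvectors with \emph{real} eigenvalues $\lambda_i$, so that $A=\sum_i \lambda_i \mathbf{s}_i\mathbf{s}_i^\dagger$. This is exactly the spectral decomposition appearing in the statement, so the only real work is to match it to the SVD of Theorem~\ref{thm:svd}.

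Next I would identify the singular data. The singular values of $A$ are, by definition, the non-negative square roots of the eigenvalues of $A^\dagger A$; because $A$ is Hermitian, $A^\dagger A = A^2$, whose eigenvectors are again the $\mathbf{s}_i$ with eigenvalues $\lambda_i^2\ge 0$. Hence $\sigma_i=\sqrt{\lambda_i^2}=|\lambda_i|$, and the $\mathbf{s}_i$ may be taken as right singular vectors. For the left singular vectors, apply $A$: $A\mathbf{s}_i=\lambda_i\mathbf{s}_i=\operatorname{sgn}(\lambda_i)\,\sigma_i\,\mathbf{s}_i$, so for $\sigma_i\neq 0$ the normalised left singular vector is $\operatorname{sgn}(\lambda_i)\mathbf{s}_i$, i.e.\ $\mathbf{u}_i=\pm\mathbf{v}_i=\pm\mathbf{s}_i$. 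Setting $\mathbf{u}_i:=\operatorname{sgn}(\lambda_i)\mathbf{s}_i$ and $\mathbf{v}_i:=\mathbf{s}_i$ gives $A=\sum_i\sigma_i\mathbf{u}_i\mathbf{v}_i^\dagger=\sum_i|\lambda_i|\operatorname{sgn}(\lambda_i)\mathbf{s}_i\mathbf{s}_i^\dagger=\sum_i\lambda_i\mathbf{s}_i\mathbf{s}_i^\dagger$, which is a valid SVD; since $\mathbf{u}_i\mathbf{u}_i^\dagger=\mathbf{s}_i\mathbf{s}_i^\dagger$ this also yields the displayed identity $A=\sum_i\lambda_i\mathbf{s}_i\mathbf{s}_i^\dagger=\sum_i\pm\sigma_i\mathbf{u}_i\mathbf{u}_i^\dagger$, where the sign in front of $\sigma_i$ is $\operatorname{sgn}(\lambda_i)$.

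The only genuine subtlety — and the place I would be most careful in the write-up — is the interpretation of "$\mathbf{u}=\mathbf{v}=\mathbf{s}$" in the presence of sign ambiguities and possible eigenvalue degeneracy: singular values are unique, but singular vectors are fixed only up to phases and rotations within degenerate eigenspaces, so the claim should be read as "there exists a choice of SVD for which $\mathbf{u}_i=\pm\mathbf{v}_i$ coincide with the spectral eigenvectors up to sign". I would also dispatch the $\lambda_i=0$ case explicitly, where $\sigma_i=0$ and the left/right singular vector pair can simply be taken equal. No estimates are needed; once the spectral theorem and the observation $A^\dagger A=A^2$ are in place, the argument is pure bookkeeping.
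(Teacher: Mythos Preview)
The paper does not actually prove this proposition; it is stated as a standard fact and immediately used. Your argument is correct and is exactly the standard one: invoke the spectral theorem for Hermitian $A$, use $A^\dagger A=A^2$ to read off $\sigma_i=|\lambda_i|$ with the $\mathbf{s}_i$ as right singular vectors, and absorb the sign into the left singular vector. Your remarks on the sign/phase ambiguity and degeneracy are appropriate caveats for how the equality $\mathbf{u}=\mathbf{v}=\mathbf{s}$ should be read.
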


In order to solve general linear systems we need to recover the sign of each eigenvalue $\tilde{\lambda_i}$.
The solution is a simple algorithm that recovers the signs using the QSVE procedure as a black box incurring only a constant overhead with respect to the QSVE.

\begin{algorithm}[!hbb]
	\caption{Quantum linear system solver.}
	\label{alg:qlss}
	\begin{enumerate}
		\item Create the state $\ket*{\mathbf{b}} = \sum_i \beta_{i}  \ket*{\vec v_i}$ with $\vec v_i$ being the singular vectors of $A$.
		\item Perform two QSVEs as in Algorithm~\ref{alg:qsve} for matrices $A, A+ \mu I$ with $\delta \le 1/2\kappa $
		and $\mu=1/\kappa$ to obtain
		\begin{equation}
\sum_i \beta_{i}  \ket*{\vec v_i}_A \ket*{|\tilde{\lambda}_i|}_B \ket*{|\tilde{\lambda}_i + \mu|}_C.
\end{equation}

		\item Add an auxiliary register and set it to $1$ if the value in register $B$ is greater than that in register $C$ and apply a conditional
		phase gate:
		\begin{equation}
\sum_i (-1)^{f_i} \beta_{i}  \ket*{\vec v_i}_A \ket*{|\tilde{\lambda}_i|}_B \ket*{|\tilde{\lambda}_i + \mu|}_C \ket*{f_i}_D.
\end{equation}

		\item Add an ancilla register and apply a rotation conditioned on register $B$ with $\gamma=\Ord{1/\kappa}$.
		Then uncompute the registers $B,C, D$ to obtain
		\begin{align}
		\sum_i(-1)^{f_i} \beta_{i} \ket*{\vec v_i}\left( \frac{\gamma}{|\tilde{\lambda_i}|}\ket*{0} + \sqrt {1 - \left(\frac{\gamma}{|\tilde{\lambda_i}|}\right)^{2}}  \ket*{1} \right)
		\end{align}
		Postselect on the ancilla register being in state $\ket*{0}$.
	\end{enumerate}
\end{algorithm}

We assume that $A$ has been rescaled so that its well-conditioned eigenvalues lie within the interval $[-1, -1/\kappa] \cup [ 1/\kappa, 1]$, an assumption under which the other QLSP algorithms also operate.
\begin{theorem}
	\label{thm:qlsa}
	Let $A \in \mathbb R^{n \times n}$ be a Hermitian matrix with spectral decomposition
	$A = \sum_i \lambda_i \mathbf{\vec u_i \vec u_i}^{\dagger}$. Further let $\kappa$ be the condition number $A$, and $\norm{A}_F$ the Frobenius norm and $\epsilon > 0$ be a precision parameter. Then Algorithm~\ref{alg:qlss} has runtime $\Ord{\kappa^2 \cdot \text{polylog}(n) \cdot \norm{A}_F/\epsilon}$ that outputs the state
	$\ket*{\widetilde{A^{-1} \mathbf b}}$
	such that $\norm{ \ket*{ \widetilde{ A^{-1} \mathbf b } } - \ket*{ A^{-1} \mathbf b} }_2 \leq \epsilon $.
\end{theorem}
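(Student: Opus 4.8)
The plan is to verify correctness and runtime of Algorithm~\ref{alg:qlss} by composing the guarantees of the QSVE subroutine (Lemma~\ref{lem:unitaries} together with the analysis of Algorithm~\ref{alg:qsve}) with the sign-recovery trick, the controlled inversion, and amplitude amplification. First I would set up the reduction: by Proposition~\ref{prop:svd} a Hermitian $A$ has $\sigma_i = |\lambda_i|$ and shares singular and eigenvectors, and after the standard rescaling its well-conditioned eigenvalues lie in $D_\kappa = [-1,-1/\kappa]\cup[1/\kappa,1]$, so that $\lambda_{\max}\le 1$ and, since every such eigenvalue has magnitude at least $1/\kappa$, one also gets $\norm{A}_F \ge \sqrt{n}/\kappa$; this last inequality is what later lets me replace $\norm{A+\mu I}_F$ by $\Ord{\norm{A}_F}$. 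I would also record that storing the rows of $A$ in separate qRAM trees (section~\ref{subsec:qram}) realises the data structure required by Lemma~\ref{lem:unitaries}, so that $W$, and hence each QSVE run on $A$ and on $A+\mu I$ with $\mu=1/\kappa$, runs in time $\Ord{\operatorname{polylog}(n)/\delta}$ and returns estimates $|\tilde\lambda_i|$, $|\tilde\lambda_i+\mu|$ with additive error $\Ord{\delta\norm{A}_F}$.

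Next I would prove the sign-recovery step correct. For $\lambda_i\in[1/\kappa,1]$ one has $|\lambda_i+\mu|-|\lambda_i|=\mu=1/\kappa$, while for $\lambda_i\in[-1,-1/\kappa]$ one has $|\lambda_i|-|\lambda_i+\mu|=\mu=1/\kappa$ (here $\mu\le|\lambda_i|$ is used). Hence the ordering of the registers $B$ and $C$ determines $\operatorname{sgn}(\lambda_i)$ exactly, provided the QSVE error on each register is below $\mu/2=1/(2\kappa)$, which holds once $\delta\norm{A}_F\le 1/(2\kappa)$ — implied by the stronger choice of $\delta$ below. The conditional phase of step~3 then applies exactly the factor $\operatorname{sgn}(\lambda_i)$, and the controlled rotation of step~4 with $\gamma=\Theta(1/\kappa)$ chosen so that $\gamma\le\min_i|\tilde\lambda_i|$ produces $\sum_i(-1)^{f_i}\beta_i\ket*{\vec v_i}\big(\frac{\gamma}{|\tilde\lambda_i|}\ket*{0}+\sqrt{1-\gamma^2/|\tilde\lambda_i|^2}\,\ket*{1}\big)$. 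Postselecting on $\ket*{0}$ succeeds with probability $\tilde p=\sum_i|\beta_i|^2\gamma^2/|\tilde\lambda_i|^2\ge \gamma^2/\max_i|\tilde\lambda_i|^2=\Omega(1/\kappa^2)$, since $|\tilde\lambda_i|\le 1+\Ord{\delta\norm{A}_F}=\Ord{1}$; amplitude amplification (Lemma~\ref{lemma:amplitude_amplification}) boosts this to constant success probability with $\Ord{1/\sqrt{\tilde p}}=\Ord{\kappa}$ repetitions.

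For correctness I would bound $\norm{\,\ket*{\widetilde{A^{-1}\vb b}}-\ket*{A^{-1}\vb b}\,}_2$. The renormalised postselected state is proportional to $\sum_i\beta_i\frac{\operatorname{sgn}(\lambda_i)}{|\tilde\lambda_i|}\ket*{\vec v_i}$, compared with $A^{-1}\vb b=\sum_i\frac{\beta_i}{\lambda_i}\ket*{\vec v_i}$. Writing the inversion coefficients as $\operatorname{sgn}(\lambda_i)/|\tilde\lambda_i|$ versus $\operatorname{sgn}(\lambda_i)/|\lambda_i|$, their \emph{relative} discrepancy is $\lvert\,|\lambda_i|-|\tilde\lambda_i|\,\rvert/|\tilde\lambda_i|\le \delta\norm{A}_F/|\tilde\lambda_i|=\Ord{\kappa\,\delta\norm{A}_F}$, uniformly in $i$; since a uniform relative perturbation of the coefficients of a vector perturbs its normalised direction by the same order, and since $\norm{A^{-1}\vb b}_2\ge \norm{\vb b}_2/\lambda_{\max}\ge 1$ absorbs the denominator, I obtain $\norm{\,\ket*{\widetilde{A^{-1}\vb b}}-\ket*{A^{-1}\vb b}\,}_2=\Ord{\kappa\,\delta\norm{A}_F}$. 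Choosing $\delta=\Theta\big(\epsilon/(\kappa\norm{A}_F)\big)$ makes this at most $\epsilon$ and a fortiori satisfies the sign-recovery requirement. Assembling the runtime: each QSVE costs $\Ord{\operatorname{polylog}(n)/\delta}=\Ord{\kappa\norm{A}_F\operatorname{polylog}(n)/\epsilon}$, the QSVE on $A+\mu I$ only replaces $\norm{A}_F$ by $\norm{A+\mu I}_F=\Ord{\norm{A}_F}$ (via $\norm{A}_F\ge\sqrt n/\kappa$), and the $\Ord{\kappa}$ amplitude-amplification rounds yield the claimed $\Ord{\kappa^2\operatorname{polylog}(n)\norm{A}_F/\epsilon}$.

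The step I expect to be the main obstacle is the error analysis of the third paragraph: converting the \emph{additive} QSVE guarantee $|\tilde\lambda_i-\lambda_i|=\Ord{\delta\norm{A}_F}$ into an $\epsilon$-bound on the normalised output while keeping the $\kappa$-dependence at $\kappa^2$ rather than $\kappa^3$. This forces one to argue in terms of relative error on each inverted eigenvalue — an absolute-error argument would pay an extra factor $\lambda_{\min}^{-1}=\kappa$ — and to exploit the rescaling $\lambda_{\max}\le 1$ to lower-bound $\norm{A^{-1}\vb b}_2$. A fully careful version would also need to confirm, exactly as in the HHL error analysis of section~\ref{subsec:hhl_err_analysis}, that restricting to the well-conditioned subspace and discarding the ill-conditioned component of $\ket*{b}$ contributes at most $\Ord{\epsilon}$. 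The remaining ingredients — sign recovery, the $\Omega(1/\kappa^2)$ success probability, and the $\operatorname{polylog}$ cost of $W$ — are comparatively routine given Lemma~\ref{lem:unitaries}.
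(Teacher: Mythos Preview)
Your proposal is correct and follows the same approach as the paper: the paper itself does not give an explicit proof of Theorem~\ref{thm:qlsa} beyond the preceding discussion of QSVE, the sign-recovery trick of Algorithm~\ref{alg:qlss}, and the runtime of phase estimation, so your write-up is essentially a careful fleshing-out of exactly that sketch. Your treatment is in fact more thorough than the paper's in two places --- the explicit bound $\norm{A+\mu I}_F=\Ord{\norm{A}_F}$ via $\norm{A}_F\ge\sqrt{n}/\kappa$, and the observation that one must track \emph{relative} rather than absolute error on the inverted eigenvalues to avoid an extra factor of $\kappa$ --- both of which the paper leaves implicit.
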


The error dependence on the Frobenius norm implies that we need $\norm{A}_F$ to be bounded by some constant or scale at least not with $\epsilon,\kappa, n$.
For the case that $\norm{A}_F$ is bounded by a constant, say $1$, the algorithm returns the output state with an $\epsilon$-error in polylogarithmic time even if the matrix is non-sparse.
We can relate this to the QLSA by assuming (as in the HHL) that the spectral norm $\norm{A}_*$ is bounded by a constant.
In general, the Frobenius norm scales with the dimensionality of the matrix, in which case $\norm{A}_F=\Ord{\sqrt{n}}$. Plugging everything back into the runtime of the algorithm we see that it runs in $\Ord{\kappa^2\sqrt{n} \cdot\text{polylog}(n)/\epsilon}$ and returns the output with a constant $\epsilon$-error for dense matrices with bounded spectral norm. 
Furthermore, note that this algorithm has a particular advantage if the rank of the matrix is low, as in this case we get then a square-root scaling with the rank instead of the Frobenius norm.

Let us compare this result to the improved HHL. Berry et al.~\cite{berry2009black} showed, that given black-box access to the matrix elements, Hamiltonian simulation with error $\delta_h$ can be performed in time $\Ord{n^{2/3}\cdot\text{polylog}(n)/\delta_h^{1/3}}$ in the case of dense matrices.
This leads to a linear system algorithm based on the HHL which scales as $\Ord{\kappa^2 n^{2/3} \cdot\text{polylog}(n)/\epsilon}$, where we have assumed the dominant error comes from phase estimation, and hence the error introduced by the Hamiltonian simulation is neglected. It was also shown numerically that the method of ~\cite{berry2009black} attains a typical scaling of $\Ord{\sqrt{n}\cdot\text{polylog}(n)/ \delta_h^{1/2}}$ when applied to randomly selected matrices, leading to a $\Ord{\kappa^2\sqrt{n} \cdot\text{polylog}(n)/\epsilon}$ linear system algorithm. The work of ~\cite{berry2009black} is based on the black-box model where one queries the quantum oracle with an index pair $\ket*{i,j}$ to obtain the matrix entry $\ket*{i,j,A_{ij}}$. The QSVE-based linear system solver achieves a $\Ord{\sqrt{n}}$-scaling in this stronger memory model, and it is an interesting open question if one can achieve a similar scaling in the black-box model. There are recent result for Hamiltonian simulation~\cite{low2016hamiltonian,low2017optimal,low2017hamiltonian} which might give better bounds if considered further.

\newpage
\addcontentsline{toc}{section}{References}
\printbibliography

\end{document}